\theoremstyle{plain} 
\newtheorem{theorem}{Theorem}[section]
\newtheorem{lemma}[theorem]{Lemma}
\newtheorem{corollary}[theorem]{Corollary}
\newtheorem{proposition}[theorem]{Proposition}
\newtheorem*{conjecture}{Conjecture}
\newtheorem*{definition}{Definition}
\newcommand{\FPT}{\textsf{FPT}}
\newcommand{\phcag}{proper Helly circular-arc graph}
\tikzstyle{filled vertex}  = [{circle,draw=blue,fill=black!50,inner sep=1pt}]  
\tikzstyle{empty vertex}  = [{circle, draw, fill = white, inner sep=1.5pt}]
\tikzstyle{uvertex} = [{violet, draw, fill=violet!50,inner sep=2pt}]  
\newcommand{\codecomment}[1]{\textbackslash\!\!\textbackslash {\;\em #1}}
\newcommand{\lp}[1]{\ensuremath{{\mathtt{lp}(#1)}}}
\newcommand{\rp}[1]{\ensuremath{{\mathtt{rp}(#1)}}}
\newcommand{\from}[1]{\ensuremath{{\mathtt{from}(#1)}}}
\newcommand{\stpath}[2]{$#1$--$#2$ path}
\newcommand{\stwalk}[2]{$#1$--$#2$ walk}
\title{Modification Problems \\toward Proper (Helly) Circular-arc Graphs}
\author{Yixin Cao\thanks{School of Computer Science and Engineering, Central South University, Changsha, China.}
  \thanks{Department of Computing, Hong Kong Polytechnic University, Hong Kong, China.  \texttt{yixin.cao@polyu.edu.hk}.}
    \and
    Jianxin Wang\footnotemark[1]
    \and
    Hanchun Yuan\footnotemark[1]
}
\date{}
\begin{document}
\maketitle

\begin{abstract}
  We present a $9^k\cdot n^{O(1)}$-time algorithm for the proper circular-arc vertex deletion problem, resolving an open problem of van ’t Hof  and Villanger [Algorithmica 2013] and Crespelle et al.~[arXiv:2001.06867].
  Our structural study also implies parameterized algorithms for modification problems toward proper Helly circular-arc graphs.
\end{abstract}

\section{Introduction} \label{sec:introduction}

A graph is a \emph{circular-arc graph} if its vertices can be assigned to arcs on a circle such that there is an edge between two vertices if and only if their corresponding arcs intersect.  If none of the arcs properly contains another, then the graph is a \emph{proper circular-arc graph}.
See Figure~\ref{fig:proper-cag} for two examples of proper circular-arc graphs.
Proper circular-arc graphs ``form an important subclass of the class of all claw-free graphs,'' and their study has been an important step towards finding ``a structural characterization of all claw-free graphs'' \cite{chudnovsky-08-claw-free-iii}.
The structures and recognition of proper circular-arc graphs have been well studied and well understood \cite{tucker-74-structures-cag, deng-96-proper-interval-and-cag}.

\begin{figure*}[h]
  \centering\small
  \begin{subfigure}[b]{0.45\linewidth}
    \centering
    \begin{tikzpicture}[scale=.7]
      \foreach[count =\j] \i in {1, 2, 3} 
      \draw ({120*\i-90}:1) -- ({120*\i+30}:1) -- ({120*\i-30}:2) -- ({120*\i-90}:1);
      \foreach[count =\j] \i in {1, 2, 3} {
        \node[empty vertex] (u\i) at ({210 - 120*\i}:2) {$u_{\i}$};
        \node[empty vertex] (v\i) at ({150 - 120*\i}:1) {$v_{\i}$};
      }
      \node at (-90:1.5) {};      
    \end{tikzpicture}
    \;
    \begin{tikzpicture}[scale=.11]
      \draw[dashed,thin] (10,0) arc (0:360:10);
      \foreach[count=\i] \from/\to/\radius in {100/-40/12, 340/200/13, 220/80/11} {
        \draw[thick] (\from:\radius) arc (\from:\to:\radius);
        \draw[olive, dashed] (\from:\radius) -- (\from:15);
        \node[olive, dashed] at (\from:16) {$v_{\i}$};
      }
      \foreach[count=\i] \from/\to/\radius in {120/60/13, 0/-60/11, 240/180/12} {
        \draw[thick] (\from:\radius) arc (\from:\to:\radius);
        \draw[olive, dashed] (\from:\radius) -- (\from:15);
        \node[olive, dashed] at (\from:16) {$u_{\i}$};
      }
    \end{tikzpicture}
    \caption{$S_3$}
  \end{subfigure}
  \begin{subfigure}[b]{0.45\linewidth}
    \centering
    \begin{tikzpicture}[scale=1.3]
      \node[empty vertex] (v5) at (0, 0) {$v_5$};
      \foreach \i in {1, ..., 4} {
        \node[empty vertex] (v\i) at ({135 - 90*\i}:1) {$v_{\i}$};
        \draw (v\i) -- (v5);
      }
      \draw (v1) -- (v2) -- (v3) -- (v4) -- (v1);
      \node at (-90:1.2) {};
    \end{tikzpicture}
    \;
    \begin{tikzpicture}[scale=.11]
      \draw[dashed,thin] (10,0) arc (0:360:10);
      \draw[thick] (180:13) arc (180.:360:13);
      \foreach[count=\i] \r in {11, 12, 11, 12} {
        \draw[thick] ({100-90*\i}:\r) arc ({100-90*\i}:{-10-90*\i}:\r);
        \draw[olive, dashed] ({100-90*\i}:\r) -- ({100-90*\i}:15);
        \node[olive, dashed] at ({100-90*\i}:16) {$v_{\i}$};
      }
      \draw[olive, dashed] (0:13) -- (0:15);
      \node[olive, dashed] at (0:16) {$v_{5}$};
    \end{tikzpicture}
    \caption{$W_4$}
  \end{subfigure}
  \caption{Two proper circular-arc graphs and their arc representations.}
  \label{fig:proper-cag}
\end{figure*}
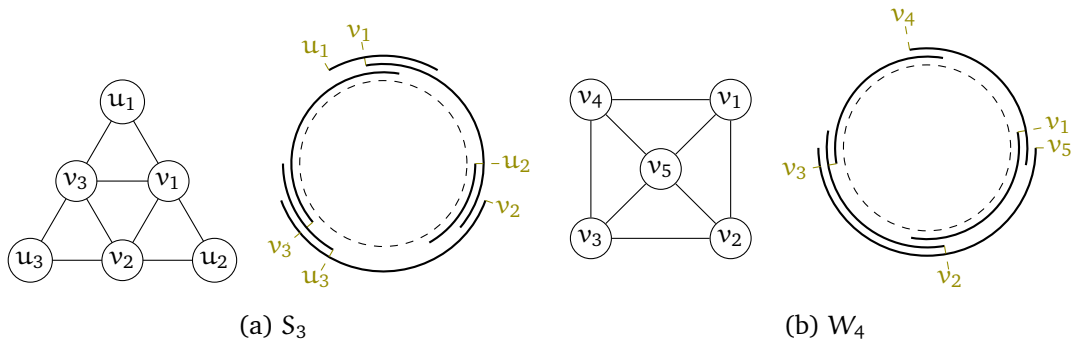

Another and earlier motivation of studying (proper) circular-arc graphs is from their relation with (proper) interval graphs, i.e., intersection graphs of intervals on the real line.  
It is easy to see that each (proper) interval graph is a (proper) circular-arc graph, and the connection of these classes has been used in both structural and algorithmic studies of these classes.
Indeed, the first  linear-time recognition algorithm for proper circular-arc graphs is based on a general observation of both proper circular-arc graphs and proper interval graphs~\cite{deng-96-proper-interval-and-cag}.  
Neither graph in Figure~\ref{fig:proper-cag} is a proper interval graph, but removing any vertex from Figure~\ref{fig:proper-cag}(a), or any vertex but $v_5$ from Figure~\ref{fig:proper-cag}(b) leaves a proper interval graph.

Let $\mathcal{G}$ be a graph class.  Given a graph $G$ and an integer $k$, the $\mathcal{G}$ vertex deletion problem asks whether we can remove $k$ vertices from $G$ to make a graph in $\mathcal{G}$.
These problems have been intensively studied in the framework of parameterized computation.  Suppose that the input graph has $n$ vertices and $m$ edges.  We say that a graph problem
is {\em fixed-parameter tractable (\FPT{})} if there is an algorithm solving it in time $f(k)\cdot n^{O(1)}$, where $f$ is a computable function depending only on~$k$ \cite{downey-13}.
For example, it is well known that the proper interval vertex deletion problem is \FPT{}\cite{villanger-13-pivd, cao-17-unit-interval-editing}.  In the algorithm of van 't Hof and Villanger~\cite{villanger-13-pivd}, the kinship between proper circular arc graphs and proper interval graphs plays a crucial role.  They show that it suffices to destroy all the small forbidden induced subgraphs, and then the graph is already a proper circular-arc graph, on which the proper interval vertex deletion problem can be solved in linear time.
They asked whether the proper circular arc vertex deletion problem is \FPT{} as well, and this open problem was recently raised again by Crespelle et al.~\cite{crespelle-20-survey-edge-modification}.  
We answer this question affirmatively.

\begin{theorem}\label{thm:alg-1}
  The proper circular-arc vertex deletion problem can be solved in time $9^k\cdot n^{O(1)}$.
\end{theorem}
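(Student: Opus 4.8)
The plan is to follow the standard two-stage recipe for \FPT{} modification problems---exhaustive bounded-depth branching on small obstructions, then a polynomial-time algorithm on a structured residual instance---with one twist: proper circular-arc graphs do \emph{not} have a finite forbidden induced subgraph characterization (for instance the wheels $W_\ell = C_\ell + K_1$ with $\ell \ge 5$ form an infinite antichain of minimal obstructions, whereas $W_4$ is itself a proper circular-arc graph), so branching alone cannot finish. The backbone is a structural dichotomy that I would establish first, using the analysis of proper (Helly) circular-arc graphs developed earlier in the paper: every graph $G$ either (i) contains an induced copy of a member of a fixed finite family $\mathcal F$ of graphs on at most nine vertices, none of which occurs in a proper circular-arc graph (with the claw $K_{1,3}$ and $W_5$ among them); or (ii) $G$ is a proper circular-arc graph, or else $G$ has a rigid ``hole-based'' shape, i.e.\ it is organized around one long induced cycle $C$ with every vertex outside $C$ attached to $C$ in a ``locally proper-interval'' fashion---exactly the way the extra vertices sit on the cycle in the $W_\ell$'s and their relatives.

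Phase~1 is the branching. While the current graph contains an induced copy of some $H \in \mathcal F$, pick such a copy and, for each of its at most nine vertices $v$, recurse on $(G - v,\, k-1)$. Since $G[V(H)]$ is not a proper circular-arc graph, every solution must delete a vertex of $V(H)$, so the branching is exhaustive; the search tree has at most $9^k$ leaves, and at each leaf the remaining graph is $\mathcal F$-free. Keeping the branching factor at nine is a point that needs care in its own right: $\mathcal F$ must be chosen with all members on at most nine vertices, and any larger ``near-obstruction'' that the dichotomy would otherwise force on us should be handled not by branching on all of it but by extracting from its rigid structure a set of at most nine vertices that every solution is bound to meet.

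Phase~2 handles the leaves. At a leaf the graph $G$ is $\mathcal F$-free with residual budget $k' \le k$. If $G$ is already a proper circular-arc graph we accept; this is recognizable and certifiable in linear time~\cite{deng-96-proper-interval-and-cag}. Otherwise $G$ has the hole-based shape; fix a long hole $C$. Because $\mathcal F$-freeness already forces the attachments of the off-cycle vertices to $C$ to be mutually compatible, the sole obstruction to $G$ being a proper circular-arc graph is ``global''---a bounded configuration that is adjacent to too much of $C$---so I would \emph{unroll} the structure: guess where an optimal solution cuts the circle (only $O(|C|)$ candidate cut points, in fact only a constant number of canonical ones once the bad configuration is pinned down), which turns the task into deleting at most $k'$ vertices so that the unrolled graph becomes a proper interval graph respecting a prescribed linear skeleton, and that is solvable in polynomial time by a sweep over the layout, exactly in the spirit of the proper-interval-vertex-deletion step of van 't Hof and Villanger~\cite{villanger-13-pivd}. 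If instead some long hole must survive in every cheap solution, guessing a constant-size ``frame'' on the hole fixes the cyclic layout of $G - S$ and again reduces to the same polynomial computation. Multiplying $9^k$ leaves by polynomial work per leaf yields the claimed running time; and the companion results for proper Helly circular-arc graphs follow because that class \emph{does} admit a finite forbidden induced subgraph family, so there the Phase-1 branching alone---fed with the small obstruction lists supplied by the structural study---already gives \FPT{} algorithms for the vertex-deletion and edge-modification variants.

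I expect the main obstacle to be Phase~2: proving the dichotomy (that $\mathcal F$-freeness, $\mathcal F$ being finite and of bounded size, already pins a non-proper-circular-arc graph down to the hole-based shape) and then showing that in that shape the deletion problem collapses to a proper-interval-type problem solvable in polynomial time. Secondary hurdles are handling several vertex-disjoint long holes simultaneously, making the unrolling consistent with solutions that themselves destroy a hole, and the bookkeeping that holds the branching factor at nine---including, if one prefers iterative compression to plain branching, absorbing the compression overhead so that the product stays $9^k$.
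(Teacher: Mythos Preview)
Your proposal has a genuine gap, and it stems from a misidentification of the infinite obstruction families. The wheels $W_\ell$ with $\ell\ge 5$ are \emph{not} an infinite antichain of minimal forbidden induced subgraphs: for $\ell\ge 6$ the cycle $C_\ell$ has an independent set of size three, so $W_\ell$ contains a claw and is therefore non-minimal. Among the wheels only $W_5=\overline{C_5^*}$ survives as a minimal obstruction. By Tucker's theorem (Theorem~\ref{thm:proper-forbidden-induced-subgraphs}) the genuine infinite minimal families are $C_\ell^*$ for $\ell\ge 4$ and $\overline{C_{2\ell}}$ for $\ell\ge 3$. The first family is harmless once the graph is connected and claw-/$\overline{S_3}$-free (Lemma~\ref{lem:small-subgraphs-free}); the second consists of \emph{complements} of long even holes and is what actually makes the problem hard. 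Your ``hole-based shape'' with a long hole in $G$ and an unrolling-to-proper-interval step does not address $\overline{C_{2\ell}}$ at all, and Phase~2 as written is a sketch of hopes (``guess where an optimal solution cuts the circle'', ``reduces to proper interval vertex deletion respecting a skeleton'') rather than an algorithm.

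The paper's route is quite different and hinges on working in the complement. After branching on forbidden induced subgraphs of order at most seven (so the branching factor is $7$, not $9$), the graph is made \emph{reduced}: connected, co-connected, and free of the small complement obstructions. The key structural lemma (Lemma~\ref{lem:reduced}) then says that a reduced graph containing an even hole of length $\ge 8$ is already bipartite; hence a reduced non-bipartite graph is automatically the complement of a proper circular-arc graph (Corollary~\ref{cor:reduced-non-bipartite}). This collapses the residual problem to three easy cases handled \emph{before} branching: proper interval vertex deletion on $G$, bipartite permutation vertex deletion on $\overline{G}$ (via Theorem~\ref{thm:co-bipartitie}), and a trivial ``keep one component'' step. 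The base $9$ in $9^k$ comes not from $9$-vertex obstructions but from the $9^k\cdot n^{O(1)}$ algorithm of Bo\.{z}yk et al.\ for bipartite permutation vertex deletion, invoked as a black box. The idea you are missing is precisely this reduction to bipartite permutation graphs through the complement; without it, there is no evident way to dispatch the $\overline{C_{2\ell}}$ obstructions in polynomial time.
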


A major difference between the class of proper interval graphs and the class of proper circular-arc graphs is that the later class is not closed under disjoint union.  This can be easily observed from their representations: while we can always put intervals for two different components side by side, no such accommodation is possible for two set of arcs if one set of them covers the whole circle.  As a matter of fact, if a proper circular-arc graph is not connected, it has to be a proper interval graph.
  (The same remark applies to the relation between circular-arc graphs and interval graphs.)

  If a proper circular-arc graph contains a hole of length at least five, then its property is quite similar to a proper interval graph.  What is difficult is when a few arcs cover the whole circle in an arc representation.  For such a graph, it is more convenient to study its complement.  Indeed, when characterizing proper circular-arc graphs, Tucker~\cite{tucker-74-structures-cag} actually listed the forbidden induced subgraphs of the complement class.
  He also observed that if the complement  $\overline{G}$ of a proper circular-arc graph $G$ is not connected, then $\overline{G}$ is bipartite.
\emph{Permutation graphs} are the intersection graphs of line segments between two parallel lines, and \emph{bipartite permutation graphs} are those permutation graphs that are bipartite.
Bipartite permutation graphs are also known as {proper interval bigraphs} and unit interval bigraphs \cite{hell-04-interval-bigraphs-and-cag}.
It is well known that a co-bipartite graph $H$ is a proper circular-arc graph if and only if $\overline{H}$ is bipartite permutation graph. 

Let $(G, k)$ be an instance to the proper circular-arc vertex deletion problem, and let $V_-$ be a solution.  If $G - V_-$ is not connected, then it is a proper interval graph; if ${G - V_-}$ is not co-connected, then it is the complement of a bipartite permutation graph.  We can call the algorithm of Cao~\cite{cao-17-unit-interval-editing} and the algorithm of Bo\.{z}yk et al.~\cite{bozyk-20-bipartite-permutation} to check whether such a set $V_-$ exists, and we are done if it does.
In the rest we may assume that $G - V_-$ is neither a proper interval graph nor the complement of a bipartite graph, hence both connected and co-connected.  For this purpose we may assume that $G$ itself is both connected and co-connected; otherwise, there is a unique component $C$ of $G$ or $\overline G$ such that $V(G)\setminus V(C)\subseteq V_-$.  Either the instance is trivially \FPT{}, when $n = O(k)$, or it suffices to consider the largest component of $G$ or $\overline G$.

The algorithm proceeds as follows.  We can destroy all forbidden induced subgraphs of order at most seven by branching.  Now $G$ is free of small forbidden induced subgraphs and is both connected and co-connected.
Our key observation is that if ${G}$ is not already a proper circular-arc graph, then $\overline{G}$ must be bipartite.  Note that any induced subgraph of a bipartite graph is bipartite, but we have assumed that $G - V_-$ is not the complement of a bipartite graph.
Therefore, we are ready to directly return ``yes'' or ``no.''

Since the parameterized algorithm branches on a small set of vertices that intersects every solution, we can easily turn it into an approximation algorithm for the maximum proper circular-arc induced subgraph problem.

\begin{theorem}\label{thm:alg-2}
  There is a polynomial-time approximation algorithm of approximation ratio~$9$ for the minimization version of the proper circular-arc vertex deletion problem.
\end{theorem}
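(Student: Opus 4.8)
The plan is to leverage the branching structure already established for Theorem~\ref{thm:alg-1}. The $9^k\cdot n^{O(1)}$-time algorithm, at its heart, repeatedly locates a forbidden induced subgraph of bounded order (at most seven vertices) and branches on which of its vertices to delete; once no such small obstruction remains, the reduced graph is handled directly without any further deletions. The key quantitative fact I would extract is that each branching step identifies a vertex set $F$ with $|F|\le 9$ such that \emph{every} solution to the proper circular-arc vertex deletion instance must contain at least one vertex of $F$ — equivalently, $F$ is a ``conflict set'' hitting all optimal deletion sets. (The base $9$ in $9^k$ arises precisely because the worst small obstruction being branched on contributes a factor of $9$; I would point the reader to the relevant lemma establishing the list of obstructions and the branching vector.)

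First I would make the reduction to the connected, co-connected case exactly as in the paragraph preceding Theorem~\ref{thm:alg-1}: if the graph is disconnected or its complement is, peel off all but the largest relevant component, losing only an additive $O(k)$ that does not affect the multiplicative approximation ratio (or, when $n=O(k)$, solve exactly by brute force). Then I would run the following greedy procedure. While the current graph $G'$ contains a forbidden induced subgraph of order at most seven, compute the associated conflict set $F$ of size at most $9$, add \emph{all} of $F$ to the solution $D$, and delete $F$ from $G'$. When the loop terminates, $G'$ has no small obstruction and is — by the key observation invoked in the algorithm — either already a proper circular-arc graph, or its complement is bipartite; in the latter subcase it has a proper circular-arc induced subgraph obtainable without further deletions via the results already cited (bipartite permutation / co-bipartite structure), so $D$ as constructed is a valid deletion set and $G'-\emptyset$ completes it. Hence $D$ is a feasible solution.

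For the ratio, let $D^\*$ be an optimum solution. Each iteration picks a conflict set $F_i$ with $F_i\cap D^\*\neq\emptyset$, and the $F_i$ are pairwise disjoint (each is deleted before the next is chosen). Therefore the number of iterations is at most $|D^\*|$, and $|D|=\sum_i|F_i|\le 9\sum_i 1\le 9\,|D^\*|$. Adding back the $O(k)$ vertices possibly removed in the preprocessing step: since we only invoke that step when $n=O(k)$ we instead solve exactly there, so no loss occurs; in the component-peeling case the peeled vertices are forced into every solution as well, so they are charged against $D^\*$ at ratio $1$. Each iteration runs in polynomial time (enumerate all $\binom{n}{\le 7}$ candidate vertex subsets, test membership in the finite obstruction list), and there are at most $n$ iterations, so the whole procedure is polynomial.

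**The main obstacle** I anticipate is bookkeeping rather than conceptual: one must verify that every branching rule used inside the proof of Theorem~\ref{thm:alg-1} really does branch on a set of size at most $9$ that meets every solution — including the ``cleanup'' rules that handle the transition to the already-reduced graph — and that the terminal case genuinely requires \emph{zero} further deletions (so that the terminal graph contributes nothing to $|D|$). If some rule branches on more than $9$ vertices in a way compensated by a better branching vector, the naive ``add all of $F$'' charging would overshoot; in that case I would instead argue via the branching vector, or restructure that particular rule. I would also double-check that disabling the polynomial-time subroutines for proper interval and co-bipartite cases (which are only needed for exact solving, not approximation) does not break feasibility of the terminal step — as argued above, it does not, because those cases admit a proper circular-arc induced subgraph on the full remaining vertex set.
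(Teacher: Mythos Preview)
Your proposal rests on two premises that do not hold.

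First, the base $9$ in $9^k$ does \emph{not} come from a branching set of size~$9$. In the parameterized algorithm (Figure~\ref{fig:pcag-deletion}), the branching step (step~3) is on forbidden induced subgraphs of order at most \emph{seven}, contributing only $7^k$. The $9^k$ in the overall running time comes from the black-box call in step~2 to the bipartite permutation vertex deletion algorithm of Bo\.{z}yk et al., which runs in $9^k\cdot n^{O(1)}$ time. So there is no conflict set of size~$9$ to greedily delete.

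Second, and more seriously, the terminal case does \emph{not} require zero further deletions. After you eliminate all small obstructions, the complement $\overline{G'}$ is reduced. Corollary~\ref{cor:reduced-non-bipartite} gives you that $G'$ is a proper circular-arc graph \emph{only when $\overline{G'}$ is not bipartite}. When $\overline{G'}$ \emph{is} bipartite---equivalently, when $G'$ still contains some $\overline{C_{2\ell}}$ with $\ell\ge 4$---the graph $G'$ is in general \emph{not} a proper circular-arc graph, and you must delete further vertices to make $\overline{G'}$ a bipartite permutation graph (Theorem~\ref{thm:co-bipartitie}). Your sentence ``in the latter subcase it has a proper circular-arc induced subgraph obtainable without further deletions'' is simply false. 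This residual problem is exactly bipartite permutation vertex deletion, and the paper handles it by invoking the $9$-approximation of~\cite{bozyk-20-bipartite-permutation}; that is where the ratio~$9$ actually originates.

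A third issue: your treatment of the disconnected case is incorrect. Peeling all but the largest component and asserting that ``the peeled vertices are forced into every solution'' fails because a disconnected proper circular-arc graph is a proper interval graph (Proposition~\ref{lem:connected}), and the optimum may well choose that route rather than delete entire components. The paper's fix is to compute three candidate solutions on the cleaned graph $G'$---one via the $6$-approximation for proper interval vertex deletion, one via the $9$-approximation for bipartite permutation vertex deletion on $\overline{G'}$, and one by selecting a largest non-bipartite co-component---and return the smallest. Whichever structural case the optimum falls into, one of the three candidates is within factor~$9$ of it.
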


Proper circular-arc graphs also arise naturally when we consider the clique graph (the intersection graph of maximal cliques of a host graph) of a circular-arc graph.
The complicated structures of circular-arc graphs are mainly due to the lack of the so-called \emph{Helly} property: every set of pairwise intersecting arcs has a common intersection.
For example, neither representation in Figure~\ref{fig:proper-cag} is Helly: the set $\{v_1, v_2, v_3\}$ in (a) and the set $\{v_3, v_4, v_5\}$ in (b) violate the Helly property.
A graph is a \emph{Helly circular-arc graph} if it admits an arc representation that is Helly.
Since every interval representation is Helly,  all interval graphs are Helly circular-arc graphs.  It is well known that the clique graph of an interval graph, with at most $n$ maximal cliques, is a proper interval graph \cite{hedman-84-clique-graphs}.  The same upper bound holds for the number of maximal cliques in a Helly circular-arc graph, and the clique graph of a Helly circular-arc graph is always a proper circular-arc graph \cite{duran-01-clique-graphs-helly-cag}.
Let us remark parenthetically that a circular-arc graph may have an exponential number of maximal cliques, e.g., the complement of the union of $p$ disjoint edges, which has $2 p$ vertices, each of degree $2 p - 2$.

A graph is a proper Helly circular-arc graph if and only if its clique matrix has the circular-ones property for both rows and columns \cite{lin-13-nhcag-and-subclasses}.

The class of proper Helly circular-arc graphs is sandwiched between proper circular-arc graphs and proper interval graphs.  This observation has been crucial for our algorithms for modification problems toward proper interval graphs \cite{cao-17-unit-interval-editing}.
A graph is a \emph{proper Helly circular-arc graph} if it has an {arc representation} that is both proper and Helly.
A word of caution is worth on the definition of \phcag{s}.  One graph might admit two arc representations, one being proper and the other Helly, but no arc representation that is both proper and Helly.
For example, both representations in Figure~\ref{fig:proper-cag} are proper but neither is Helly, and it is not difficult to make Helly arc representations for $S_3$ and $W_4$, but, as the reader may easily verify, neither of them admits an arc representation that is both proper and Helly.  
Therefore, the class of proper Helly circular-arc graphs does not contain all those graphs being both proper circular-arc graphs and Helly circular-arc graphs, but a proper subclass of it.
  Indeed, a proper circular-arc graph is a proper Helly circular-arc graph if and only if it is $\{S_3, W_4\}$-free \cite{lin-07-phcag}.

We then consider modification problems toward proper Helly circular-arc graphs.  For this class we consider also the edge deletion and completion problems (a proof of their NP-completeness was provided in the appendix).
Again, we start by destroying all small forbidden induced subgraphs, up to six vertices.  We show that a connected graph free of such induced subgraphs is already a proper Helly circular-arc graph.
For the vertex deletion problem, either we remove all but one component, or we remove vertices to get a proper interval graph.  The edge deletion problem is even simpler: if the graph is not connected, we cannot make it connected by deleting edges.  Thus, depending on whether the graph is connected, either we are already done, or we are solving the proper interval edge deletion problem.  This idea can even solve the general deletion problem that allows $k_1$ vertex deletions and $k_2$ edge deletions.
The situation is quite different for the completion problem.  We are happy if we can add at most $k$ edges to make the input graph a proper interval graph.  Otherwise, we have to make a connected proper Helly circular-arc graph.  After we have dealt with all the small forbidden induced subgraphs, the only nontrivial case is when there is a large component, which contains a long hole $H$, and several small components.  We need to ``attach'' these small components to vertices on $H$.  Since these operations are local, we can find a solution by dynamic programming.
Thus, all the three problems are \FPT{}, and they can be done in linear \FPT{} time.
Again, the parameterized algorithm for the vertex deletion problem can be easily turned into an approximation algorithm.
\begin{theorem}\label{thm: main-theorem2}
  For modification problems toward proper Helly circular-arc graphs, there are
  \begin{itemize}
  \item an $O(6^k\cdot (m + n))$-time algorithm for the vertex deletion problem;
  \item an $O(8^k\cdot (m + n))$-time algorithm for the edge deletion problem;
  \item an $O(14^{k_1+k_2}\cdot (m + n))$-time algorithm for the deletion problem; and
  \item a $k^{O(k)}\cdot (m + n)$-time algorithm for the completion problem.
  \end{itemize}
  Moreover, there is an $O(n m + n^2)$-time approximation algorithm of approximation ratio~$6$ for the minimization version of the proper Helly circular-arc vertex deletion problem.
\end{theorem}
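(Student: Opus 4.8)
The plan is to pair, for each of these problems, a bounded-depth branching that destroys the small forbidden induced subgraphs with a call to the corresponding modification algorithm for \emph{proper interval} graphs. We freely use two structural facts recorded above: a disconnected proper circular-arc graph is a proper interval graph, and a connected graph is a proper Helly circular-arc graph if and only if it contains no induced copy of any member of the finite family $\mathcal{F}$ of minimal forbidden induced subgraphs of order at most six. We also use that proper interval vertex deletion, edge deletion, and completion are solvable in linear \FPT{} time with bases fitting the bounds below \cite{villanger-13-pivd,cao-17-unit-interval-editing}, and that deleting vertices creates no new \emph{induced} subgraph, so an $\mathcal{F}$-free instance stays $\mathcal{F}$-free under further vertex deletions.

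\smallskip\noindent\textbf{Vertex deletion.} Given $(G,k)$, consider a hypothetical solution $V_-$. If $G-V_-$ is disconnected it is a proper interval graph, so $V_-$ is a proper interval vertex deletion solution of the same budget, which one call to the known algorithm detects. Otherwise $G-V_-$ is connected and $\mathcal{F}$-free, and we branch: while the current graph contains an induced $F\in\mathcal{F}$, we guess which of its at most six vertices lies in the solution and delete it; the branching number is at most six and the depth at most $k$. A leaf holds an $\mathcal{F}$-free graph with a residual budget; since it remains $\mathcal{F}$-free under deletions, the only outstanding requirement is connectivity, met most cheaply by deleting all but a largest component, so we merely check that this fits the budget. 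This gives $O(6^k\cdot(m+n))$. The branching step turns into an approximation algorithm by deleting \emph{all} (at most six) vertices of a found $F$: an optimum already hits $F$, so applying this rule exhaustively costs at most six times the optimum for making the graph $\mathcal{F}$-free; a final connectivity step and a direct implementation then give ratio $6$ in time $O(nm+n^2)$, along the lines of Theorem~\ref{thm:alg-2}.

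\smallskip\noindent\textbf{Edge deletion and the combined deletion problem.} Edge deletions never merge components, so if $G$ is disconnected so is the target, which is therefore a proper interval graph, and the task reduces to proper interval edge deletion. If $G$ is connected, a solution either keeps the graph connected---then the target is connected and $\mathcal{F}$-free, so we branch by deleting an edge of a found $F\in\mathcal{F}$, where a routine case check shows that at most eight of its edges need to be considered; the depth is at most $k$, and each leaf is either an $\mathcal{F}$-free connected (hence proper Helly circular-arc) graph or disconnected, in which case the branch fails---or the solution disconnects the graph, and the target is again a proper interval graph, handled directly on $G$. This gives $O(8^k\cdot(m+n))$. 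For the deletion problem with budgets $k_1$ for vertices and $k_2$ for edges we proceed identically, except that to destroy a found $F$ we guess one of its at most six vertices or one of its at most eight edges; the branching number is then at most $14$, the depth at most $k_1+k_2$, and the residual proper-interval subcases are handled by the combined proper interval deletion routine, for $O(14^{k_1+k_2}\cdot(m+n))$.

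\smallskip\noindent\textbf{Completion.} Here only edge additions are allowed, so the component structure is frozen. If some solution makes $G$ a proper interval graph, the linear-\FPT{} proper interval completion algorithm finds it; otherwise the target is a connected, non-proper-interval proper Helly circular-arc graph, which therefore carries a long induced hole. We first branch to destroy every induced member of $\mathcal{F}$ (bounded branching number, depth at most $k$); in a surviving branch the graph is $\mathcal{F}$-free. We then expect to prove that the instance is in a normal form: a single ``large'' component containing a long induced hole $H$ together with $O(k)$ ``small'' components, each of bounded size---attaching a component to $H$ essentially costs its size in added edges, and since the budget is $k$, a second large component is impossible. Turning the graph into a connected proper Helly circular-arc graph then amounts to attaching each small component to $H$ by added edges; because a proper Helly arc representation looks locally like an interval one, each attachment occupies only a short segment of $H$, and for each small component there are only $k^{O(1)}$ essentially distinct placements. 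We then compute the cheapest feasible family of attachments by a dynamic program sweeping once around $H$ and recording its commitment on the current segment, for a total of $k^{O(k)}\cdot(m+n)$. The main obstacle will be exactly this completion step: establishing the normal form, proving that attachments are local, and designing the dynamic program so that its states faithfully track the partial circular structure is the technically heaviest part, and it---not the obstruction branching---is what forces the $k^{O(k)}$ factor; the deletion variants are routine once the structural dichotomy and the proper interval algorithms are available.
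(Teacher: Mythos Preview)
Your proposal is correct and follows essentially the same route as the paper: branch on the finite family $\mathcal{F}$, invoke the proper interval algorithms for the disconnected target case, and for completion run a dynamic program around the long hole after reducing to one large component plus $O(k)$ extra vertices.

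Two small corrections. First, the sentence ``a routine case check shows that at most eight of its edges need to be considered'' is not literally true: $S_3$, $W_5$, and $\overline{C_6}$ have $9$, $10$, and $9$ edges respectively, and no single-edge pruning to eight works by symmetry. The paper obtains the bases~$8$ and~$14$ via a slightly finer branching: for $W_5$ only the five spokes matter (removing a rim edge leaves a claw), while for $S_3$ one either deletes one of the three inner-triangle edges or \emph{both} edges at some degree-$2$ vertex, and for $\overline{C_6}$ one either deletes one of the three non-triangle edges or at least four edges from the two triangles; the resulting branching vectors are dominated by~$8$. Your ``routine case check'' should be read as this, not as a bound on the number of single-edge options. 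Second, the remark that for completion ``the component structure is frozen'' is backwards---edge additions merge components---though your subsequent reasoning (the target, if not proper interval, must be connected) is correct and matches the paper.
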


Somewhat surprisingly, modification problems toward circular-arc graphs and its subclasses have not received sufficient attention.
We hope our work will inspire more study in this direction.
Apart from the two classes in the present paper, the next interesting class is the class of normal Helly circular-arc graphs, a super class of proper Helly circular-arc graphs.
They have played crucial roles in solving modification problems to interval graphs \cite{cao-15-interval-deletion, cao-16-almost-interval-recognition}.
Also related and probably simpler are the modification problems toward unit (Helly) circular-arc graphs.  It is well known that a graph is a proper interval graph if and only if it is a unit interval graph.
However, there are proper (Helly) circular-arc graphs that are not unit (Helly) circular-arc graphs, e.g., the graph obtained from an even hole of length at least eight by adding edges to connect consecutive even-numbered vertices.  
We refer interested readers to Lin et al.~\cite{lin-13-nhcag-and-subclasses} and Cao~\cite{cao-17-unit-interval-editing} for the hierarchy of subclasses of circular-arc graphs.

\section{Preliminaries}\label{sec:structures}

All graphs discussed in this paper are undirected and simple.  The vertex set and edge set of a graph $G$ is denoted by, respectively, $V(G)$ and $E(G)$.  Let $n = |V(G)|$ and $m = |E(G)|$.
A \emph{walk}\index{walk} in a graph $G$ is a sequence of vertices and edges in the form of $v_0$, $v_0 v_1$, $v_1$, $v_1 v_2$, $\ldots$, $v_\ell$.  Since the edges are determined by the vertices, such a walk can be denoted by $v_0 v_1 \ldots v_\ell$ unambiguously.  We say that this walk connects $v_0$ and $v_\ell$, which are the \emph{ends}\index{end!of a walk} of this walk, and refer to it as a \stwalk{v_0}{v_\ell}\index{walk!$s$--$t$ walk}.  The \emph{length} of a walk is the number of occurrences of edges it contains, and $\ell$ in the previous example.  A walk is \emph{closed} if $\ell > 1$ and $v_0 =v_\ell$.  A walk is a \emph{path} if all its vertices are distinct.
A closed walk of length $\ell$ is a \emph{cycle}\index{cycle} if it visits precisely $\ell$ vertices; i.e., no repeated vertices except the two ends.  A \emph{hole} is an induced cycle of length at least four.  A walk, path, cycle, or hole is \emph{odd} (resp., \emph{even}) if its length is odd (resp., even).
For $\ell\ge 3$, we use $C_\ell$ to denote an induced cycle on $\ell$ vertices; if we add a new vertex to a $C_\ell$ and make it adjacent to no or all vertices on the cycle, then we end with a $C_\ell^*$ or $W_\ell$, respectively.

The complement graph $\overline{G}$ of a graph $G$ is defined on the same vertex set $V(G)$, where a pair of vertices $u$ and $v$ is adjacent in $\overline{G}$ if and only if $u v \not\in E(G)$; e.g., $\overline{C_{5}^*}$ is ${W_5}$.  The graph $\overline{C_{3}^*}$ is also called a \emph{claw}.
A graph $G$ is \emph{connected} if every pair of vertices is connected by a path, and \emph{co-connected} if $\overline{G}$ is connected.

A \emph{circular-arc graph} is the intersection graph of a set of arcs on a circle.  The set of arcs is called an \emph{arc representation} of this graph.  
In this paper, all arcs are closed.  An arc representation is \emph{proper} if no arc in it properly contains another arc.  A graph is a \emph{proper circular-arc graph} if it has a proper arc presentation.
In case that there is a point of the cycle avoided by all the arcs in an arc representation, we can cut the circle and straighten all the arcs into line segments.
Such a graph is an \emph{interval graph}, i.e., the intersection graph of a set of closed intervals on the real line, and the set of intervals is an \emph{interval representation} of this graph.
Proper interval graphs are defined analogously.
Clearly, any (proper) interval representation can be viewed as a (proper) arc representation leaving some point uncovered, and hence all (proper) interval graphs are always (proper) circular-arc graphs.

\begin{figure}[h]
  \centering\small
  \begin{subfigure}[b]{0.15\linewidth}
    \centering
    \begin{tikzpicture}[every node/.style={filled vertex}, scale=.44]
      \foreach[count =\j] \i in {1, 2, 3} 
        \draw ({120*\i-90}:1) -- ({120*\i+30}:1) -- ({120*\i-30}:2) -- ({120*\i-90}:1);
        \foreach[count =\j] \i in {1, 2, 3} {
          \node (u\i) at ({120*\i-30}:2) {};
          \node (v\i) at ({120*\i-90}:1) {};
      }
    \end{tikzpicture}
    \caption{$S_{3}$ (tent)}\label{fig:tent}
  \end{subfigure}
  \,
  \begin{subfigure}[b]{0.16\linewidth}
    \centering
    \begin{tikzpicture}[every node/.style={filled vertex},scale=.22]
      \node (v1) at (0,6) {};
      \node (u1) at (0,3.5) {};
      \node (v2) at (-5,0) {};
      \node (u2) at (-2,0) {};
      \node (u3) at (2,0) {};
      \node (v3) at (5,0) {};
      \draw[] (v2) -- (u2) -- (u3) -- (v3);
      \draw[] (u1) -- (v1);
      \draw[] (u2) -- (u1) -- (u3);
    \end{tikzpicture}
    \caption{$\overline{S_{3}}$ (net)}\label{fig:net}    
  \end{subfigure}
   \,
  \begin{subfigure}[b]{.18\linewidth}
    \centering
   \begin{tikzpicture}[scale=.6]
    \node [filled vertex] (a) at (-2, 0) {};
    \node [filled vertex] (b) at (-1, 0) {};
    \node [filled vertex] (c) at (0,0) {};
    \node [filled vertex] (d) at (1,0) {};
    \node [filled vertex] (e) at (2,0) {};
    \node [filled vertex] (v) at (0, 1.15) {};
    \node [filled vertex] (u) at (0, 2.3) {};
    \draw (a) -- (b) -- (c) -- (d) -- (e);
    \draw (u) -- (v) -- (c) ;
    \end{tikzpicture}
    \caption{$F_{1}$ (long claw)}
  \end{subfigure}  
  \,
  \begin{subfigure}[b]{0.11\linewidth}
    \centering
    \begin{tikzpicture}[every node/.style={filled vertex}, scale=.7]
      \node (v1) at (0,1) {};
      \node (v2) at (0,0) {};
      \node (v3) at (1.5,0) {};
      \node (v4) at (1.5,1) {};
      \node (v5) at (1.5,2) {};
      \node (v6) at (0,2) {};
      \node (v7) at (0.75,0.5) {};
      \draw (v1) -- (v2)  -- (v3) -- (v4) -- (v5) (v6) -- (v1) -- (v4);
      \draw (v7) edge (v3);
    \end{tikzpicture}
    \caption{$F_{2}$}
  \end{subfigure}  
  \,
  \begin{subfigure}[b]{0.11\linewidth}
    \centering
    \begin{tikzpicture}[every node/.style={filled vertex}, scale=.7]
       \node (v1) at (0,1) {};
      \node (v2) at (0,0) {};
      \node (v3) at (1.5,0) {};
      \node (v4) at (1.5,1) {};
      \node (v5) at (1.5,2) {};
      \node (v6) at (0,2) {};
      \node (v7) at (0.75,0.5) {};
      \draw (v1) -- (v2)  -- (v3) -- (v4) -- (v5) -- (v6) -- (v1) -- (v4);
      \draw (v7) edge (v1);
    \end{tikzpicture}
    \caption{$F_{3}$}
  \end{subfigure}  
  \,
  \begin{subfigure}[b]{0.11\linewidth}
    \centering
    \begin{tikzpicture}[every node/.style={filled vertex}, scale=.7]
      \node (v1) at (0,1) {};
      \node (v2) at (0,0) {};
      \node (v3) at (1.5,0) {};
      \node (v4) at (1.5,1) {};
      \node (v5) at (1.5,2) {};
      \node (v6) at (0,2) {};
      \node (v7) at (0.75,0.5) {};
      \draw (v1) -- (v2)  -- (v3) -- (v4) -- (v5) -- (v6) -- (v1) -- (v4);
      \draw  (v1) -- (v7) -- (v4);
    \end{tikzpicture}
    \caption{$F_{4}$}
  \end{subfigure}
  \caption{Some small forbidden induced graphs.}
  \label{fig:small-graphs}
\end{figure}
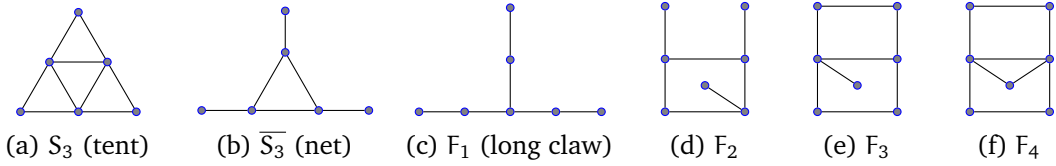

Let $F$ be a fixed graph.  We say that a graph $G$ is \emph{$F$-free} if $G$ does not contain $F$ as an induced subgraph.  For a set $\mathcal{F}$ of graphs, a graph $G$ is \emph{$\mathcal{F}$-free} if $G$ is $F$-free for every $F\in \mathcal{F}$.  If every $F\in \mathcal{F}$ is minimal, i.e., not containing any $F'\in \mathcal{F}$ as a proper induced subgraph, then $\mathcal{F}$  %
are the (minimal) \emph{forbidden induced subgraphs} of this class.  See Figure~\ref{fig:small-graphs} for some of the forbidden induced subgraphs considered in the present paper.
We use $S_3^*$ to denote the graph obtained by adding an isolated vertex to $S_3$.

\begin{theorem}[\cite{tucker-74-structures-cag}]
  \label{thm:proper-forbidden-induced-subgraphs}
  A graph is a proper circular-arc graph if and only if it is free of $S_3^*$, $C_\ell^*$ with $\ell \ge 4$, as well as the complement of $S_3$, $F_{1}$, $F_{2}$, $F_{3}$, $F_{4}$, ${C_{2 \ell + 2}}$, and ${C_{2 \ell - 1}^*}$ with $\ell \ge 2$.
\end{theorem}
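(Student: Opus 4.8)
The plan is to prove the two directions separately: necessity is a routine check, while sufficiency is the substantive statement and coincides with Tucker's classical structure theorem, so any proof will follow his line. For \emph{necessity}, note that the class of proper circular-arc graphs is hereditary — deleting arcs from a proper arc representation leaves a proper arc representation — so it suffices to verify that none of the finitely many shapes in the list is itself a proper circular-arc graph (and that each is minimal with that property). For the ``starred'' families this rests on one observation: if a connected graph $H$ is not an interval graph — in particular a hole $C_\ell$ with $\ell\ge 4$, or the tent $S_3$ — then the arcs of \emph{every} arc representation of $H$ cover the whole circle, for otherwise cutting the circle at an uncovered point would straighten it into an interval representation; consequently $H^{*}$ has no arc representation at all, which disposes of $S_3^{*}$ and all $C_\ell^{*}$, and minimality follows because $C_\ell$, $S_3$, and their proper induced subgraphs are proper interval graphs. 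For the complement families one passes to $\overline G$ and invokes the description of co-bipartite proper circular-arc graphs recalled in the introduction (such a graph is a proper circular-arc graph precisely when its complement is a bipartite permutation graph); a short case analysis then shows each of $\overline{F_1},\dots,\overline{F_4}$, $\overline{C_{2\ell+2}}$ and $\overline{C^{*}_{2\ell-1}}$ (the claw when $\ell=2$) fails to be a proper circular-arc graph — which is why Tucker phrased the list through the complement class.

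For \emph{sufficiency}, assume $G$ is free of every graph in the list, and first dispose of the two degenerate cases. If $G$ is disconnected, then a hole of length $\ge 4$ or a tent inside one component, together with any vertex of another component, induces a forbidden $C^{*}_\ell$ or $S^{*}_3$, while the claw and the net are themselves on the list; hence every component is free of the claw, the net, the tent and all long holes, so is a proper interval graph, and a disjoint union of proper interval graphs can be laid out on a line — thus $G$ is a proper interval graph, and in particular a proper circular-arc graph. Symmetrically, if $\overline G$ is disconnected, then a triangle, an odd hole, or an even hole of length $\ge 6$ in $\overline G$, together with a vertex of another component, yields in $G$ the claw, some $\overline{C^{*}_{2\ell-1}}$, or some $\overline{C_{2\ell+2}}$; so $\overline G$ is bipartite, and freeness of $G$ from the $\overline{F_i}$ (and $\overline{S_3}$) upgrades this to $\overline G$ being a bipartite permutation graph, whence $G$ is a co-bipartite proper circular-arc graph. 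We may therefore assume $G$ and $\overline G$ are both connected. The target is then a \emph{round enumeration} of $V(G)$ — a cyclic ordering in which every closed neighbourhood is a block of consecutive vertices — which converts directly into a proper arc representation by giving $v$ the arc spanning exactly the positions of $N[v]$. If $G$ is chordal it is already a proper interval graph (by the previous paragraph) and one wraps a straight enumeration into a cyclic one; otherwise, seed the construction with a shortest long hole $C$, observe that $C^{*}_\ell$-freeness together with the small obstructions forces every vertex to have a nonempty arc of $C$ as its neighbourhood on $C$, and grow $C$ into a complete cyclic order by repeatedly inserting a vertex between two currently consecutive ones, the remaining finite obstructions $S^{*}_3$, $\overline{S_3}$, $\overline{F_1},\dots,\overline{F_4}$ being precisely what rules out the local configurations that would block an insertion or create an inconsistency.

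The hard part is exactly this last step — assembling the vertices around a long hole into a single coherent cyclic order and proving the process never gets stuck; this is where the bounded-size obstructions are consumed and where the case analysis concentrates, and a careless treatment degenerates into ad hoc surgery on the hole. The clean route I would follow is the local-tournament / round-enumeration machinery of Deng, Hell and Huang~\cite{deng-96-proper-interval-and-cag}: proper circular-arc graphs are exactly the graphs admitting a round enumeration, and in that framework producing one reduces to certifying the absence of precisely the configurations on Tucker's list, so the hard direction becomes a finite case check together with a single uniform global argument.
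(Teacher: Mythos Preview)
The paper does not prove this theorem at all: it is quoted from Tucker~\cite{tucker-74-structures-cag} and used as a black box. There is therefore nothing in the paper to compare your proposal against.

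As a standalone sketch your outline is reasonable but has a couple of soft spots worth flagging. For necessity, you dispose of the complement families by invoking the co-bipartite proper circular-arc characterisation (Theorem~\ref{thm:co-bipartitie} in the paper), but in this paper that statement is \emph{derived from} Theorem~\ref{thm:proper-forbidden-induced-subgraphs}, so the argument as written is circular; moreover $F_4$ contains a triangle, so $\overline{F_4}$ is not co-bipartite and your co-bipartite route does not apply to it. For sufficiency, pointing to the round-enumeration machinery of Deng, Hell and Huang~\cite{deng-96-proper-interval-and-cag} is a legitimate and arguably cleaner alternative to Tucker's original argument, but what you have is a plan rather than a proof: the substantive work --- showing that the insertion process around a shortest hole never gets stuck, and that the listed obstructions are exactly what is needed --- is asserted, not carried out.
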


Neither the class of circular-arc graphs nor the class of proper circular-arc graphs is closed under taking disjoint unions.
Indeed, 
if a (proper) circular-arc graph $G$ is not a (proper) interval graph, then in any representation of $G$, the union of the arcs covers the whole circle.  Such a graph is necessarily connected.

\begin{proposition}[Folklore]
  \label{lem:connected}
  If a proper circular-arc graph $G$ is not connected, then $G$ is a proper interval graph.
\end{proposition}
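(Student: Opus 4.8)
The plan is to prove the contrapositive geometrically: a disconnected proper circular-arc graph must leave a point of the circle uncovered in \emph{any} proper arc representation, and cutting the circle at such a point turns that representation into a proper interval representation. This is essentially the remark made just before the statement, and I would spell it out as follows.

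Fix a proper arc representation $\mathcal R$ of $G$, and suppose $G$ has components $C_1,\dots,C_t$ with $t\ge 2$. For each $i$ let $U_i$ be the union of the arcs assigned to $V(C_i)$; each $U_i$ is a nonempty finite union of closed arcs, hence a nonempty closed subset of the circle. Vertices in different components are nonadjacent, so no arc of $C_i$ meets an arc of $C_j$ when $i\ne j$; therefore $U_1,\dots,U_t$ are pairwise disjoint. Since the circle is connected it cannot be partitioned into two nonempty closed sets, and hence (grouping $U_2\cup\dots\cup U_t$ against $U_1$) not into $t\ge 2$ pairwise-disjoint nonempty closed sets either, so $U_1\cup\dots\cup U_t$ is a proper subset of the circle. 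Pick a point $p$ lying on no arc of $\mathcal R$.

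Now cut the circle at $p$ and unroll it onto a line. Because no arc contains $p$, each arc maps to a genuine closed interval, and this map is a bijection between the arcs and a family of intervals that preserves intersection: two arcs meet on the circle if and only if the corresponding intervals meet on the line, since none of them wraps around $p$. Thus the resulting intervals represent $G$. The map also preserves containment, so a proper containment among the intervals would give a proper containment among the arcs, contradicting that $\mathcal R$ is proper. Hence $G$ admits a proper interval representation, as claimed.

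The argument is short, and I do not anticipate a real obstacle. The only two points needing a line of care are the topological fact used above (a connected space is not the disjoint union of two nonempty closed sets, applied here to the circle and the $U_i$) and the routine check that both adjacency and the absence of proper containment survive the cutting operation. Degenerate cases are absorbed automatically: if some arc were the whole circle then already $t=1$, and a one-vertex graph is trivially a proper interval graph.
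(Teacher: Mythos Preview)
Your proof is correct and follows precisely the approach the paper sketches in the sentence preceding the proposition (the paper gives no formal proof, only that remark). You have simply made explicit the two steps---that disconnectedness forces an uncovered point by connectedness of the circle, and that cutting there yields a proper interval representation---so there is nothing to add.
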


Proper circular-arc graphs have three infinite families of forbidden induced subgraphs, namely, $\{C_\ell^*\mid \ell \ge 4\}$, $\{\overline{C_{2 \ell + 2}}\mid \ell \ge 2\}$, and $\{\overline{C_{2 \ell - 1}^*}\mid \ell \ge 2\}$.  The first of them can be ignored for connected graphs.
\begin{lemma}\label{lem:small-subgraphs-free}
  Let $G$ be a connected graph.  If $G$ does not contain the complement of $C_3^*$ or $S_3$,
  then $G$ is $\{C_\ell^* \mid \ell\ge 5\}$-free.
\end{lemma}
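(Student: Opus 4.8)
The plan is a proof by contradiction. Suppose $G$ is connected, contains neither $\overline{C_3^*}$ (the claw) nor $\overline{S_3}$ (the net), and yet contains an induced $C_\ell^*$ for some $\ell\ge 5$; so $G$ has a hole $H=v_1v_2\cdots v_\ell$ (indices read modulo $\ell$) together with a vertex $u$ adjacent to none of $v_1,\dots,v_\ell$. Since $G$ is connected, I would first take a shortest path $p_0p_1\cdots p_t$ from $u=p_0$ to $V(H)$, so that $p_t\in V(H)$ and $p_0,\dots,p_{t-1}\notin V(H)$. As $u$ has no neighbour on $H$ we get $t\ge 2$, and minimality of the path forces each $p_i$ with $i\le t-2$ to have no neighbour on $H$ either. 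Put $x:=p_{t-1}$ and $y:=p_{t-2}$: then $x$ lies off $H$ and is adjacent to the hole vertex $p_t$, while $y$ is adjacent to $x$ but to no vertex of $H$.

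The heart of the argument is to pin down $N(x)\cap V(H)$ using claw-freeness. First, $x$ cannot be adjacent to two nonadjacent hole vertices $v_i,v_j$: otherwise $\{x;v_i,v_j,y\}$ induces a claw, since $v_iv_j\notin E(G)$ and $y$ is adjacent to neither. Hence $N(x)\cap V(H)$ is a clique, and as $H$ is chordless it therefore has at most two vertices, consecutive on $H$ if it has two. Second, $N(x)\cap V(H)$ cannot be a single vertex $v_i$: otherwise $\{v_i;x,v_{i-1},v_{i+1}\}$ induces a claw, because $v_{i-1}v_{i+1}\notin E(G)$ ($H$ is chordless) while $x$ is adjacent to neither of $v_{i-1},v_{i+1}$. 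Since $x$ is adjacent to $p_t\in V(H)$, the set $N(x)\cap V(H)$ is nonempty; combining, it is exactly the two endpoints of one edge of $H$, and I will rename $H$ so that $N(x)\cap V(H)=\{v_1,v_2\}$.

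It then remains to verify that $\{v_1,v_2,x,v_\ell,v_3,y\}$ induces a net, contradicting $\overline{S_3}$-freeness: $\{v_1,v_2,x\}$ is a triangle, and $v_\ell,v_3,y$ are pendant vertices attached to $v_1,v_2,x$ respectively. The adjacencies needing comment are that $v_\ell$ and $v_3$ are nonadjacent to each other and to the ``wrong'' triangle vertex and that neither is adjacent to $x$; the former follow from $H$ being chordless with $\ell\ge5$ (so $v_3$ and $v_\ell$ are at hole-distance at least two from each other and from $v_1$ and $v_2$), the latter from $N(x)\cap V(H)=\{v_1,v_2\}$. Finally $y$ is nonadjacent to $v_1,v_2,v_3,v_\ell$ because it has no neighbour on $H$, and the six vertices are pairwise distinct because $x,y\notin V(H)$ and $x\ne y$.

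I expect the only genuinely delicate point to be the structural step of the second paragraph — squeezing $N(x)\cap V(H)$ down to a single edge of $H$; after that the contradiction is a finite, routine adjacency check. It is also worth noting that $\ell\ge5$ enters the argument only in the net verification, to ensure the two triangle-disjoint pendants $v_3$ and $v_\ell$ are nonadjacent — for $\ell=4$ one has $v_\ell=v_4$ adjacent to $v_3$, and in fact the analogous statement with ``$\ell\ge4$'' is false (add to a $C_4$ a single vertex adjacent to two consecutive cycle vertices and to one further vertex).
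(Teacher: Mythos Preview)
Your proof is correct and follows essentially the same approach as the paper: take a shortest path from the isolated vertex to the hole, let $x$ be the last off-hole vertex and $y$ its predecessor, use claw-freeness to constrain $N(x)\cap V(H)$, and then exhibit a net. Your organization is in fact slightly cleaner than the paper's case split, since you first pin down $N(x)\cap V(H)$ as exactly one edge of $H$ before reading off the net directly.
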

\begin{proof}
  Suppose for contradiction that there exist an induced cycle $C$ and a vertex $v$ in $G$ with $|C| \ge 5$ and $V(C)\cap N(v) = \emptyset$.
  Since $G$ is connected, we can find a shortest path from $v$ to $C$.  Let the last three vertices on this path be $x$, $y$, and $z$; note that $z$ is on $C$ and $x$ is nonadjacent to any vertex on $C$.
  We may number the vertices on $C$ such that $C = v_{1} v_{2}\cdots v_{|C|}$ and $z = v_{2}$.
  If $y$ is adjacent to only $v_{2}$ on $C$, then $\{v_{1}, v_{2},v_{3}, y\}$ induces a claw.  If $y$ is also adjacent to both $v_{1}$ and $v_{3}$, then $\{v_{1},v_{3}, x, y\}$ induces a claw, and it is similar if $y$ is adjacent to any three consecutive vertices on $C$.  Otherwise, $y$ is adjacent to precisely one of $v_{1}$ and $v_{3}$.  Without loss of generality, assume that $y$ is adjacent to $v_{3}$ but not $v_{1}$.  Note that $y$ is not adjacent to $v_{4}$ either, and then $\{v_{1}, v_{2}, v_{3}, v_{4}, x, y\}$ induces a copy of the complement of $S_{3}$.
\end{proof}

An arc representation is {\em Helly} if every set of pairwise intersecting arcs has a common intersection.  A {circular-arc graph} is \emph{proper Helly} if it has an {arc representation} that is both proper and Helly.

\begin{theorem}[\cite{lin-13-nhcag-and-subclasses}]
  \label{thm:pcag-phcag}
  A proper circular-arc graph is a \phcag{} if and only if it contains no $W_4$ or $S_3$.
\end{theorem}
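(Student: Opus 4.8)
We first argue necessity. The class of proper Helly circular-arc graphs is hereditary: deleting the arc of a vertex from a proper and Helly representation leaves one that is still proper and still Helly. As observed before the statement (see also \cite{lin-07-phcag}), neither $W_4$ nor $S_3$ admits a representation that is simultaneously proper and Helly, so a proper Helly circular-arc graph can contain neither as an induced subgraph. For $W_4$ the argument I have in mind is the following. Let $v_1v_2v_3v_4$ be the rim and $v_5$ the hub. Since $v_1v_2v_3v_4$ induces $C_4$, which is not an interval graph, in any arc representation the arcs $A_1,A_2,A_3,A_4$ cover the circle; in a \emph{proper} representation no two of them can cover the circle (otherwise some rim arc would be contained in another, violating properness), so each $A_i\cap A_{i+1}$ is a single arc $O_{i,i+1}$, and the arcs $O_{12},O_{23},O_{34},O_{41}$ are pairwise disjoint and occur in this cyclic order. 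If the representation is Helly, applying the Helly property to the triangles $\{v_i,v_{i+1},v_5\}$ shows that $A_5$ meets each $O_{i,i+1}$; the complement of $A_5$ is then an arc avoiding one chosen point in each of these four regions, hence lies inside one of the four arcs of the circle separating consecutive regions, so $A_5$ contains some $A_j$ in full, forcing $A_5=A_j$ by properness --- impossible, since $v_5$ and $v_j$ have different neighborhoods. The case of $S_3$ is analogous, tracking the cyclic order of the common points produced by the Helly property on its four triangles.

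For sufficiency, let $G$ be a proper circular-arc graph with no induced $W_4$ or $S_3$. If $G$ is disconnected, then by Proposition~\ref{lem:connected} it is a proper interval graph, and a proper interval representation, read as an arc representation omitting a point of the circle, is proper and (being a family of intervals on a line) Helly; so $G$ is a \phcag{}. Assume henceforth that $G$ is connected. Being a proper circular-arc graph, $G$ contains no claw ($\overline{C_3^*}$) and, by Theorem~\ref{thm:proper-forbidden-induced-subgraphs}, no $\overline{S_3}$ and no $C_4^*$; together with the hypothesis that $G$ is $S_3$-free, Lemma~\ref{lem:small-subgraphs-free} then yields that $G$ is $\{C_\ell^*:\ell\ge 5\}$-free, so altogether $G$ is $\{C_\ell^*:\ell\ge 4\}$-free --- every vertex of $G$ has a neighbor on every hole.

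If $G$ has no hole, then $G$ is chordal; being also free of the claw, the net $\overline{S_3}$, and the tent $S_3$, it is an interval graph (every minimal chordal non-interval graph other than the net and the tent contains a claw), hence --- being claw-free --- a proper interval graph, and we conclude as above. So suppose $G$ contains a hole $H=h_1h_2\cdots h_\ell$ with $\ell\ge 4$. Fix a proper arc representation of $G$. Since $G$ is connected and not a proper interval graph the arcs cover the circle, and in particular the arcs $A_{h_1},\dots,A_{h_\ell}$, which represent the cycle $H$, cover the circle and are cyclically arranged with consecutive ones overlapping and non-consecutive ones disjoint; they serve as a \emph{scaffold}. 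The plan is to read off from the scaffold the cyclic ordering of the vertices of $G$, re-draw $G$ as a canonical proper representation in which every arc is ``short'' relative to the scaffold --- each vertex being attached, through its neighbors on $H$, to a run of the scaffold --- and to show that this canonical representation is Helly. Equivalently, one must show that no triangle of $G$ has its three arcs covering the circle: such a triangle $\{x,y,z\}$ is necessarily dominating and its arcs overlap ``around'' the scaffold, and combining this configuration with a suitable stretch of $H$ should exhibit an induced $W_4$ (when the cyclic overlap is tight) or an induced $S_3$ (otherwise), contradicting the hypothesis.

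The principal obstacle is exactly this last step --- extracting a concrete $W_4$ or $S_3$ from an abstract ``three arcs cover the circle'' configuration read off an arbitrary proper representation. I would take an inclusion-minimal family of arcs whose union is the whole circle but whose common intersection is empty, use a no-proper-containment argument to show that, in a proper representation, such a family may be assumed to have size three, and then use the hole $H$ together with $\{C_\ell^*:\ell\ge 4\}$-freeness to pin down the four or six vertices inducing the forbidden pattern. The bookkeeping of arc endpoints around the circle is the delicate part; once Helly-ness of the canonical representation is established, the normalization itself is routine.
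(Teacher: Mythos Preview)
The paper does not supply its own proof of this theorem: it is quoted from \cite{lin-13-nhcag-and-subclasses} and used as a black box to derive Corollaries~\ref{thm:phcag} and~\ref{cor:phcag-small-fis}. So there is no in-paper argument to compare your sketch against.

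On the sketch itself: the necessity direction is fine, and your treatment of $W_4$ is clean. For sufficiency, your outline---reduce to the connected case, dispose of the chordal case via the $\{\text{claw},\text{net},\text{tent},\text{hole}\}$ characterization of proper interval graphs, and otherwise build a canonical proper representation along a hole and show that no three of its arcs cover the circle---is indeed the route taken in \cite{lin-13-nhcag-and-subclasses} (compare Lemma~\ref{lem:phcag-clique-circle} of the present paper, also quoted from that source). Two points deserve care. First, watch the quantifiers: the assertion ``if three arcs of a proper representation cover the circle then $G$ contains $W_4$ or $S_3$'' is false for an \emph{arbitrary} proper representation---$K_4$, for instance, has a proper non-Helly arc representation (four arcs of length one half, with left endpoints at $0,\,0.3,\,0.4,\,0.6$), yet $K_4$ contains neither $W_4$ nor $S_3$. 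So the canonical representation must actually be constructed before the extraction step; you gesture at a scaffold but never specify what arc a general vertex receives. Second, the ``delicate bookkeeping'' you defer is essentially the entire content of the sufficiency proof: pinning down which hole vertices combine with a covering triple $\{x,y,z\}$ to produce an induced $S_3$ or $W_4$ is a genuine case analysis, and without it what you have is a plausible plan rather than a proof.
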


Note that $S_3^*$ contains $S_3$, while all the complements of $F_1$, $F_{2}$, $F_{3}$, $F_{4}$, and $\{ {C_{2 \ell}}, {C_{2 \ell - 1}^*} \mid \ell \ge 4\}$ contain $W_{4}$.  The following corollaries follow from Theorem~\ref{thm:pcag-phcag}, together with Theorem~\ref{thm:proper-forbidden-induced-subgraphs} and Lemma~\ref{lem:small-subgraphs-free}, respectively.

\begin{corollary}[\cite{lin-13-nhcag-and-subclasses}]
  \label{thm:phcag}
  A graph is a \phcag{} if and only if it contains no $\overline{C^*_3}$, $S_3, \overline{S_3}$, $W_4$, $W_5$, $\overline{C_6}$, or $C_\ell^*$ for $\ell \ge 4$.
\end{corollary}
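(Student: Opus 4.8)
The plan is to read off both implications from Theorem~\ref{thm:pcag-phcag} and Theorem~\ref{thm:proper-forbidden-induced-subgraphs}, so that the whole argument reduces to reconciling two lists of forbidden induced subgraphs. By Theorem~\ref{thm:pcag-phcag}, a graph is a \phcag{} if and only if it is a proper circular-arc graph that is in addition $W_4$-free and $S_3$-free; and by Theorem~\ref{thm:proper-forbidden-induced-subgraphs}, the proper circular-arc graphs are exactly the graphs with no induced $S_3^*$, no induced $C_\ell^*$ with $\ell\ge 4$, and no induced copy of any of $\overline{S_3}$, $\overline{F_1}$, $\overline{F_2}$, $\overline{F_3}$, $\overline{F_4}$, $\overline{C_{2\ell+2}}$ ($\ell\ge 2$), $\overline{C_{2\ell-1}^*}$ ($\ell\ge 2$). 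So I must show that avoiding this combined family is equivalent to avoiding $\overline{C_3^*}$, $S_3$, $\overline{S_3}$, $W_4$, $W_5$, $\overline{C_6}$, and $C_\ell^*$ ($\ell\ge 4$).

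For the forward direction, I would take a \phcag{} $G$. It is a proper circular-arc graph, so Theorem~\ref{thm:proper-forbidden-induced-subgraphs} immediately excludes $C_\ell^*$ ($\ell\ge 4$) and $\overline{S_3}$; it excludes $\overline{C_6}$ as the $\ell=2$ member of the family $\overline{C_{2\ell+2}}$; and it excludes $\overline{C_3^*}$ and $W_5=\overline{C_5^*}$ as the $\ell=2$ and $\ell=3$ members of the family $\overline{C_{2\ell-1}^*}$. Theorem~\ref{thm:pcag-phcag} additionally supplies $W_4$- and $S_3$-freeness. Hence $G$ avoids all seven obstructions in the statement.

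For the converse, I would take a graph $G$ avoiding all seven obstructions; in particular $G$ is $W_4$-free and $S_3$-free, so by Theorem~\ref{thm:pcag-phcag} it suffices to prove that $G$ is a proper circular-arc graph, i.e., to check that each obstruction of Theorem~\ref{thm:proper-forbidden-induced-subgraphs} is absent from $G$. Here $C_\ell^*$ ($\ell\ge 4$) and $\overline{S_3}$ are among our seven; $S_3^*$ contains an induced $S_3$, hence is killed by $S_3$-freeness; each of $\overline{F_1},\overline{F_2},\overline{F_3},\overline{F_4}$ contains an induced $W_4$, hence is killed by $W_4$-freeness. For the family $\overline{C_{2\ell+2}}$: the case $\ell=2$ is $\overline{C_6}$, one of the seven, and for $\ell\ge 3$ the graph $\overline{C_{2\ell+2}}$ is $\overline{C_{2m}}$ with $m=\ell+1\ge 4$, which contains an induced $W_4$. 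For the family $\overline{C_{2\ell-1}^*}$: the cases $\ell=2,3$ give $\overline{C_3^*}$ and $W_5$, among the seven, and for $\ell\ge 4$ the graph $\overline{C_{2\ell-1}^*}$ contains an induced $W_4$. These last containments are precisely those recorded in the remark preceding the corollary, so they may be quoted. Thus $G$ omits every obstruction of Theorem~\ref{thm:proper-forbidden-induced-subgraphs}, so it is a proper circular-arc graph and, by Theorem~\ref{thm:pcag-phcag}, a \phcag{}.

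I expect no genuine obstacle: the argument is pure bookkeeping. The only point that deserves attention is the re-indexing of the two infinite families, so that their small members $\overline{C_6}$, $\overline{C_3^*}$, and $\overline{C_5^*}=W_5$ are kept while all larger members are absorbed by $W_4$-freeness; the fact that $\overline{F_i}$ and the large $\overline{C_{2m}}$, $\overline{C_{2m-1}^*}$ each contain an induced $W_4$ is exactly what the remark just before the corollary asserts, so if one wanted a self-contained write-up the single item worth spelling out would be that verification.
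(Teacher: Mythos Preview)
Your proposal is correct and follows essentially the same route as the paper: the paper derives the corollary from Theorem~\ref{thm:pcag-phcag} together with Theorem~\ref{thm:proper-forbidden-induced-subgraphs}, using precisely the observation (stated just before the corollary) that $S_3^*$ contains $S_3$ while each of $\overline{F_1},\ldots,\overline{F_4}$ and the large members $\overline{C_{2\ell}}$, $\overline{C_{2\ell-1}^*}$ with $\ell\ge 4$ contain an induced $W_4$. Your bookkeeping of the small leftover members $\overline{C_6}$, $\overline{C_3^*}$, and $W_5=\overline{C_5^*}$ is exactly what is needed to match the two lists.
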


\begin{corollary}\label{cor:phcag-small-fis}
  Let $G$ be a connected graph.
  If $G$ does not contain
  $\overline{C_3^*}$, $C_4^*$, $S_3, \overline{S_3}$, $W_4$, $W_5$, or $\overline{C_6}$, then $G$ is a proper Helly circular-arc graph.
\end{corollary}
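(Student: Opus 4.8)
The plan is to deduce this corollary directly from the finite-to-infinite characterization in Corollary~\ref{thm:phcag}, using Lemma~\ref{lem:small-subgraphs-free} to bridge the gap between the finite hypothesis and the infinite forbidden family. Recall that Corollary~\ref{thm:phcag} says a graph is a proper Helly circular-arc graph if and only if it contains none of $\overline{C_3^*}$, $S_3$, $\overline{S_3}$, $W_4$, $W_5$, $\overline{C_6}$, or $C_\ell^*$ with $\ell\ge 4$. The hypothesis of the corollary excludes all of these \emph{except} for the tail $\{C_\ell^*\mid \ell\ge 5\}$, where it only excludes $C_4^*$ explicitly. So the entire task is to show that a connected graph satisfying the hypothesis is automatically $\{C_\ell^*\mid \ell\ge 5\}$-free as well.

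First I would invoke Lemma~\ref{lem:small-subgraphs-free}: since $G$ is connected and contains neither $\overline{C_3^*}$ (which is the claw) nor $S_3$, the lemma gives at once that $G$ is $\{C_\ell^*\mid \ell\ge 5\}$-free. Combining this with the explicit assumption that $G$ contains no $C_4^*$, we conclude that $G$ is $\{C_\ell^*\mid \ell\ge 4\}$-free.

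Now $G$ avoids every graph in the list of Corollary~\ref{thm:phcag}, namely $\overline{C_3^*}$, $S_3$, $\overline{S_3}$, $W_4$, $W_5$, $\overline{C_6}$, and all $C_\ell^*$ with $\ell\ge 4$. Applying that corollary, $G$ is a proper Helly circular-arc graph, which is what we wanted.

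I do not expect any real obstacle here: the mathematical content is already packaged in Lemma~\ref{lem:small-subgraphs-free} and Corollary~\ref{thm:phcag}, both of which are available. The only point requiring a moment of care is the bookkeeping of the correspondence between the named small graphs --- in particular recognizing $\overline{C_3^*}$ as the claw appearing in Lemma~\ref{lem:small-subgraphs-free}, and noting that the lemma's conclusion together with the separately assumed exclusion of $C_4^*$ covers the whole family $\{C_\ell^*\mid \ell\ge 4\}$, with no leftover cases. One might also remark that connectivity is genuinely used (via the lemma), consistent with Proposition~\ref{lem:connected}, since a disconnected graph free of the listed subgraphs need not be a proper Helly circular-arc graph.
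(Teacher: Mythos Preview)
Your proposal is correct and matches the paper's own derivation, which simply notes that the corollary follows from Theorem~\ref{thm:pcag-phcag} (equivalently, via Corollary~\ref{thm:phcag}) together with Lemma~\ref{lem:small-subgraphs-free}. One minor bookkeeping slip: the hypothesis of Lemma~\ref{lem:small-subgraphs-free} is that $G$ avoids $\overline{C_3^*}$ and $\overline{S_3}$ (not $S_3$), but since the present corollary assumes both $S_3$ and $\overline{S_3}$ are absent, your argument goes through unchanged.
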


Recall that proper interval graphs are precisely $\{\overline{C_3^*}, S_3, \overline{S_3}, C_\ell\mid \ell \ge 4\}$-free graphs.
\begin{corollary}\label{cor:phcag-uig}
  Let $G$ be a proper Helly circular-arc graph.
  Then $G$ is a proper interval graph if and only if $G$ does not contain any holes.
\end{corollary}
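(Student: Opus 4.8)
The plan is to derive both implications directly from the forbidden-induced-subgraph characterizations already in hand, so that essentially no new combinatorial work is required. Recall that a \emph{hole} is an induced cycle of length at least four, so the hypothesis ``$G$ contains no hole'' is literally the statement that $G$ is $\{C_\ell\mid\ell\ge4\}$-free.

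For the ``only if'' direction I would simply invoke the forbidden-subgraph characterization of proper interval graphs recalled just above: if $G$ is a proper interval graph then $G$ is $\{\overline{C_3^*}, S_3, \overline{S_3}, C_\ell\mid \ell\ge4\}$-free, in particular it contains no $C_\ell$ with $\ell\ge4$, i.e., no hole. For the ``if'' direction, assume $G$ is a proper Helly circular-arc graph with no hole. By Corollary~\ref{thm:phcag}, $G$ contains none of $\overline{C_3^*}$, $S_3$, $\overline{S_3}$, $W_4$, $W_5$, $\overline{C_6}$, or $C_\ell^*$ with $\ell\ge4$; dropping all but the first three of these obstructions, $G$ is $\{\overline{C_3^*}, S_3, \overline{S_3}\}$-free. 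Intersecting this with the assumption that $G$ is $\{C_\ell\mid\ell\ge4\}$-free, we get that $G$ is $\{\overline{C_3^*}, S_3, \overline{S_3}, C_\ell\mid\ell\ge4\}$-free, which is exactly the characterization of proper interval graphs recalled above; hence $G$ is a proper interval graph. Note this argument needs no connectedness assumption, since the forbidden-subgraph characterization of proper interval graphs applies to all graphs.

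There is no real obstacle here: the statement is a direct corollary of the two characterizations. The only point that deserves a sentence of care is the bookkeeping check that every ``small'' obstruction to being a proper interval graph ($\overline{C_3^*}$, the tent $S_3$, and the net $\overline{S_3}$) already occurs in the obstruction list of Corollary~\ref{thm:phcag}, while the extra obstructions appearing there ($W_4$, $W_5$, $\overline{C_6}$, $C_\ell^*$) are exactly the ones that are ruled out automatically once all holes are forbidden — so nothing else has to be excluded by hand.
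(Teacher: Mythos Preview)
Your proof is correct and follows exactly the derivation the paper intends: the corollary is stated immediately after recalling the characterization of proper interval graphs as the $\{\overline{C_3^*}, S_3, \overline{S_3}, C_\ell\mid \ell\ge4\}$-free graphs, and is meant to be read off from that characterization together with Corollary~\ref{thm:phcag}. One minor remark: in your closing paragraph you say the extra obstructions $W_4$, $W_5$, $\overline{C_6}$, $C_\ell^*$ are ``ruled out automatically once all holes are forbidden''; while that happens to be true (each of them contains a $C_4$ or longer hole), it is not needed for the argument---they are already excluded simply because $G$ is assumed to be a proper Helly circular-arc graph.
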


The following can be viewed as a constructive version of Corollary~\ref{cor:phcag-small-fis}. 

\begin{proposition}\label{prop:find-small-fis}
  Let $G$ be a connected graph.
  In $O(m + n)$ time we can either detect %
  an induced subgraph in $\{ \overline{C_3^*}, C_4^*, S_3, \overline{S_3}, W_4, W_5, \overline{C_6}\}$, or build a proper and Helly arc representation for $G$.
\end{proposition}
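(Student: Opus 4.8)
The plan is to run the linear-time recognition algorithm for proper interval graphs as a subroutine and then patch it to handle the one genuinely circular case. Recall that by Corollary~\ref{cor:phcag-uig}, a proper Helly circular-arc graph is a proper interval graph exactly when it has no hole. So first I would invoke the linear-time recognition/certifying algorithm for proper interval graphs (via Lex-BFS, as in Deng, Hell, and Huang, or Corneil's three-sweep algorithm). If it succeeds, we have a proper interval representation, which we view as a proper and Helly arc representation, and we are done. If it fails, it returns a certificate: a forbidden induced subgraph in $\{\overline{C_3^*}, S_3, \overline{S_3}\}$ or an induced cycle $C_\ell$ with $\ell \ge 4$, all obtainable in $O(m+n)$ time. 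If the certificate is $\overline{C_3^*}$, $S_3$, or $\overline{S_3}$, we output it. If the certificate is $C_4$, note $C_4 = C_4^* $ minus its isolated vertex is not quite what we need, so I would first check whether $C_4$ extends: since $G$ is connected, either $C_4$ together with a fifth vertex yields $C_4^*$, $W_4$, or one of $S_3,\overline{S_3},\overline{C_3^*}$ by the usual case analysis on the neighborhood of that vertex in $C_4$, or $G$ itself is exactly $C_4$, which is already a proper Helly circular-arc graph.

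The remaining case is that the certificate is an induced cycle $C_\ell$ with $\ell \ge 5$. Here I would try to grow this hole into a longest hole and then argue that the whole graph essentially wraps around it, or produce one of the listed obstructions. Concretely: if some vertex $v$ has no neighbor on the current hole $C$, then since $G$ is connected, the shortest-path argument of Lemma~\ref{lem:small-subgraphs-free} produces $\overline{C_3^*}$ (a claw) or $\overline{S_3}$. So we may assume every vertex of $G$ has a neighbor on $C$. Now for each vertex $v\notin V(C)$, examine $N(v)\cap V(C)$; a vertex seeing a single vertex of $C$ together with that vertex's two neighbors on $C$ forms a claw $\overline{C_3^*}$, a vertex seeing exactly two non-consecutive vertices of $C$ forms a hole-and-chord configuration that, combined with the path around $C$ between them, yields a shorter hole plus the extra vertex — iterating, we either shrink to length $4$ (handled above) or uncover $S_3$, $\overline{S_3}$, $W_4$, $W_5$, or $\overline{C_6}$. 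The clean way to package this is: either $C$ has length at most $5$ and we analyze the bounded-size neighborhoods directly, or $C$ has length $\ge 6$ and I would show that any vertex not on $C$ whose neighborhood on $C$ is not a single arc of length $1$, $2$, or $3$ forces $W_4$ or a claw, and three consecutive vertices of $C$ plus a common neighbor force $W_4$ — meaning that once no small obstruction is present, the attachment of off-cycle vertices is so constrained that $C$ extended by its "attachments" is already a proper Helly arc representation, built greedily around the circle in $O(m+n)$ time.

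The main obstacle I anticipate is the bookkeeping in the last case: turning the absence of the seven small obstructions into an explicit proper and Helly arc representation when there is a long hole. One has to show that the hole is in fact a \emph{longest} hole (or handle several holes), that every off-hole vertex attaches to a short consecutive segment of the hole, that these segments are themselves consistently orderable around the circle, and that the representation one writes down is genuinely Helly (no three arcs pairwise intersecting without a common point) and proper (no containment). I expect this to mirror, at the circular level, the standard construction of a proper interval representation from a hole-free connected $\{\overline{C_3^*},S_3,\overline{S_3}\}$-free graph, with the hole playing the role of the "spine"; the novelty is only that the spine closes up. Alternatively — and this may be the cleaner route for the write-up — once we know $G$ is $\{\overline{C_3^*}, C_4^*, S_3, \overline{S_3}, W_4, W_5, \overline{C_6}\}$-free and connected, Corollary~\ref{cor:phcag-small-fis} already guarantees $G$ is a proper Helly circular-arc graph, so it suffices to prove that the \emph{recognition} side runs in linear time and, when it answers "yes," emits a representation; for that I would appeal to the known linear-time construction of proper Helly arc representations for recognized proper Helly circular-arc graphs (cf. Lin, Soulignac, and Szwarcfiter), so that the only thing this proposition genuinely adds over Corollary~\ref{cor:phcag-small-fis} is the \emph{constructive, certifying} linear-time bound, which follows by threading certificates through that recognition algorithm.
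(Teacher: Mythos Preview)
Your primary approach---start from proper interval recognition and, when it returns a hole, build the circular representation by hand---is much harder than necessary, and the hole case is not actually completed in your proposal. You correctly note that the bookkeeping is the obstacle, but that bookkeeping amounts to re-deriving a linear-time certifying recognition algorithm for proper Helly circular-arc graphs from scratch. The paper does not do this.

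The paper's proof is almost trivial once one uses the right black box: it invokes directly the linear-time \emph{certifying} recognition algorithm for proper Helly circular-arc graphs from~\cite{cao-17-nhcag}. That algorithm either returns a proper and Helly arc representation, or returns a forbidden induced subgraph from the list in Corollary~\ref{thm:phcag}, namely one of $\overline{C_3^*}, S_3, \overline{S_3}, W_4, W_5, \overline{C_6}$, or $C_\ell^*$ for some $\ell\ge 4$. The first six, together with $C_4^*$, are exactly the seven graphs in the statement, so the only remaining case is when the certificate is a $C_\ell^*$ with $\ell\ge 5$: a hole $H$ together with an isolated vertex $v$. Then one runs a BFS to find a shortest path from $v$ to $H$, takes its last two internal vertices $x,y$ (so $y$ has a neighbor on $H$ and $x$ has none), and does the three-way case split of Lemma~\ref{lem:small-subgraphs-free} on $N(y)\cap V(H)$ to extract a claw or an $\overline{S_3}$. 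That is the entire proof.

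Your ``alternative route'' paragraph is heading in this direction, but it is stated backwards: you propose to first certify that $G$ is free of the seven small graphs and then build a representation, whereas the point is to run the PHCA certifying recognizer first and only post-process the one certificate ($C_\ell^*$, $\ell\ge5$) that is not already on the target list. Also, you would need a \emph{certifying} recognizer, not just a recognizer; the relevant reference is~\cite{cao-17-nhcag}, not Lin--Soulignac--Szwarcfiter.
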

\begin{proof}
  We start by calling the certifying recognition algorithm for \phcag{s} \cite{cao-17-nhcag}.
  If a proper and Helly arc representation is found, then we return it.  In the rest, the algorithm returns a forbidden induced subgraph.  If the forbidden induced subgraph is one of 
  $\overline{C_3^*}, C_4^*, S_3, \overline{S_3}, W_4, W_5$, and $\overline{C_6}$, then we return its vertex set.  Otherwise, the outcome must be a $C_{\ell}^*$, where $\ell\ge 5$, and let $H = v_{1} \cdots v_{\ell}$ be the hole and $v$ the isolated vertex.  We can use breadth-first search to find a shortest path from $v$ to $H$.  Let $x$ and $y$ be the last two vertices on this path that are not on $H$; note that $y$ is adjacent to $H$ while $x$ is not.  If $y$ has only one neighbor $v_i$ on $H$, then we return $\{v_i, y, v_{i-1}, v_{i+1}\}$ as a claw.  If $y$ has exactly two consecutive neighbors on $H$, say, $v_{i}$ and $v_{i+1}$, then we return $\{x, y, v_{i-1}, v_{i}, v_{i+1}, v_{i+2}\}$ as an $\overline{S_3}$.  In the rest, $y$ is adjacent to two vertices $v_{i}$ and $v_{j}$ on $H$ with $|i - j| > 1$, then we return $\{y, x, v_i, v_{j}\}$ as a claw.
\end{proof}

A graph is a \emph{permutation graph} if its vertices can be assigned to line segments between two parallel lines such that there is an edge between two vertices if and only if their corresponding segments intersect.
The class of permutation graphs has a large number of forbidden induced subgraphs \cite{gallai-67-transitive-orientation}.  Fortunately, most of them contain an odd cycle, and thus the structures of forbidden induced subgraphs of bipartite permutation graphs are far simpler.
\begin{theorem}[\cite{gallai-67-transitive-orientation}]
  \label{thm:bipartite-permutation}
  A graph is a bipartite permutation graph if and only if it is free of $F_{1}$, $F_{2}$, $F_{3}$, ${C_{2 \ell + 2}}$, and ${C_{2 \ell - 1}}$ with $\ell \ge 2$.
\end{theorem}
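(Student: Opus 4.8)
The natural route is to reduce the statement to Gallai's characterization of \emph{comparability} graphs. Recall the classical fact (Pnueli--Lempel--Even) that a graph $P$ is a permutation graph if and only if both $P$ and $\overline P$ are comparability graphs, and observe that every bipartite graph is a comparability graph: orienting each edge from its endpoint in one part to its endpoint in the other part yields a relation with no directed path of length two, hence a (vacuously transitive) partial order. Consequently a bipartite graph $G$ is a permutation graph if and only if $\overline G$ is a comparability graph, so the class of bipartite permutation graphs is precisely the intersection of the class of all bipartite graphs with the class of co-comparability graphs.

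The first thing to do is to identify the minimal obstructions of this intersection. If $H$ is an inclusion-minimal graph that is not a bipartite permutation graph, then either (i) $H$ is not bipartite while all of its proper induced subgraphs are, or (ii) $H$ is bipartite, $\overline H$ is not a comparability graph, and every proper induced subgraph of $H$ is a bipartite permutation graph. In case (i), since a non-bipartite graph always contains an induced odd cycle (a shortest odd cycle is chordless), $H$ must itself be an odd cycle; these are exactly $C_3, C_5, C_7, \dots$, i.e.\ the family $\{C_{2\ell-1}\mid \ell\ge 2\}$. In case (ii), $\overline H$ contains some minimal forbidden induced subgraph $M$ of comparability graphs; then $H$ contains $\overline M$ as an induced subgraph, and $\overline M$ is bipartite because it is an induced subgraph of the bipartite graph $H$; minimality of $H$ forces $H=\overline M$. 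Hence the minimal obstructions in case (ii) are exactly the complements of those members of Gallai's list whose complement is bipartite, after pruning to the inclusion-minimal ones.

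It then remains to go through Gallai's explicit enumeration of the minimal forbidden induced subgraphs of comparability graphs and retain only the \emph{co-bipartite} members $M$ (those with $\overline M$ bipartite). Every member of that list that contains an odd hole or an odd antihole is not co-bipartite and is discarded; what survives is the antihole family $\overline{C_{2k}}$ with $k\ge 3$, whose complements are the even cycles $\{C_{2\ell+2}\mid \ell\ge 2\}$, together with the three sporadic graphs $\overline{F_1},\overline{F_2},\overline{F_3}$, whose complements are $F_1,F_2,F_3$. One checks directly that $F_1,F_2,F_3$ and the even cycles of length at least six contain no odd cycle and are pairwise incomparable under the induced-subgraph relation, so no further pruning occurs. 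This yields the ``only if'' direction (each listed graph is either non-bipartite or bipartite with non-comparability complement, hence not a bipartite permutation graph), and, since the class is hereditary, the ``if'' direction as well: a bipartite graph avoiding all listed graphs has a complement avoiding every Gallai obstruction, so that complement is a comparability graph and the graph is a (bipartite) permutation graph.

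\textbf{Where the difficulty lies.}
The conceptual content sits in the reduction of the first two paragraphs; the genuine labour is the last step --- working through Gallai's list, which has several infinite families and a number of sporadic graphs, and verifying that its co-bipartite members are exactly the even-antihole family together with $\overline{F_1},\overline{F_2},\overline{F_3}$, then confirming minimality. If one wishes to avoid quoting Gallai's list in full, the alternative is a self-contained argument: given a bipartite $G$ with parts $X$ and $Y$ and no induced $F_1$, $F_2$, or $F_3$, attempt to construct a \emph{strong ordering} of $V(G)$ (orderings of $X$ and of $Y$ whose edges form a monotone ``staircase'' pattern, known to exist precisely when $G$ is a permutation graph), and show that the construction can only fail in the presence of an even hole of length at least six. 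This trades Gallai's enumeration for a longer but entirely elementary case analysis on local configurations around the partially built ordering.
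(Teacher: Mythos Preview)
The paper does not prove this theorem; it is stated with a citation to Gallai and no proof is given. Your proposal is therefore not being compared against any argument in the paper, but rather explains \emph{why} the result is attributed to Gallai: it falls out of his enumeration of minimal non-comparability graphs once one combines it with the Pnueli--Lempel--Even characterization of permutation graphs and the trivial observation that bipartite graphs are comparability graphs. That derivation is sound, and your case split into ``non-bipartite minimal obstruction $=$ odd cycle'' versus ``bipartite obstruction $=$ complement of a co-bipartite Gallai obstruction'' is the right way to organize it.

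The only part that is asserted rather than argued is the claim that, among Gallai's list, the co-bipartite members are precisely $\overline{C_{2k}}$ for $k\ge 3$ and $\overline{F_1},\overline{F_2},\overline{F_3}$. You flag this honestly as ``the genuine labour,'' and that is accurate: Gallai's list has two infinite families and a dozen or so sporadic graphs, and one must verify for each sporadic graph whether its complement is bipartite, and that nothing beyond $\overline{F_1},\overline{F_2},\overline{F_3}$ survives. This is routine but not written down here, so a reader cannot check it from your text alone. If you want the proof to be self-contained, either reproduce the relevant portion of Gallai's list and dispatch each entry, or take the alternative route you sketch (building a strong ordering directly and showing failure forces one of the listed subgraphs); the latter is the approach taken in the Spinrad--Brandst\"adt--Stewart line of work on bipartite permutation graphs, and is arguably cleaner than quoting Gallai.
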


The following statement correlates proper circular-arc graphs and bipartite permutation graphs.
\begin{theorem}[Folklore]\label{thm:co-bipartitie}
  The following are equivalent on a graph $G$:
  \begin{enumerate}[i)]
  \item $G$ is a proper circular-arc graph and $\overline{G}$ is bipartite; and
  \item $\overline{G}$ is a bipartite permutation graph.
  \end{enumerate}
\end{theorem}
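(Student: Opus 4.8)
The plan is to derive both implications directly from the two forbidden-induced-subgraph characterizations already in hand --- Tucker's list for proper circular-arc graphs (Theorem~\ref{thm:proper-forbidden-induced-subgraphs}) and Gallai's list for bipartite permutation graphs (Theorem~\ref{thm:bipartite-permutation}) --- using only the elementary facts that $G$ contains $F$ as an induced subgraph exactly when $\overline G$ contains $\overline F$, and that a bipartite graph has no odd cycle, in particular no triangle. So the whole argument is a comparison of the two lists under complementation.

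First I would prove (i)$\Rightarrow$(ii). Suppose $G$ is a proper circular-arc graph and $\overline G$ is bipartite. By Theorem~\ref{thm:proper-forbidden-induced-subgraphs}, $G$ is free of $\overline{F_1}$, $\overline{F_2}$, $\overline{F_3}$ and of every $\overline{C_{2\ell+2}}$ with $\ell\ge 2$, so $\overline G$ is free of $F_1$, $F_2$, $F_3$ and of every $C_{2\ell+2}$; and since $\overline G$ is bipartite it is free of every odd cycle $C_{2\ell-1}$. Theorem~\ref{thm:bipartite-permutation} then gives that $\overline G$ is a bipartite permutation graph. For (ii)$\Rightarrow$(i), suppose $\overline G$ is a bipartite permutation graph. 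Then $\overline G$ has no $C_{2\ell-1}$ with $\ell\ge 2$, hence is bipartite --- the second assertion of~(i) --- and it remains to check that $G$ avoids every member of Tucker's list. The obstructions $\overline{F_1}$, $\overline{F_2}$, $\overline{F_3}$ and $\overline{C_{2\ell+2}}$ ($\ell\ge 2$) are excluded at once, because $\overline G$ is free of $F_1$, $F_2$, $F_3$ and of every $C_{2\ell+2}$. For the five remaining obstructions --- $S_3^*$, $C_\ell^*$ with $\ell\ge 4$, $\overline{S_3}$, $\overline{F_4}$, and $\overline{C_{2\ell-1}^*}$ with $\ell\ge 2$ --- the point is that each of their complements, namely $\overline{S_3^*}$, $\overline{C_\ell^*}$, $S_3$, $F_4$, and $C_{2\ell-1}^*$, contains a triangle or an odd cycle and hence cannot occur in the bipartite graph $\overline G$; so $G$ avoids all of them, and Theorem~\ref{thm:proper-forbidden-induced-subgraphs} concludes that $G$ is a proper circular-arc graph.

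The only real computation, and the step I expect to be the bottleneck, is this last verification, although it is entirely routine: $\overline{S_3^*}$ contains the net $\overline{S_3}$, which contains a triangle; $\overline{C_\ell^*}$ is $\overline{C_\ell}$ with a universal vertex added, and $\overline{C_\ell}$ has an edge whenever $\ell\ge 4$, so $\overline{C_\ell^*}$ has a triangle; $S_3$ and $F_4$ each contain a triangle, visible in Figures~\ref{fig:proper-cag} and~\ref{fig:small-graphs}; and $C_{2\ell-1}^*$ literally contains the odd cycle $C_{2\ell-1}$. In short, the crux is matching Tucker's nine families of obstructions against Gallai's five and clearing away the five surplus ones using bipartiteness; everything else is purely formal. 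A more hands-on alternative would start from the ``staircase'' vertex orderings of a bipartite permutation graph $\overline G$ and build a proper arc representation of $G$ directly, but the comparison of forbidden-subgraph lists is shorter and relies only on results already stated.
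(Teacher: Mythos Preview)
Your proposal is correct and follows essentially the same approach as the paper's own proof: both directions are obtained by comparing Tucker's forbidden list (Theorem~\ref{thm:proper-forbidden-induced-subgraphs}) with Gallai's (Theorem~\ref{thm:bipartite-permutation}) under complementation, and the surplus obstructions in the (ii)$\Rightarrow$(i) direction are dispatched by observing that their complements contain a triangle or an odd cycle. Your write-up is slightly more explicit in justifying why each leftover complement fails to be bipartite, but the argument is the same.
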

\begin{proof}
  i) $\rightarrow$ ii).  By Theorem~\ref{thm:proper-forbidden-induced-subgraphs}, $\overline{G}$ is free of $F_{1}$, $F_{2}$, $F_{3}$, and ${C_{2 \ell + 2}}, \ell \ge 2$.  On the other hand, since $\overline{G}$ is bipartite, it does not contain any odd cycle,  Thus, the claim then follows from Theorem~\ref{thm:bipartite-permutation}.

  ii) $\rightarrow$ i).
  By Theorem~\ref{thm:bipartite-permutation}, $\overline{G}$ is free of $F_{1}$, $F_{2}$, $F_{3}$, ${C_{2 \ell + 2}}$, and ${C_{2 \ell - 1}}$ with $\ell \ge 2$.  Moreover, $\overline{G}$ cannot contain $S_3$, $\overline{S_3^*}$, $F_{4}$, or $\overline{C_\ell^*}$ with $\ell \ge 4$ because each of them contains a triangle; $\overline{G}$ is free of ${C_{2 \ell - 1}^*}$ with $\ell \ge 2$ because ${C_{2 \ell - 1}^*}$ contains an odd cycle.
\end{proof}

The following is complement to Proposition~\ref{lem:connected} in a sense.  Note that (proper) circular-arc graphs that are co-bipartite have played crucial roles in understanding these graph classes \cite{tucker-74-structures-cag}.
\begin{proposition}\label{lem:co-bipartitie}
  Let $G$ be a proper circular-arc graph.  If $\overline{G}$ is not connected,  then $\overline{G}$ is a bipartite permutation graph.
\end{proposition}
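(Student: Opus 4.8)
The plan is to reduce Proposition~\ref{lem:co-bipartitie} to Theorem~\ref{thm:co-bipartitie} by showing that the hypotheses force $\overline G$ to be bipartite, after which the conclusion is immediate. So the real content is: \emph{if $G$ is a proper circular-arc graph whose complement $\overline G$ is disconnected, then $\overline G$ is bipartite}. Once that is established, $G$ is a proper circular-arc graph with $\overline G$ bipartite, which is condition (i) of Theorem~\ref{thm:co-bipartitie}, and condition (ii) gives that $\overline G$ is a bipartite permutation graph, as desired.

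To prove that $\overline G$ is bipartite, first I would pick up an arc representation of $G$. Since $\overline G$ is disconnected, $G$ is co-connected's failure means $V(G)$ partitions into two nonempty sets $A$ and $B$ with all edges present between $A$ and $B$ in $G$ (i.e.\ $A$ and $B$ are each, in $G$, ``completely joined'' to the other). In particular $G$ is connected, and moreover $G$ is not a proper interval graph: if it were, an interval representation would leave a point of the circle uncovered, but the arcs of every vertex of $A$ must intersect those of every vertex of $B$, and with two nonempty completely-joined parts one checks the arcs must cover the whole circle (this is exactly the remark in the paragraph following Theorem~\ref{thm:proper-forbidden-induced-subgraphs}, and the argument of Proposition~\ref{lem:connected} in reverse). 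Hence $G$ is genuinely circular and not interval. The goal is to show each of $A$ and $B$ induces a clique in $G$, equivalently an independent set in $\overline G$; since $A$ and $B$ cover $V(G)$, this makes $\overline G$ bipartite with parts $A$ and $B$.

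For the clique claim I would argue by contradiction: suppose $A$ contains two nonadjacent vertices $u,u'$ in $G$. Take any $w\in B$. Then $w$ is adjacent (in $G$) to both $u$ and $u'$. Now work in the proper arc representation. Because $u\not\sim u'$, their arcs $\mathrm{arc}(u),\mathrm{arc}(u')$ are disjoint, so they split the circle into two arcs; $\mathrm{arc}(w)$ meets both $\mathrm{arc}(u)$ and $\mathrm{arc}(u')$. One then shows that if additionally $B$ contains two nonadjacent vertices, one obtains one of the small forbidden subgraphs listed in Theorem~\ref{thm:proper-forbidden-induced-subgraphs} — concretely, four vertices $u,u'\in A$, $w,w'\in B$ with $u\not\sim u'$, $w\not\sim w'$ but all four cross edges present form $\overline{C_4}=2K_2$'s complement; checking its position in the forbidden list (it is $\overline{C_{2\ell+2}}$ with $\ell=1$, i.e.\ a $C_4$ in the complement — one must be careful here since $C_4$ itself \emph{is} a proper circular-arc graph) tells us that in fact at most one of $A$, $B$ can fail to be a clique. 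So without loss of generality $B$ is a clique, and it remains to rule out $A$ having a non-edge. With $B$ a clique whose arcs, together, cover enough of the circle (here I use that $G$ is not interval, so the arcs of $B$ wrap around), any non-adjacent pair in $A$ together with suitable vertices of $B$ produces a $\overline{C^*_{2\ell-1}}$ or one of $\overline{F_1},\dots,\overline{F_4}$ as an induced subgraph — a contradiction with $G$ being a proper circular-arc graph. Therefore both $A$ and $B$ are cliques in $G$, $\overline G$ is bipartite, and Theorem~\ref{thm:co-bipartitie} finishes the proof.

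The main obstacle I anticipate is the careful case analysis in the last step: cleanly extracting a \emph{named} forbidden induced subgraph from the arc geometry when $A$ contains a non-edge. One has to handle how many vertices of $B$ are needed to ``see'' the wrap-around and to connect the two sides of the non-edge in $A$, and the count of those vertices is what selects between the finite obstructions ($\overline{F_1},\dots,\overline{F_4}$) and the infinite family $\overline{C^*_{2\ell-1}}$. A cleaner alternative, which I would actually prefer to write, is to avoid the geometry entirely: observe that a disconnected $\overline G$ makes $G = G_1 + G_2$ a join, note that the complement of a join is a disjoint union, and invoke the known fact (essentially Tucker's observation quoted in the introduction, and recoverable from Theorem~\ref{thm:proper-forbidden-induced-subgraphs} by checking that $K_2+K_2$'s relevant complement and all the $\overline{F_i}$, $\overline{C^*_{2\ell-1}}$ contain a triangle on one side) that a co-connected-failing proper circular-arc graph has independence number at most $2$ split across the two sides — giving bipartiteness of $\overline G$ directly. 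Either route ends the same way via Theorem~\ref{thm:co-bipartitie}.
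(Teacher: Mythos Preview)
Your overall plan---reduce to Theorem~\ref{thm:co-bipartitie} by showing $\overline G$ is bipartite---is exactly right, and is what the paper does. But your execution of the bipartiteness step has a genuine gap.

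The central error is your target statement: you aim to show that each co-component $A$, $B$ of $G$ is a clique in $G$ (equivalently, an independent set in $\overline G$). This is simply false. Take $G=C_4$: it is a proper circular-arc graph, $\overline G = 2K_2$ is disconnected with two components each equal to $K_2$, and neither component is an independent set. The bipartition witnessing that $\overline G$ is bipartite need not coincide with its decomposition into connected components. You half-notice this yourself when you remark that the four-vertex configuration you build is only $\overline{C_4}$ and that $C_4$ is a proper circular-arc graph---but that observation should have told you the whole ``each side is a clique'' program cannot succeed, not merely that one sub-case needs more care. All the subsequent geometry and case analysis is aimed at the wrong conclusion.

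The paper's argument is a two-line affair that you essentially have the ingredients for but never assemble. By Theorem~\ref{thm:proper-forbidden-induced-subgraphs}, $\overline G$ contains no $C_{2\ell-1}^*$ for $\ell\ge 2$. If $\overline G$ were not bipartite it would contain an induced odd cycle, necessarily lying in a single component; any vertex from another component then yields a $C_{2\ell-1}^*$, a contradiction. Hence $\overline G$ is bipartite, and Theorem~\ref{thm:co-bipartitie} finishes. Your ``cleaner alternative'' gestures toward this (you mention the $\overline{C^*_{2\ell-1}}$ family) but the phrasing about ``independence number at most $2$ split across the two sides'' and ``contain a triangle on one side'' is not the right formulation; the point is not about triangles specifically but about \emph{any} odd cycle plus an isolated vertex.
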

\begin{proof}
  Since ${G}$ is a proper circular-arc graph, $\overline{G}$ does not contain $C_{2\ell - 1}^*$ or $C_{2\ell + 2}$ with $\ell\geq 2$.  If $\overline{G}$ is not connected and contain any odd cycle, then this cycle and any vertex from another component form a $C_{2\ell - 1}^*$.  Therefore, $\overline{G}$ is bipartite, and the statement follows from Theorem~\ref{thm:co-bipartitie}.
\end{proof}

\section{Deletions to proper Helly circular-arc graphs}

We first study the proper Helly circular-arc vertex deletion problem.
We may assume without loss of generality that the input graph cannot be made a proper interval graph by removing $k$ vertices.
Therefore, the resulting graph after removing any $k$-solution is connected by Proposition~\ref{lem:connected}.
An \FPT{} algorithm is immediate from Corollary~\ref{cor:phcag-small-fis}: after destroying all the copies of $\overline{C_3^*}$, $C_4^*$, $S_3, \overline{S_3}$, $W_4$, $W_5$, and $\overline{C_6}$ in $G$ by standard branching, we return all vertices except those in a maximum-order component.  A similar (and simpler) approach works for the proper Helly circular-arc edge deletion problem.
The focus of the following proof is thus on efficient implementations.

\begin{theorem}\label{thm:algs-phcag-deletion}
  The proper Helly circular-arc vertex deletion problem and the proper Helly circular-arc edge deletion problem can be solved in time $O(6^k\cdot (m + n))$ and $O(10^k\cdot (m + n))$, respectively.
\end{theorem}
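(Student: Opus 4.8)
The plan is to follow the branching-plus-finish strategy outlined above, with the sole remaining work being a careful running-time analysis. For both problems I would first invoke Proposition~\ref{prop:find-small-fis} (or, equivalently, Corollary~\ref{cor:phcag-small-fis}) repeatedly to destroy all copies of the seven small forbidden induced subgraphs $\overline{C_3^*}, C_4^*, S_3, \overline{S_3}, W_4, W_5, \overline{C_6}$. Since the largest of these has six vertices, each detected copy yields a branching set of size at most six (for vertex deletion) or, for edge deletion, a set of at most $\binom{6}{2}=15$ edges — but here I would be more careful: only the edges actually present in the forbidden subgraph can be deleted to destroy it, and one checks that $\overline{C_6}$, which has $6$ vertices, has $9$ edges, while the other six graphs have at most $10$ edges each; a short case inspection gives a uniform bound of $10$ candidate edges per forbidden subgraph, hence a branching factor of $10$ in the worst case. (If a sharper per-graph count were available one could improve the base, but $10$ suffices for the stated bound.) Each branch decreases $k$ by one, so the search tree has size $O(6^k)$ for vertex deletion and $O(10^k)$ for edge deletion, and by Proposition~\ref{prop:find-small-fis} each node is processed in $O(m+n)$ time; this already gives the claimed time bounds for the branching phase.

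Next I would argue that at a leaf of the search tree the problem is essentially solved. At such a leaf $G$ is free of all seven small forbidden subgraphs but we still have a parameter budget $k' \le k$ and must decide whether $k'$ further deletions can reach a proper Helly circular-arc graph. Here the dichotomy from the introduction applies. For vertex deletion: if $G$ is connected, Corollary~\ref{cor:phcag-small-fis} says $G$ is already a proper Helly circular-arc graph and we answer yes with the empty set; if $G$ is disconnected, then by Proposition~\ref{lem:connected} any proper Helly circular-arc graph obtained from $G$ by vertex deletions is either connected — in which case we must delete all but one component, so we check whether the total size of all but the largest component is at most $k'$ — or it is a proper interval graph, a case we have already excluded by our initial assumption (made without loss of generality at the start of the section) that $G$ cannot be turned into a proper interval graph by removing $k$ vertices. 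Computing component sizes and the largest component takes $O(m+n)$ time. For edge deletion the leaf case is even easier: edge deletions never increase the number of components, so if $G$ is disconnected we immediately answer no (a connected proper Helly circular-arc graph is unreachable, and the proper-interval alternative was assumed away), and if $G$ is connected Corollary~\ref{cor:phcag-small-fis} again says we are done. All of this is $O(m+n)$ per leaf, which is absorbed into the bound.

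The final bookkeeping step is to handle the global reductions cleanly: before branching we separate out the ``trivial'' instances where $n = O(k)$ (solved by brute force within the time bound) and otherwise restrict attention to the unique relevant large component of $G$, as described in the introduction; and we run the proper interval vertex/edge deletion algorithms of Cao~\cite{cao-17-unit-interval-editing} up front, which is polynomial and lets us assume the proper-interval alternative does not apply. I expect the main obstacle to be purely a matter of careful accounting rather than a conceptual difficulty: getting the branching base exactly right for edge deletion (justifying the bound of $10$ on the number of deletable edges in a copy of each of the seven forbidden subgraphs, and checking that $\overline{S_3}$, which is the $6$-vertex net, and $\overline{C_6}$ do not push it above $10$), and ensuring that every leaf and every pre-processing shortcut is implementable in linear time so that the $O(6^k\cdot(m+n))$ and $O(10^k\cdot(m+n))$ factorization is honest. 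None of these steps requires new structural insight beyond Corollaries~\ref{cor:phcag-small-fis} and Propositions~\ref{lem:connected}, \ref{prop:find-small-fis} already established.
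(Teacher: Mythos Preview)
Your proposal is correct and follows essentially the same route as the paper: first dispose of the proper-interval alternative via~\cite{cao-17-unit-interval-editing}, then branch on the seven small obstructions (six vertices, at most ten edges---the paper pinpoints $W_5$ as the extremal case), and finish at each leaf by a connected-component count for vertex deletion or a connectivity test for edge deletion.

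Two small slips to fix in the write-up. First, you wrote that ``edge deletions never increase the number of components''; you mean \emph{never decrease}---deleting edges can only disconnect, so a disconnected graph stays disconnected and you correctly conclude ``no.'' Second, the proper interval vertex/edge deletion subroutines of~\cite{cao-17-unit-interval-editing} are not polynomial but $O(6^k\cdot(m+n))$ and $O(4^k\cdot(m+n))$ respectively; this is fine since both are dominated by the claimed bounds, but you should state it accurately. Also note that Proposition~\ref{prop:find-small-fis} requires a connected input, so in the branching phase you must apply it component-by-component (as the paper does); your ``restrict to the unique large component'' shortcut from the introduction does not survive the branching, since deleting vertices or edges may disconnect the graph mid-search.
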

\begin{proof}
  Let ($G, k$) be an instance of proper Helly circular-arc vertex deletion.  Our algorithm proceeds as follows.
  We start by calling the algorithm of Cao~\cite{cao-17-unit-interval-editing} to check whether there is a set $V_-$ of at most $k$ vertices such that $G - V_-$ is a proper interval graph.  If the set is found, then we return ``yes.''  In the rest, we look for a solution $V_-$ such that $G - V_-$ is \textit{not} a proper interval graph.  By Proposition~\ref{lem:connected}, (note that a proper Helly circular-arc graph is a proper circular-arc graph,) $G - V_-$ is connected.

  For the general case, the algorithm solves the problem by making recursive calls to itself; we return ``no'' directly for a recursive call in which $k < 0$.
  For each component $C$ of $G$, we call the algorithm of Proposition~\ref{prop:find-small-fis}.  If an induced subgraph $G[F]$ is found, then the algorithm calls itself $|F|$ times, each with a new instance $(G - v, k - 1)$ for some $v\in F$.  Since we need to delete at least one vertex from $F$, the original instance $(G, k)$ is a yes-instance if and only if at least one of the instances $(G - v, k - 1)$ is a yes-instance.  Now that $G$ is free of $\overline{C_3^*}$, $C_4^*$, $S_3, \overline{S_3}$, $W_4$, $W_5$, and $\overline{C_6}$, every component of $G$ is a proper Helly circular-arc subgraph (Corollary~\ref{cor:phcag-small-fis}).  We find a component $C$ of $G$ that has the maximum order.  We return ``yes'' if $|V(C)| \ge n - k$, when $V(G)\setminus V(C)$ is a solution, or ``no'' otherwise.
    Since each of $\overline{C_3^*}, C_4^*, S_3, \overline{S_3}, W_4, W_5$, and $\overline{C_6}$ has at most $6$ vertices, at most $6$ recursive calls are made, all with parameter value $k - 1$.  By Proposition~\ref{prop:find-small-fis}, each recursive call can be made in $O(m + n)$ time.  Therefore, the total running time is $O(6^k\cdot (m + n))$.

    The algorithm for the edge deletion problem is even simpler.  Again, we start by calling the algorithm for proper interval edge deletion problem~\cite{cao-17-unit-interval-editing}, which takes time $O(4^{k} \cdot (m + n))$.  We proceed only when the answer is ``no.''
    In the recursive calls for the general case, we always return ``no'' whenever $k$ is negative or $G$ becomes disconnected; note that a disconnected graph cannot be made connected by edge deletions.   We call the algorithm of Proposition~\ref{prop:find-small-fis}, and return ``yes'' if $G$ is already a proper Helly circular-arc graph.  Otherwise, an induced subgraph $F$ is found.  The algorithm calls itself $|E(F)|$ times, each with a new instance $(G - u v, k - 1)$ for some edge $uv$ in $G[F]$.  By Proposition~\ref{prop:find-small-fis}, each recursive call can be made in $O(m + n)$ time.  Therefore, the total running time is $O(10^k\cdot (m + n))$, where $10$ is the number of edges in a $W_{5}$.
\end{proof}

It is straightforward to adapt an approximation algorithm for the proper Helly circular-arc vertex deletion problem from the parameterized algorithm in Theorem~\ref{thm:algs-phcag-deletion}.
\begin{theorem}\label{thm:alg-approximation}
  There is an $O(n m + n^2)$-time approximation algorithm of approximation ratio $6$ for the minimization version of the proper Helly circular-arc vertex deletion problem.
\end{theorem}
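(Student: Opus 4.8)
The plan is to replace the branching in the proof of Theorem~\ref{thm:algs-phcag-deletion} by greedy obstruction packing, the standard route from a branching algorithm to an approximation. I maintain a deletion set $D$, initially empty, and repeatedly run the algorithm of Proposition~\ref{prop:find-small-fis} on each connected component of $G - D$; whenever it outputs one of the seven small forbidden induced subgraphs $\overline{C_3^*}, C_4^*, S_3, \overline{S_3}, W_4, W_5, \overline{C_6}$, I move all of its at most six vertices into $D$ and continue. The loop halts once every component of $G' := G - D$ admits a proper and Helly representation, so by Corollary~\ref{cor:phcag-small-fis} every component of $G'$ is a proper Helly circular-arc graph; each round deletes at least four vertices and costs $O(m+n)$ time, so the loop runs within $O(nm + n^2)$ time. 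The obstructions extracted are pairwise vertex-disjoint, and each of them is a forbidden induced subgraph of proper Helly circular-arc graphs (Corollary~\ref{thm:phcag}); hence every solution, in particular an optimum solution $V_-$, must contain a vertex from each, so if $t$ obstructions were extracted, then $t \le |V_-| = \mathrm{OPT}$ and $|D| \le 6t \le 6\,\mathrm{OPT}$.

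If $G'$ is connected, or more generally if every component of $G'$ is hole-free (hence a proper interval graph by Corollary~\ref{cor:phcag-uig}), then $G'$ is itself a proper Helly circular-arc graph and I return $D$, already within $6\,\mathrm{OPT}$. Otherwise $G'$ is disconnected and one of its components contains a hole, so $D$ does not suffice and the disconnectedness must be repaired; here I compute two further candidates and output the smallest valid one. Candidate (a) is $D$ together with all vertices lying outside a largest component of $G'$; this is always valid, since a single component of $G'$ is a proper Helly circular-arc graph. Candidate (b) is the output of a polynomial-time constant-factor approximation for the proper interval vertex deletion problem on $G$ (such an algorithm, of ratio $6$, is available, e.g.\ by the analogous obstruction-packing argument together with the fact from Lemma~\ref{lem:small-subgraphs-free} that destroying the small obstructions of proper interval graphs leaves a proper circular-arc graph, on which the problem is polynomial); since a proper interval graph is a proper Helly circular-arc graph, candidate (b) is valid.

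For the ratio, fix an optimum solution $V_-$ and let $G^* := G - V_-$, a proper Helly circular-arc graph. If $G^*$ is disconnected, it is a proper interval graph by Proposition~\ref{lem:connected}, so the proper interval vertex deletion optimum of $G$ equals $\mathrm{OPT}$ and candidate (b) has size at most $6\,\mathrm{OPT}$. If $G^*$ is connected, I argue candidate (a) is within $6\,\mathrm{OPT}$. Since $G^*$ is $C_\ell^*$-free for every $\ell \ge 4$ (Corollary~\ref{thm:phcag}), every hole of $G^*$ dominates $G^*$; picking a hole $H$ of $G^*$ vertex-disjoint from $D$ (it lives inside a single component $C^*$ of $G'$, whose only neighbours outside $C^*$ lie in $D$), domination forces $V(G) \subseteq V(C^*) \cup D \cup V_-$. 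Write $a := |V_- \cap D|$ and $b := |V_- \setminus D|$. One vertex of $V_-$ lies in each of the $t$ disjoint packed obstructions, all contained in $D$, so $t \le a$ and $|D| \le 6t \le 6a$; and the vertices outside $C^* \cup D$ form a subset of $V_- \setminus D$, so deleting all but $C^*$, hence all but a largest component, costs at most $b$. Thus candidate (a) has size at most $6a + b \le 6(a + b) = 6\,\mathrm{OPT}$.

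The main obstacle I expect is precisely this last case: arguing cleanly that when $G^*$ is connected one can always pick a dominating hole of $G^*$ avoiding $D$, with a supplementary argument for the degenerate situation where every hole of $G^*$ meets $D$ (where $G - D$ itself becomes nearly proper interph interval and one falls back on candidate (b)); and, relatedly, pinning down the promised ratio-$6$ approximation for proper interval vertex deletion and checking it runs in $O(nm + n^2)$ time. The obstruction-packing bookkeeping, the validity of each candidate, and the running-time accounting are routine; the number $6$ is forced by the six-vertex obstructions $S_3, \overline{S_3}, W_5, \overline{C_6}$, and the inequality $6a + b \le 6(a+b)$ is what keeps repairing disconnectedness from pushing the ratio past $6$.
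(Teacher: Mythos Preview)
Your overall strategy---greedy obstruction packing to build $D$, then taking the better of ``keep the largest component of $G-D$'' and ``reduce to proper interval vertex deletion''---is exactly the paper's approach. The gap you flagged yourself, however, is real and is not closed by your suggested fallback. When $G^*=G-V_-$ is connected but every hole of $G^*$ meets $D$, the graph $G^*-D$ is hole-free and hence a proper interval graph, but this tells you only that $D\cup(V_-\setminus D)$ is a proper interval deletion set for $G$; the proper interval optimum of $G$ is then bounded by $|D|+b\le 6a+b$, so your candidate~(b) could be as large as $6(6a+b)$, which is not within a factor~$6$ of $a+b$. There is no reason a hole of $G^*$ avoiding $D$ should exist in general.

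The fix is to run the $6$-approximation for proper interval vertex deletion on $G-D$ rather than on $G$, and to take $D$ together with its output as candidate~(b). The analysis then cases on whether $G^*-D$ (equivalently, an optimal solution of $G-D$) is a proper interval graph: if so, the proper interval optimum of $G-D$ is at most $|V_-\setminus D|=b$ and candidate~(b) has size at most $|D|+6b\le 6a+6b$; if not, $G^*-D$ is connected by Proposition~\ref{lem:connected}, hence lies in a single component of $G-D$, and candidate~(a) has size at most $|D|+b\le 6a+b$. No hole-avoidance argument is needed. This is precisely how the paper proceeds, and with this single change your proof goes through.
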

\begin{proof}
Let $G$ be the input graph.  We start by creating an empty set $X$.  We call the algorithm in Proposition~\ref{prop:find-small-fis} with $G - X$.  If the algorithm returns a vertex set, then add all of them to $X$.  We continue this process until each component of $G - X$ is a proper Helly circular-arc graph.
We call the $6$-approximation algorithm of Cao~\cite{cao-17-unit-interval-editing} to find a set $Y$ such that $(G - X) - Y$ is a proper interval graph.
We then take $V_-$ to be the smaller of $X\cup Y$ and $V(G)\setminus V(C)$, where $C$ is a component of the maximum order of $G- X$, with ties broken arbitrarily.  Note that we always have $X\subseteq V_-$.

To analyze the approximation ratio, let $V^*_-$ be an optimal solution of $G$.
Since each vertex set returned by Proposition~\ref{prop:find-small-fis} induces a subgraph in $\{ \overline{C_3^*}, C_4^*, S_3, \overline{S_3}, W_4, W_5, \overline{C_6}\}$, the set $V^*_-$ contains at least one vertex from it according to Corollary~\ref{thm:phcag}.  Thus,
\[
  |X| \le 6  |V^*_-\cap X|.
\]
Since $G - X$ is an induced subgraph of $G - (V^*_-\cap X)$, we have
\[
  \mathrm{opt}(G - X)\le \mathrm{opt}(G - (V^*_-\cap X)) \le |V^*_-\setminus X|.
\]
Further, let $Y^*$ be an optimal solution of $G - X$.
If $(G - X) - Y^*$ is a proper interval graph, then
\[
  |V_-\setminus X| \le |Y| \le 6 |Y^*| = 6\, \mathrm{opt}(G - X);
\]
otherwise,
\[
  |V_-\setminus X| = n - |V(C)| - |X| = \mathrm{opt}(G - X).
\]
In summary,
  \[
    |V_-|  = |X| + |V_-\setminus X| \le 6|V^*_-\cap X| + 6\, \mathrm{opt}(G - X)
    \le 6|V^*_-\cap X| + 6 |V^*_-\setminus X|
    = 6 |V^*_-|,
  \]
  and the approximation ratio is $6$.

  We now analyze the running time.  Since each of $\overline{C_3^*}, C_4^*, S_3, \overline{S_3}, W_4, W_5$, and $\overline{C_6}$ has at least $4$ vertices, at most $n/4$ such subgraphs can be detected and deleted in the first phase.  By Proposition~\ref{prop:find-small-fis},  each of them takes $O(m + n)$ time, hence $O(n m + n^2)$ in total.  The approximation algorithm of Cao~\cite{cao-17-unit-interval-editing}  takes another $O(n m + n^2)$ time. The total running time is thus $O(nm + n^2)$.
\end{proof}
  
From Theorem~\ref{thm:algs-phcag-deletion} we can easily derive an \FPT{} algorithm for the combined deletion problem toward proper Helly circular-arc graphs, which allows $k_{1}$ vertex deletions and $k_{2}$ edge deletions.

\begin{theorem}\label{thm:proper helly mixed deletion}
  There is an $O(16^{k_1+k_2}\cdot (m + n))$-time parameterized algorithm for the proper Helly circular-arc deletion problem.
\end{theorem}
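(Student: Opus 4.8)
The plan is to lift the two-case analysis of Theorem~\ref{thm:algs-phcag-deletion} to a single branching procedure that deletes vertices and edges simultaneously, charging each deletion to $k_1$ or $k_2$. Let $(G, k_1, k_2)$ be the instance. First I would invoke the algorithm of Cao~\cite{cao-17-unit-interval-editing} to decide whether $G$ can be turned into a proper interval graph by deleting at most $k_1$ vertices and at most $k_2$ edges; this mixed deletion problem toward proper interval graphs is solvable within the time bound claimed in the theorem, and if the answer is ``yes'' we return ``yes.'' Since a proper interval graph is a proper Helly circular-arc graph, it remains only to look for a solution whose outcome is \emph{not} a proper interval graph; by Proposition~\ref{lem:connected} every such outcome is connected, because a disconnected proper circular-arc graph is a proper interval graph and we have just ruled that out for the budget $(k_1, k_2)$.

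Next I would branch for the connected case, mirroring Theorem~\ref{thm:algs-phcag-deletion} but over both types of deletion. In each recursive call return ``no'' if $k_1<0$ or $k_2<0$; otherwise run Proposition~\ref{prop:find-small-fis} on each component of the current graph. If some component contains an induced $G[F]$ with $F\in\{\overline{C_3^*}, C_4^*, S_3, \overline{S_3}, W_4, W_5, \overline{C_6}\}$, then, as we may only delete and $G[F]$ is induced, every solution removes a vertex of $F$ or an edge of $G[F]$, so we recurse on $(G - v, k_1-1, k_2)$ for every $v\in V(F)$ and on $(G - uv, k_1, k_2-1)$ for every edge $uv$ of $G[F]$. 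Over the seven graphs, $W_5$ maximizes $|V(F)| + |E(F)| = 6 + 10 = 16$, so there are at most $16$ calls, each lowering $k_1 + k_2$ by one. If instead every component is certified to be a proper Helly circular-arc graph, then by Corollary~\ref{cor:phcag-small-fis} the current graph is a disjoint union of proper Helly circular-arc graphs; I would take a component $C$ of maximum order and return ``yes'' exactly when the number of vertices outside $C$ is at most the current $k_1$, because forcing connectedness through deletions requires deleting every vertex of every other component (edge deletions cannot merge components) and keeping the largest component is optimal.

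For correctness, the initial test handles precisely the solutions with a disconnected outcome, and the branching is an exhaustive bounded search over solutions with a connected outcome: for any such solution we follow the branch that deletes the element of the current forbidden subgraph contained in the solution, and at a leaf a connected obstruction-free component is a proper Helly circular-arc graph by Corollary~\ref{cor:phcag-small-fis} (with Lemma~\ref{lem:small-subgraphs-free} disposing of the long $C_\ell^*$'s), while a disconnected leaf graph is optimally repaired by keeping its largest component. We never re-run the proper-interval test inside the recursion: if it failed at the top then no allocation of $(k_1, k_2)$ reaches a proper interval graph, so the outcome may be assumed connected throughout. The running time is immediate: the search tree is $16$-ary of depth at most $k_1 + k_2 + 1$, hence has $O(16^{k_1+k_2})$ nodes, and at each node the per-component calls to Proposition~\ref{prop:find-small-fis} cost $O(m+n)$ in total, giving $O(16^{k_1+k_2}\cdot(m+n))$ together with the initial call.

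I expect no serious obstacle beyond the two structural observations used above --- edge deletions cannot reconnect components, and an induced small obstruction survives unless one of its own vertices or edges is deleted, both of which are immediate --- and the reliance, for the base case, on a sufficiently fast algorithm for the combined vertex-and-edge deletion problem toward proper interval graphs; arranging that this ingredient fits inside the $O(16^{k_1+k_2}\cdot(m+n))$ budget is the one point where we defer to earlier work rather than to the self-contained branching here.
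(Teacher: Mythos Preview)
Your proposal is correct and follows essentially the same approach as the paper: branch on deleting a vertex or an edge of a small forbidden induced subgraph (at most $6+10=16$ choices, dominated by $W_5$), and at a leaf keep the largest component, justified via Proposition~\ref{lem:connected} and Corollary~\ref{cor:phcag-small-fis}. The only cosmetic difference is that you invoke the proper-interval mixed-deletion subroutine once at the root and argue it need not be repeated, whereas the paper invokes it at the leaves with the residual budget; both placements are sound and yield the same $O(16^{k_1+k_2}\cdot(m+n))$ bound.
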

\begin{proof}
  This algorithm is very similar to the one used in Theorem~\ref{thm:algs-phcag-deletion}.
  We may assume that the graph is $\{ \overline{C_3^*}, C_4^*, S_3, \overline{S_3}, W_4, W_5, \overline{C_6}\}$-free; otherwise, we break small forbidden induced subgraphs (on at most six vertices) by branching on deleting one of its vertices or edges.  If the graph is already a proper Helly circular-arc graph, we are done.  If it can be made a proper interval graph by deleting $k_1$ vertices and at most $k_2$ edges, then we return ``yes'' in time $O(4^{k_1+k_2}\cdot (m + n))$~\cite{cao-17-unit-interval-editing}.  Otherwise, we have to make a proper Helly circular-arc graph that is not a proper interval graph.  We return yes if a largest component has at least $n - k_1$ vertices, and no otherwise.  The running time is from the branching step, where we need to consider $6$ vertices or $10$ edges.
\end{proof}

It remains to fill in the gap between the constants in Theorems~\ref{thm:algs-phcag-deletion} and~\ref{thm:proper helly mixed deletion} and those announced in Theorems~\ref{thm: main-theorem2}, with an idea is similar to \cite[Proposition 4.4 and Theorem 4.5]{cao-17-unit-interval-editing}. 
Of the seven small forbidden induced subgraphs, only $S_{3}$, $W_{5}$, and $\overline{C_{6}}$ has more than eight edges.
For an $S_{3}$, either we delete one edge between two degree-four vertices, or we have to delete both edges incident to a degree-2 vertex.
For a $W_{5}$ we only need to consider the five edges incident to the center vertex: deleting edge(s) from the $C_{5}$ leaves a claw.
For a $\overline{C_{6}}$, either we delete one edge that is not in any triangle, or we have to delete at least four edges from the two triangles.
Thus, the edge deletion problem and the general deletion problem can be solved in time $O(8^k\cdot (m + n))$ and $O(14^{k_1+k_2}\cdot (m + n))$, respectively.

\section{Proper Helly circular-arc completion}

Compared to the deletion problems, the completion problem toward proper Helly circular-arc graphs is significantly more difficult.  For all the deletion problems, we can always assume that the graph is connected, and then by Corollary~\ref{cor:phcag-small-fis}, we are only concerned with small forbidden induced subgraph.
Since adding edges may make a graph connected, we cannot assume connectedness for the completion problem.

Every hole in a proper Helly circular-arc graph is clearly dominating, and we can be more specific on the intersection between a hole and the neighborhood of any vertex.

\begin{proposition}\label{lem:hole-dominating}
  Let $H$ be a hole in a proper Helly circular-arc graph.  Every vertex in this graph has at least two neighbors on $H$.
\end{proposition}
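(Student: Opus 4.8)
The plan is to invoke two of the forbidden induced subgraphs listed in Corollary~\ref{thm:phcag}, namely the claw $\overline{C_3^*}$ and the graphs $C_\ell^*$ with $\ell\ge 4$. Fix an arbitrary vertex $v$ of the graph and write the hole as $H = h_1 h_2 \cdots h_\ell$, where $\ell = |H|\ge 4$ and indices are read modulo $\ell$. The argument splits according to whether $v$ has zero, one, or at least two neighbors on $H$, and the first two cases will each be eliminated by producing one of these forbidden subgraphs.

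First I would rule out that $v$ has no neighbor on $H$: in that case $V(H)\cup\{v\}$ induces a $C_\ell^*$ with $\ell\ge 4$, contradicting Corollary~\ref{thm:phcag}; this is exactly the observation that every hole is dominating. Next, suppose for contradiction that $v$ has precisely one neighbor on $H$, say $h_i$. The two neighbors of $h_i$ on the hole are $h_{i-1}$ and $h_{i+1}$; since $\ell\ge 4$ they are distinct, and since $H$ is an induced cycle they are nonadjacent. By the choice of $h_i$, the vertex $v$ is adjacent to neither $h_{i-1}$ nor $h_{i+1}$. Hence $h_i$, together with the three pairwise nonadjacent vertices $v$, $h_{i-1}$, $h_{i+1}$, induces a claw $\overline{C_3^*}$, which is again forbidden by Corollary~\ref{thm:phcag}. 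Therefore $v$ has at least two neighbors on $H$, as claimed.

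I do not anticipate any genuine obstacle; the only point worth a moment's care is the smallest case $\ell = 4$, where $h_{i-1}$ and $h_{i+1}$ lie at distance exactly two along the hole, but they remain nonadjacent precisely because $H$ is induced, so the claw argument applies uniformly for all $\ell\ge 4$. Note also that the proof uses only that the graph is a proper circular-arc graph (via Theorem~\ref{thm:proper-forbidden-induced-subgraphs}), so the Helly assumption is not needed for this particular statement.
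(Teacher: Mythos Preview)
Your proof is correct. The second half---the claw argument ruling out exactly one neighbor---is identical to the paper's. For the first half the paper takes a different route: it appeals directly to an arc representation, observing that the arcs for the vertices of $H$ cover the whole circle, so the arc for any vertex must meet one of them, giving at least one neighbor on $H$. Your argument via the forbidden $C_\ell^*$ is purely combinatorial and bypasses the representation; this has the pleasant side effect of making your closing remark transparent, namely that only the proper circular-arc hypothesis (through Theorem~\ref{thm:proper-forbidden-induced-subgraphs}) is actually used, not the Helly property.
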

\begin{proof}
  In any arc representation for the graph, the arcs for vertices on $H$ cover the whole circle.  Thus, every vertex $x$ has a neighbor on $H$.  Moreover, if $x$ has a single neighbor on $H$, then there is a claw, contradicting Corollary~\ref{thm:phcag}.
\end{proof}

It is well known that the maximal cliques of an interval graph can be arranged as a path.
Gavril~\cite{gavril-74-algorithms-cag} showed that the maximal cliques of a Helly circular-arc graph can be arranged as a circle.  This implies that a Helly circular-arc graph has a linear number of maximal cliques.
\begin{theorem}[\cite{gavril-74-algorithms-cag}]
  \label{lem:helly-cliques}
  A graph $G$ is a Helly circular-arc graph if and only if its maximal cliques can be arranged as a circle so that for every vertex $v$ in $G$, the maximal cliques containing $v$ are consecutive.
\end{theorem}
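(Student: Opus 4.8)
The plan is to prove the two directions by translating back and forth between a Helly arc representation of $G$ and a circular arrangement of its maximal cliques, in direct analogy with the classical fact that a graph is an interval graph if and only if its maximal cliques can be arranged on a \emph{path} so that the cliques containing each vertex are consecutive. Throughout I use the standard fact that two distinct vertices are adjacent if and only if they lie in a common maximal clique.

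For the forward direction, I would fix a Helly arc representation of $G$ on a circle, writing $A_v$ for the arc of $v$. The crucial (and only) use of the Helly property is this: every maximal clique $Q$ of $G$ is the set of arcs through a single point. Indeed, the arcs $\{A_v : v\in Q\}$ pairwise intersect, so by the Helly property they have a common point $p_Q$; the set of vertices whose arcs contain $p_Q$ is a clique (their arcs pairwise meet at $p_Q$) and contains $Q$, so by maximality it equals $Q$. Since the set of arcs through a point determines the corresponding clique, distinct maximal cliques get distinct points, and I can list all maximal cliques $Q_1,\dots,Q_t$ in the cyclic order of $p_{Q_1},\dots,p_{Q_t}$ around the circle. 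For any vertex $v$, a maximal clique $Q$ contains $v$ iff $A_v$ passes through $p_Q$, i.e.\ iff $p_Q$ lies on the arc $A_v$; as $A_v$ is a circular arc (or the whole circle), the set $\{p_Q : p_Q\in A_v\}$ is a cyclically consecutive block of the chosen points, so the maximal cliques containing $v$ are consecutive in the arrangement, as required.

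For the reverse direction, I would build an arc representation from a circular arrangement $Q_1,\dots,Q_t$ of the maximal cliques. Place points $q_1,\dots,q_t$ on a circle in this cyclic order; for each vertex $v$ set $I_v=\{i : v\in Q_i\}$, which is cyclically consecutive by hypothesis, and let $A_v$ be the shortest circular arc containing $\{q_i : i\in I_v\}$ (a single point when $|I_v|=1$, the whole circle when $I_v$ is everything; one may fatten all arcs by an infinitesimal amount if genuine arcs are preferred). Because each $I_v$ is a contiguous block of indices and the $q_i$ respect the index order, the $q_j$'s lying on $A_v$ are exactly those with $j\in I_v$, from which one checks the elementary fact that $A_u\cap A_v\neq\emptyset$ if and only if $A_u$ and $A_v$ share some point $q_i$, if and only if $I_u\cap I_v\neq\emptyset$. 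Hence $A_u$ and $A_v$ meet iff $u$ and $v$ lie in a common maximal clique, iff $u$ and $v$ are adjacent, so the arcs represent $G$. Finally the representation is Helly: a pairwise-intersecting family of arcs corresponds to a set of pairwise-adjacent vertices, i.e.\ a clique, which is contained in some maximal clique $Q_i$, and then $q_i$ is a common point of all arcs in the family.

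The arguments are mostly bookkeeping; the one genuinely essential point is the use of the Helly property in the forward direction to guarantee that \emph{every} maximal clique appears as ``the arcs through one point'' — without it, a maximal pairwise-intersecting family of arcs need not have a common point and would be captured by no clique point, so the expected main obstacle is making that step airtight (including that the chosen points are distinct and can be cyclically ordered). The only mildly fiddly part is the combinatorial equivalence in the reverse direction between overlap of spanning arcs and overlap of the contiguous index sets, together with the degenerate cases where $A_v$ is a single point or the whole circle; these I would dispatch with the observation that a contiguous index set on a cycle spans an arc whose only $q_i$'s are those in the set.
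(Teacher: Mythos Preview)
The paper does not prove this theorem; it is quoted from Gavril~\cite{gavril-74-algorithms-cag} without proof, so there is no in-paper argument to compare against.  Your proposal is the standard proof and is essentially correct in both directions.  A couple of minor points are worth tightening.  In the forward direction, when you say ``distinct maximal cliques get distinct points,'' the real reason is that the \emph{common intersections} of distinct maximal cliques are disjoint: if some point $p$ lay in $\bigcap_{v\in Q}A_v$ and in $\bigcap_{v\in Q'}A_v$, then every arc of $Q\cup Q'$ would pass through $p$, forcing $Q=Q'$ by maximality.  In the reverse direction, ``shortest circular arc'' is ambiguous when $|I_v|>t/2$; it is cleaner to define $A_v$ directly as the arc from the first to the last index of the consecutive block $I_v$ in the given cyclic orientation, exactly as the paper does in the proposition immediately following the theorem (Proposition~\ref{lem:clique-circle-arcs}).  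That proposition in fact carries out the adjacency verification you sketch (two arcs with endpoints among the $q_i$ intersect iff they share some $q_i$), so your reverse direction dovetails with what the paper already proves, and you only need to add the Helly check, which you do correctly.
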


We use a \emph{clique circle} to denote the circular arrangement of maximal cliques specified in Theorem~\ref{lem:helly-cliques}, and a \emph{clique path} is defined analogously.  Note that a clique path can always be viewed as a clique circle, while if two consecutive cliques of a clique circle are disjoint, then it can be viewed as a clique path.   We put the maximal cliques of a clique circle in clockwise order, and for each vertex $v$ that is not universal, let $\lp{v}$ and $\rp{v}$ denote indices of the most anticlockwise, and respectively, the most clockwise clique containing $v$.  We set the arc for the vertex $v$ to be $[\lp{v}, \rp{v}]$.

\begin{proposition}\label{lem:clique-circle-arcs}
  Let $\{K_1, \ldots, K_{\ell}\}$ be a clique circle for a graph $G$.  The set of arcs $\{[\lp{v}, \rp{v}]\mid v\in V(G)\}$ is an arc representation for $G$.
\end{proposition}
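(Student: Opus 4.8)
The plan is to verify directly that the intersection graph of the proposed family of arcs is exactly $G$; that is, for every pair $u,v\in V(G)$, the arcs assigned to $u$ and $v$ intersect if and only if $uv\in E(G)$. The only real content is the two-way translation between ``sharing a point of the circle'' and ``lying together in some maximal clique.''

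First I would settle the bookkeeping around degenerate and universal vertices. Every vertex lies in at least one maximal clique (a single vertex is a clique and extends to a maximal one), so the set of maximal cliques through a given vertex is nonempty. If $v$ is universal, it lies in \emph{every} maximal clique (a maximal clique together with $v$ is still a clique), and we assign it the whole circle. A non-universal vertex misses some maximal clique, so its set of maximal cliques is a proper, nonempty, and—by Theorem~\ref{lem:helly-cliques}—consecutive block of the clique circle, with well-defined ends $\lp{v}$ and $\rp{v}$ (possibly $\lp{v}=\rp{v}$); thus the arc $[\lp{v},\rp{v}]$ passes through exactly those indices $i$ with $v\in K_i$. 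A universal vertex is adjacent to everyone and its arc meets every arc, so the claimed equivalence is immediate for any pair involving a universal vertex; from here on I assume both $u$ and $v$ are non-universal.

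For the forward direction, if $uv\in E(G)$ then $\{u,v\}$ is a clique, hence contained in some maximal clique $K_i$, so $i\in[\lp{u},\rp{u}]\cap[\lp{v},\rp{v}]$ and the two arcs intersect. For the converse, suppose $[\lp{u},\rp{u}]$ and $[\lp{v},\rp{v}]$ intersect. I would first note that two closed arcs whose endpoints all lie among the $\ell$ clique-points can only meet at one of those points: if they shared a point strictly between consecutive clique-points $j$ and $j+1$, then each arc, traversed clockwise, must enter that segment and hence pass through $j$. So $u$ and $v$ lie in a common maximal clique $K_i$, and since $K_i$ is a clique, $uv\in E(G)$.

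The argument is essentially routine; the one place that deserves a sentence of care is this last step—turning ``the closed arcs intersect'' into ``they share a clique index''—together with the preliminary reduction that, once universal vertices are handled separately, each arc really is the full consecutive block of maximal cliques through its vertex, the one-clique case $\lp{v}=\rp{v}$ included. Everything else follows straight from the definition of a clique circle and the observation that an edge of $G$, being a two-vertex clique, lies in some maximal clique.
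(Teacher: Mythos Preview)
Your proof is correct and follows essentially the same approach as the paper's: the key step in both is that arcs with integral endpoints intersect if and only if they share one of the integer clique-points, after which containment in a common $K_i$ gives adjacency. You are more careful than the paper about the bookkeeping—explicitly treating universal vertices and the degenerate case $\lp{v}=\rp{v}$—but the argument is the same.
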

\begin{proof}
  Since all the endpoints of the arcs are integral, two arcs intersect, if and only if there is an integer in their intersection.  Note that an integer $i$ is contained in the arcs for vertices $u$ and $v$ if and only if both $u$ and $v$ are contained in $K_i$, hence $u v \in E(G)$.
\end{proof}

In passing we remark that a connected proper Helly circular-arc graph has a unique clique cycle, up to full reversal~\cite{deng-96-proper-interval-and-cag, cao-22-recognition-cag}.

Proper interval graphs are precisely claw-free interval graphs, which can be restated as a graph is a proper interval graph if and only if it is claw-free and has a clique path.  One may thus expect that a graph is a proper Helly circular-arc graph if and only if it is claw-free and has a clique circle.  As we have mentioned, however, $S_3$ is a Helly circular-arc graph and hence has a clique circle, but it is not a proper Helly circular-arc graph even though it is claw-free.
The following statement can be directly observed from forbidden induced subgraphs of the class of proper Helly circular-arc graphs and of the class of normal Helly circular-arc graphs; see also Lin et al.~\cite[Theorem 9]{lin-13-nhcag-and-subclasses}.\footnote{An arc representation is known to be normal and Helly if no set of three or fewer arcs covers the circle \cite{gavril-74-algorithms-cag, cao-17-nhcag}.}

\begin{lemma}[\cite{lin-13-nhcag-and-subclasses}]
  \label{lem:phcag-clique-circle}
  A graph is a proper Helly circular-arc graph if and only if it is claw-free and it has an arc representation in which no set of three or fewer arcs covers the circle.
\end{lemma}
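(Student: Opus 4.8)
The plan is to reduce the statement to a comparison of forbidden induced subgraphs. The first step is to record that an arc representation has no set of at most three arcs covering the circle \emph{if and only if} it is normal and Helly. One direction is the cited fact in the footnote \cite{gavril-74-algorithms-cag, cao-17-nhcag}; for the converse, a normal representation already forbids two arcs from covering the circle, and if three arcs covered the circle then either some two of them are disjoint --- in which case the third, being a single arc bridging both gaps between them, together with one of the first two already covers the circle --- or they pairwise intersect, so by the Helly property they share a point $p$, and then the two of them reaching farthest clockwise and counterclockwise from $p$ already cover the circle; either way normality is violated. Hence the statement to prove is equivalent to: $G$ is a proper Helly circular-arc graph if and only if $G$ is claw-free and a normal Helly circular-arc graph.

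For the forward direction, claw-freeness is immediate since $\overline{C_3^*}$ is a forbidden induced subgraph of proper circular-arc graphs (Theorem~\ref{thm:proper-forbidden-induced-subgraphs}), and membership in the normal Helly class holds because the class of proper Helly circular-arc graphs is contained in it \cite{lin-13-nhcag-and-subclasses}. (Alternatively, when $G$ is connected one can take the arc representation built from its clique circle --- Theorem~\ref{lem:helly-cliques} and Proposition~\ref{lem:clique-circle-arcs} --- and argue directly that no three of those arcs cover the circle.)

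The backward direction carries the actual content. Suppose $G$ is claw-free and normal Helly. By Corollary~\ref{thm:phcag} it suffices to show $G$ contains no $\overline{C_3^*}$, $S_3$, $\overline{S_3}$, $W_4$, $W_5$, $\overline{C_6}$, or $C_\ell^*$ with $\ell\ge 4$. The claw $\overline{C_3^*}$ is excluded by hypothesis, and since the normal Helly class is hereditary it remains only to check that none of the other six graphs (or families) is itself a normal Helly circular-arc graph --- this is precisely the comparison of the two forbidden-subgraph lists mentioned before the statement. The verifications are short: $C_\ell^*$ with $\ell\ge 4$ is not even a circular-arc graph, because in any arc representation the $\ell$ arcs of a hole of length at least four must cover the whole circle (otherwise, cutting at an uncovered point yields an interval representation of $C_\ell$, which is impossible), leaving nowhere for the isolated vertex; $\overline{S_3}$ and $\overline{C_6}$ are not even Helly circular-arc graphs, by a counting argument on the three pairwise-intersecting arcs of their central triangle --- in a Helly representation these pass through a common point, and at most two of them can extend far enough to the two sides to host all three private neighbours; and for $S_3$, $W_4$, and $W_5$ a brief case analysis on how the apex vertex's arc meets the hole-arcs shows that in \emph{every} arc representation some three arcs cover the circle. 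One may instead simply invoke the published list of minimal forbidden induced subgraphs of the normal Helly class \cite{lin-13-nhcag-and-subclasses, cao-17-nhcag} and observe by inspection that each of the six occurs in, or contains a member of, that list.

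The main obstacle is this last step: certifying that $S_3$, $\overline{S_3}$, $W_4$, $W_5$, and $\overline{C_6}$ are not normal Helly circular-arc graphs. If one is willing to cite the forbidden-subgraph characterization of that class, it is a routine inspection; a self-contained argument instead needs the small representation-level arguments sketched above, which are elementary but a little delicate because arcs can wrap around the circle.
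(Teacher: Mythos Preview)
Your proposal is correct and follows exactly the approach the paper indicates. The paper does not prove this lemma but cites it from \cite{lin-13-nhcag-and-subclasses}, noting only that it ``can be directly observed from forbidden induced subgraphs of the class of proper Helly circular-arc graphs and of the class of normal Helly circular-arc graphs'' and relegating the equivalence between ``no three or fewer arcs cover the circle'' and ``normal and Helly'' to a footnote citation; your write-up makes both of these steps explicit and is in that sense more detailed than the paper itself.
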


If a proper Helly circular-arc graph $G$ is not an interval graph, then it has a hole.
The structure of every local part of $G$ is very like a proper interval graph.
With the removal of two maximal cliques with no edge in between from $G$, the hole is separated into two sub-paths.  Since every remaining vertex is adjacent to one of the two sub-paths, the remaining graph has precisely two components.

\begin{lemma}\label{lem:isolation}
  Let $G$ be a proper Helly circular-arc graph that is not an interval graph.
  Let $A_1$ and $A_2$ be two maximal cliques of $G$ with no edge between them, and let $B_1$ and $B_2$ be the vertex sets of the two components of $G - (A_1\cup A_2)$.
  Let $G_1$ be any proper interval graph on $B_1\cup A_1\cup A_2$ in which $A_1$ and $A_2$ are the end cliques, and $N_{G_1}(A_i)\cap B_1 = N_{G}(A_i)\cap B_1$ for $i = 1, 2$.
  Replacing $G[B_1\cup A_1\cup A_2]$ by $G_1$ gives another proper Helly circular-arc graph.
\end{lemma}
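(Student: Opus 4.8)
The plan is to verify the characterization of Lemma~\ref{lem:phcag-clique-circle}: the new graph $G'$ obtained by replacing $G[B_1\cup A_1\cup A_2]$ with $G_1$ is claw-free and admits an arc representation in which no three arcs cover the circle. First I would set up the pieces. Let $\ell$ be the number of maximal cliques on a clique circle of $G$; since $A_1$ and $A_2$ have no edge between them, Theorem~\ref{lem:helly-cliques} lets us cut the clique circle at these two cliques into two clique paths, one realizing $G[B_1\cup A_1\cup A_2]$ (with $A_1,A_2$ as its two end cliques) and one realizing $G[B_2\cup A_1\cup A_2]$ (likewise). Since $G_1$ is a proper interval graph with $A_1,A_2$ as end cliques and the same neighborhoods of $A_1,A_2$ into $B_1$, it too has a clique path with $A_1$ on one end and $A_2$ on the other. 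I would then glue the clique path of $G_1$ to the clique path of $G[B_2\cup A_1\cup A_2]$ along the shared end cliques $A_1$ and $A_2$, forming a clique circle. The key point is that along each seam the shared clique is genuinely a maximal clique of the glued structure (because $A_1,A_2$ are maximal cliques of $G$ and no vertex of $B_1$ is adjacent to anything in $B_2$), and each vertex of $G'$ occupies a contiguous arc of this circle: vertices of $B_1$ and vertices of $B_2$ stay contiguous within their own clique path, and a vertex of $A_i$ sits in an interval of consecutive cliques in $G_1$ ending at the $A_i$ seam, concatenated with an interval in the other path starting at the $A_i$ seam — contiguous around the circle. By Proposition~\ref{lem:clique-circle-arcs} this yields an arc representation of $G'$, which incidentally shows $V(G')=V(G)$ and that $G'$ realizes exactly the edges we want (edges within $B_1\cup A_1\cup A_2$ from $G_1$, all other edges from $G$).

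Next I would check that no three arcs of this representation cover the circle. Suppose three arcs $I_u,I_v,I_w$ cover all $\ell$ cliques. Look at a clique $K$ belonging to the $G[B_2\cup A_1\cup A_2]$ side, not among the seam cliques $A_1,A_2$; some arc, say $I_u$, covers $K$, so $u\in B_2\cup A_1\cup A_2$. A vertex of $B_2$ has all its cliques on that side, strictly between the seams, so $I_u$ misses a clique on the $G_1$ side (which is nonempty and not just the seams, since $G_1$ has $A_1,A_2$ as distinct end cliques and $G$ is not an interval graph so there is genuine content); likewise a vertex of $B_2$ cannot reach far into the $G_1$ side either. Either way I push the argument to conclude that the original clique circle of $G$ had three arcs (for the corresponding vertices, which lie in $A_1\cup A_2\cup B_2$) covering it — but $G$ is a proper Helly circular-arc graph, so by Lemma~\ref{lem:phcag-clique-circle} (applied to $G$'s own representation from its clique circle) no three arcs cover the circle, a contradiction. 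The bookkeeping here, tracking which side each of the three vertices lands on and checking that in every case a clique is left uncovered unless it reduces to a $3$-cover in $G$, is the part that needs care; it splits into a handful of cases by how many of $u,v,w$ lie in $B_1$, in $B_2$, and in $A_1\cup A_2$.

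Finally I would argue $G'$ is claw-free. A claw in $G'$ either lies entirely inside $B_1\cup A_1\cup A_2$ — impossible, since $G_1$ is a proper interval graph, hence claw-free — or it uses a vertex $b\in B_2$. But every edge incident to $b$ in $G'$ is an edge of $G$, and moreover the $G'$-neighborhood of any $b\in B_2$ is contained in $B_2\cup A_1\cup A_2$, on which $G'$ and $G$ induce the same subgraph; so a claw through $b$ in $G'$ is a claw in $G$, contradicting that $G$ is claw-free (Corollary~\ref{thm:phcag}). Combining the three facts with Lemma~\ref{lem:phcag-clique-circle} gives that $G'$ is a proper Helly circular-arc graph. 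The main obstacle I anticipate is the middle step: making the ``no three arcs cover the circle'' verification clean rather than a long case analysis. I expect the cleanest route is to observe that any clique on the $G_1$ side that is covered by an arc $I_x$ forces $x\in B_1\cup A_1\cup A_2$, and symmetrically for the $B_2$ side, so a purported $3$-cover must in fact use vertices all of which lie in $A_1\cup A_2$ (the only vertices whose arcs can straddle both sides), but two maximal cliques worth of vertices cannot cover a clique circle of length $\ell\ge 4$ by themselves — indeed $A_1$ and $A_2$ are disjoint and non-adjacent, so their arcs are disjoint and certainly do not cover the circle — and adding one more arc cannot close the two remaining gaps at once; that is the contradiction, and it avoids invoking $G$'s representation at all.
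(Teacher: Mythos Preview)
Your overall plan---glue the clique path of $G_1$ to the surviving clique path of $G$ at the seams $A_1,A_2$, then verify Lemma~\ref{lem:phcag-clique-circle}---is exactly the paper's route. The construction of the clique circle and the appeal to Proposition~\ref{lem:clique-circle-arcs} are fine.

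The claw-free argument, however, has a real gap. Your dichotomy is correct (either the claw sits inside $B_1\cup A_1\cup A_2$, or it uses some $b\in B_2$), but the second branch is not: from $b\in B_2$ you only get that the \emph{center} $v$ lies in $N_{G'}(b)\subseteq B_2\cup A_1\cup A_2$, not that the whole claw does. The bad case is $v\in A_1\cup A_2$ with one leaf $z\in B_2$ and two leaves $x,y\in B_1$. Here your claim ``a claw through $b$ in $G'$ is a claw in $G$'' fails, because $xy$ is an edge inside $B_1$ and may well lie in $E(G)\setminus E(G')$; then $\{v,x,y,z\}$ is a claw in $G'$ but not in $G$. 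This is precisely where the hypothesis $N_{G_1}(A_i)\cap B_1=N_G(A_i)\cap B_1$ is needed, and you never use it. The paper's fix is: that hypothesis forces the $v$--$x$ and $v$--$y$ edges to agree in $G$ and $G'$, so the only edge that can differ is $xy$; one then replaces the $B_2$-leaf $z$ by a vertex $z'\in A_2$ that lies in no inner clique of $G_1$ (such $z'$ exists since $A_2$ is an end clique distinct from its neighbour $K'_q$), producing a claw $\{v,x,y,z'\}$ inside $G_1$, contradicting that $G_1$ is a proper interval graph.

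A smaller point: your assertion that $G$ and $G'$ induce the same subgraph on $B_2\cup A_1\cup A_2$ presumes there are no $A_1$--$A_2$ edges in $G_1$; the lemma's hypotheses do not exclude such edges, so be careful there. For the ``no three arcs'' step, your cleaner argument at the end is closer to the mark than the first attempt (only $A_1$-arcs cover the point $A_1$ and only $A_2$-arcs cover the point $A_2$, so a cover needs one of each; their union leaves a gap on each side, and one more arc can close at most one gap), and this is essentially how the paper's one-line justification unpacks.
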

\begin{proof}
  Let $G'$ denote the resulting graph (after replacing $G[B_1\cup A_1\cup A_2]$ by $G_1$ in $G$).
  We may number the maximal cliques of $G$ in a way that $K_1 = A_1$ and $K_p = A_2$ while $B_1\subseteq \bigcup_{i=1}^p K_i$.  Let $K_1, K'_{2}, \ldots, K'_{q}, K_p$ be a clique path for $G_1$.

  We first argue that $G'$ is claw-free.
  Suppose for contradiction that $G'[\{v, x, y, z\}]$ is a claw, with $v$ being the center.  Since both $G$ and $G_1$ are claw-free, the claw must intersect both $B_1$ and $B_2$.
  Thus, $v$ must be in $K_1$ or $K_p$.  Without loss of generality, assume $v\in K_p$, while $x\in B_1$ and $z\in B_2$.
  Note that the vertex $z$ has the same neighborhood in $G$ and in $G'$.
  Since $G[\{v, x, y, z\}]$ is not a claw, we have $x y\in E(G)\setminus E(G')$.  By the assumption $N_{G_1}(K_p)\cap B_1 = N_{G}(K_p)\cap B_1$, we must have $y\in B_1$.
  But then $G_1[\{v, x, y, z'\}]$, where $z'$ is any vertex in $K_p\setminus K'_{q}$, is a claw, a contradiction to that $G_1$ is a proper interval graph.

  We then argue that
  \[
     K_1, K'_{2}, \ldots, K'_{q}, K_p, K_{p+1}, \ldots
  \]
  gives a clique circle for $G'$.
  Since no vertex in $B_1$ or $B_2$ is adjacent to all vertices in $K_1$ or in $K_p$ in $G'$, both $K_1$ and $K_p$ are maximal cliques of $G'$.
  Since there is no edge between vertices in $B_1$ and $B_2$, both $K'_{i}, 2\le i \le q$ and $K_j, j >p$ are maximal cliques of $G'$.
  A vertex $v\not\in K_1\cup K_p$ is in either $B_1$ or $B_2$, and it is contained in either $K'_{i}, 2\le i \le q$ or $K_j, j >p$.  Since $K_1, K_2, \ldots, K_p, \ldots$ is a clique circle of $G$ and $K_1, K'_{2}, \ldots, K'_{q}, K_p$ is a clique path for $G_1$, those containing $v$ are consecutive.  Now suppose without loss of generality $v\in K_p$.  Maximal cliques containing $v$ are $\ldots, K'_{q}, K_p, K_{p+1}, \ldots$.  Therefore, for any vertex $v$, the maximal cliques of $G'$ containing $v$ are consecutive.

  We can use Proposition~\ref{lem:clique-circle-arcs} to derive an arc representation for $G'$.  For any set of arcs to cover the whole circle in the representation, we need one arc from $B_1$, $B_2$, $A_1$, and $A_2$, hence at least four arcs.  
  Thus, $G'$ is a proper Helly circular-arc graph by Lemma~\ref{lem:phcag-clique-circle}.
\end{proof}

For the completion problem, we may again assume that the input graph $G$ is free of $\overline{C_3^*}$, $C_4^*$, $S_3, \overline{S_3}, W_4, W_5$, and $\overline{C_6}$.
We are done if $G$ is already a \phcag{}.  In particular, this is the case when $G$ is a proper interval graph or when $G$ is connected (Corollary~\ref{cor:phcag-small-fis}).
Thus, we may assume that $G$ is not connected and it is not a proper interval graph.
There must be a hole in $G$ (Corollary~\ref{cor:phcag-uig}), and we add either a chord of this hole, or an edge between this hole and every vertex in other components.  
If there is a hole of length of no more than $16 k + 16$, then there are only $O(k^2)$ such choices, and we can branch on adding one of them.  In the rest every hole is longer than $16 k + 16$.  
Let $H$ be such a hole, and let $G_0$ be the component of $G$ that contains $H$; note that $G_0$ is a proper Helly circular-arc graph.
After adding $k$ or fewer edges, there must be a hole of length greater than $k$ in the subgraph induced by $V(H)$.  Thus, for every vertex $x$ in $V(G)\setminus V(G_0)$, an edge must be added between $x$ and $H$.
We can return ``no'' if $|V(G_0)| < n - {k}$.   Other components have fewer than $k$ vertices while any hole is longer than $16 k + 16$, and thus they are already proper interval subgraphs.

We say that a vertex $x$ is \emph{touched} by a solution $E_+$ if $x$ is an endpoint of an edge in $E_+$, and a set $X$ of vertices is \emph{touched} if at least one vertex in $X$ is touched.  All vertices in $V(G)\setminus V(G_0)$ are touched, and we are more concerned with touched vertices in $G_0$.

\begin{proposition}\label{lem:untouched-maxcliques}
  Let $E_+$ be a solution to $G$.  If a maximal clique $K$ of $G$ is untouched by $E_+$, then $K$ is a maximal clique of $G + E_+$.
\end{proposition}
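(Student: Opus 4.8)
The plan is to argue by contradiction, using only the definition of a solution and of an untouched clique. Since $E_+$ adds edges but no vertices, $V(G+E_+) = V(G)$, and any clique of $G$ remains a clique of $G+E_+$; in particular $K$ is a clique of $G+E_+$. So the only thing to rule out is that $K$ fails to be \emph{maximal} in $G+E_+$.

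Suppose for contradiction that $K$ is not a maximal clique of $G+E_+$. Then there is a vertex $v\in V(G)\setminus K$ that is adjacent in $G+E_+$ to every vertex of $K$, so $K\cup\{v\}$ is a clique of $G+E_+$. First I would fix an arbitrary $u\in K$ and note that the edge $uv$, present in $G+E_+$, lies either in $E(G)$ or in $E_+$. The key observation is that $uv\in E_+$ would make $u$ an endpoint of an edge of $E_+$, hence touch $u\in K$, contradicting the hypothesis that $K$ is untouched. Therefore $uv\in E(G)$. Since $u$ was arbitrary, $v$ is adjacent in $G$ to every vertex of $K$, so $K\cup\{v\}$ is a clique of $G$, contradicting the maximality of $K$ in $G$. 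Hence $K$ is a maximal clique of $G+E_+$.

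There is no real obstacle here: the argument is purely definitional, and the one point that needs to be stated carefully is that ``$K$ is untouched'' forbids $E_+$ from containing \emph{any} edge incident to a vertex of $K$ (not merely an edge lying inside $K$), which is exactly what lets us conclude $uv\in E(G)$.
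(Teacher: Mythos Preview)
Your proof is correct and follows essentially the same approach as the paper: both argue by contradiction, taking a vertex $v\notin K$ adjacent in $G+E_+$ to all of $K$ and using the untouched hypothesis to conclude that all these adjacencies already exist in $G$, contradicting the maximality of $K$. Your write-up is in fact a bit more explicit about why the edge $uv$ must lie in $E(G)$ rather than $E_+$.
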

\begin{proof}
  It is clear that $K$ remains a clique in $G + E_+$.  Suppose that $K$ is not maximal, then it is a proper subset of a maximal clique $K'$ of $G + E_+$.  We can find a vertex $x\in K'\setminus K$.  But then there must be $y\in K$ such that $x y\in E_+$, contradicting that $y$ is untouched.  
\end{proof}

We may fix a clique circle $\mathcal{K}$ of $G_0$, and assume that $H$ and $\mathcal{K}$ are numbered such that no neighbor of $v_1$ or $v_{|H|}$ is touched, and $\{v_{1}, v_{|H|}\}\subseteq K_1$, which is untouched.
Note that $G_0$ has at least $|H|$ maximal cliques, and few of them are touched by a $k$-solution $E_+$.
By Lemma~\ref{lem:isolation} and Proposition~\ref{lem:untouched-maxcliques}, these untouched maximal cliques serve ``isolators'' of the modifications.

We can guess another untouched maximal clique $K_p$ of $G_0$ that is disjoint from and nonadjacent to $K_1$.  Then $H$ is broken into two paths in $G -(K_1\cup K_p)$.  Recall that every vertex in $V(G)\setminus V(G_0)$ needs to be connected to a vertex on $H$.
When the graph $G + E_+$ is not a proper interval graph, every small component is connected to one and only one of the two sub-paths of $H$ (otherwise the resulting graph remains connected after all vertices in $K_1\cup K_p$ removed, contradicting Proposition~\ref{lem:untouched-maxcliques}).
  We can guess in $2^k$ time to which side each small component is attached.  Then we need to add edges to make two proper interval graphs.  
However, we cannot call the proper interval completion problem to solve this task.  For example, the subgraph induced by $\bigcup_{i=1}^p K_i$ together with the small components is already a proper interval graph.  The trouble is how to make them connected while keeping $K_1$ and $K_p$ the ends of the final clique path.  The same holds for the other part of the problem.

For each $i$ with $1< i < |H|$, let the maximal cliques containing $v_i$ be $K_{\mathrm{from}(i)}$, $\ldots$, $K_{\mathrm{to}(i)}$; i.e., $\mathrm{from}(i)$ and $\mathrm{to}(i)$ are the smallest and, respectively, the largest indices.  We define $K_{\mathrm{to}(1)}$ and $K_{\mathrm{from}(|H|)}$ analogously.
Let $r$ denote the number of small components, denoted as $C_1, \ldots, C_r$.
For each pair $p, q$ of indices with $1 < p < q < |H|$,
and each subset $S$ of $[1..r]$, we check whether it is possible to add at most $k$ edges to make $G[\bigcup_{i = \mathtt{to}(p)}^{\mathtt{from}(q)} K_i\cup \bigcup_{j\in S} C_j]$ a proper interval graph, under the condition that $K_{\mathtt{to}(p)}$ and $K_{\mathtt{from}(q)}$ are the end cliques and remain untouched.
Let $\beta(S, p, q)$ denote the minimum cost if it is at most $k$, or $\infty$.
We define $\beta(S, p, q)$ to be $\infty$ when $K_{\mathtt{to}(p)}$ and $K_{\mathtt{from}(q)}$ are not disjoint.
Then $\beta([1..r], 1, |H|)$ is the value we need, which can be calculated as follows.

\begin{proposition}\label{lem:dp}
  The value of $\beta([1..r], 1, |H|)$ can be computed in $k^{O(k)} (n + m)$ time.
\end{proposition}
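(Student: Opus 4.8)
The plan is to compute $\beta([1..r], 1, |H|)$ by a dynamic program over the maximal cliques of $G_0$ and the small components, exploiting the fact that a $k$-solution can touch only $O(k)$ maximal cliques and only $O(k)$ of the $r$ small components. First I would observe that if $\beta([1..r],1,|H|)\le k$, then at most $2k$ of the $|H|$ maximal cliques lying between $K_1$ and $K_{|H|}$ can be touched (each added edge touches at most two of them in a clique path), and at most $k$ of the small components can be attached by any added edge — the remaining ones must already be trivial (single vertices or, more generally, proper interval subgraphs that are simply absorbed at a single untouched clique). The key structural input is Lemma~\ref{lem:isolation} together with Proposition~\ref{lem:untouched-maxcliques}: an untouched maximal clique $K_j$ that is disjoint from its successor acts as a separator, so the problem on $\bigcup_{i=\mathtt{to}(p)}^{\mathtt{from}(q)} K_i \cup \bigcup_{j\in S} C_j$ decomposes at any such untouched clique into two independent subproblems of the same form, $\beta(S_1,p,t)$ and $\beta(S_2,t,q)$ with $S=S_1\uplus S_2$. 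This gives the recurrence; the base case is a ``block'' between two consecutive touched cliques, which contains $O(k)$ small components and, after the attachment choice, is a bounded instance that we solve directly.

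**The recurrence.**
The state of the DP is a triple $(p,q,S)$ where $K_{\mathtt{to}(p)}$ and $K_{\mathtt{from}(q)}$ are prescribed untouched end cliques and $S\subseteq[1..r]$ is the set of small components that must be placed inside this interval. To keep the table small, I would not range over all $\binom{r}{*}$ subsets $S$; instead I would first guess, in $2^{O(k)}$ ways, which $\le k$ small components are \emph{active} (attached by an added edge) — the rest are inactive and get assigned to a nearest untouched clique for free — and similarly guess the $\le 2k$ touched cliques among the $|H|$ cliques on the hole. Then a solution is described by: the ordered sequence of ``touched blocks'' along the clique circle, for each block which active components fall in it, and how the edges inside the block are added. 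The recurrence fills $\beta$ by splitting at the leftmost untouched separating clique strictly inside the interval: $\beta(S,p,q)=\min_t \min_{S_1\uplus S_2=S}\bigl(\beta(S_1,p,t)+\beta(S_2,t,q)\bigr)$ over valid intermediate cliques $K_t$, with the convention that the minimum over an empty set of splits is the ``single-block'' value computed directly. Because between any two touched cliques there is at most one genuine block, the splits are shallow: there are only $O(k)$ nontrivial split points, and the active components partitioned among $O(k)$ blocks contribute a $k^{O(k)}$ factor, not $2^r$.

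**The base case.**
The remaining work is the base case: given a block bounded by two untouched end cliques $K_a=K_{\mathtt{to}(p)}$ and $K_b=K_{\mathtt{from}(q)}$, a set $S'$ of at most $k$ active small components to be attached, and a budget, decide the minimum number of edges to make $G[\bigcup_{i=a}^b K_i\cup\bigcup_{j\in S'}C_j]$ a proper interval graph with $K_a,K_b$ as untouched end cliques. Here $G[\bigcup_{i=a}^b K_i]$ is already a proper interval graph with clique path from $K_a$ to $K_b$ (it is an induced subgraph of the proper Helly graph $G_0$), and we must connect each $C_j$ to it by adding edges; each $C_j$ is itself a proper interval graph on $<k$ vertices whose clique path must be spliced somewhere along the main path. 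Since $|S'|\le k$ and each $C_j$ has $<k$ vertices, the total size of the ``interesting'' part of the block is $O(k^2)$ vertices plus a long stretch of untouched cliques that play no role beyond providing the endpoints; I would enumerate, in $k^{O(k)}$ time, the possible interleaving orders of the clique paths of the $C_j$'s with each other and with the main path and, for each, compute the cheapest gluing by a short auxiliary DP along the path, rejecting any choice that would touch $K_a$ or $K_b$ or that exceeds the budget.

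**The main obstacle.**
The hard part will be the base case, specifically proving that it suffices to consider $k^{O(k)}$ candidate configurations rather than something depending on $n$: one must argue that in an optimal completion the clique path of $G+E_+$ restricted to the block is obtained from the original path by inserting the (bounded-size) clique paths of the active components at clique boundaries and locally ``thickening'' $O(k)$ cliques, and that no edge is ever added between two untouched cliques far apart (which would follow from Proposition~\ref{lem:untouched-maxcliques} and the claw-freeness/clique-path characterization of proper interval graphs). Once this locality lemma is in place, the running time is a product of $2^{O(k)}$ (choice of touched cliques), $2^{O(k)}$ (active components), $k^{O(k)}$ (their distribution and interleaving), $k^{O(k)}$ (base-case configurations), and an $O(n+m)$ factor for the one-time computation of the clique circle and the indices $\mathtt{from},\mathtt{to}$, giving the claimed $k^{O(k)}(n+m)$ bound. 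Correctness of the recurrence itself is then routine: the ``$\le$'' direction builds a solution from an optimal sub-solution via Lemma~\ref{lem:isolation}, and the ``$\ge$'' direction takes an optimal solution, locates an untouched separating clique inside the interval (one exists because the interval spans many cliques but only $O(k)$ are touched), and splits.
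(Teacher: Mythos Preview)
Your overall strategy---use Lemma~\ref{lem:isolation} and Proposition~\ref{lem:untouched-maxcliques} to decompose the problem at untouched separating cliques, and fill a table $\beta(S,p,q)$ by dynamic programming---is the same as the paper's. But there is a genuine gap in how you control the running time.

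You propose to ``guess the $\le 2k$ touched cliques among the $|H|$ cliques on the hole'' before running the DP. Since $|H|$ can be $\Theta(n)$, choosing $O(k)$ positions out of $|H|$ costs $\binom{|H|}{O(k)}=n^{O(k)}$, not $2^{O(k)}$; this alone already breaks the claimed $k^{O(k)}(n+m)$ bound. (The auxiliary claim that at most $2k$ maximal cliques are touched is also not the right count: one endpoint of an added edge may lie in many maximal cliques. The relevant bound, used by the paper, is on the number of \emph{hole vertices with a touched neighbour}, which is at most $8k$ because every vertex of $G_0$ has at most four neighbours on $H$.) Relatedly, your notion of ``inactive'' small components that ``get assigned to a nearest untouched clique for free'' cannot work: every small component is disconnected from $G_0$, so attaching it costs at least one edge; there are no free attachments.

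The paper sidesteps the guessing entirely with a different recurrence shape. Rather than splitting at an arbitrary untouched clique, it always peels off a window of width at most $8k+1$ from the right end:
\[
\beta(S,a,b)=\min_{\substack{1\le i\le 8k+1\\ S'\subseteq S}}\bigl(\beta(S\setminus S',\,a,\,b-i)+\beta(S',\,b-i,\,b)\bigr)\qquad(b-a>8k),
\]
justified by the pigeonhole observation that among any $8k+1$ consecutive hole vertices, at least one has no touched neighbour and hence indexes a valid separator. The right summand is always a base case (an interval of hole-length $\le 8k$), solved by guessing one attachment vertex per component in $S'$ inside that short window and then branching on small obstructions; the left summand keeps $a$ fixed. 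This gives a one-dimensional sweep over $b$ with $O(k\cdot 2^{|S|})$ transitions per state, and no need to know in advance where the touched positions are. If you replace your upfront guessing by this bounded-window peel-off, the rest of your plan (base case on a short window, correctness via Lemma~\ref{lem:isolation}) goes through.
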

\begin{proof}
  First, for $a$ and $b$ with $a < b \le b + 8k$ and $S\subseteq [1..r]$, we calculate $\beta(S, a, b)$ as follows.  
  From each component $C_j$ with $j \in S$, we take a vertex $x$, guess a vertex $v_i$ with $a < i <b$, and add the edge $x v_i$.  After that, the subgraph induced by $\bigcup_{i = \mathtt{to}(a)}^{\mathtt{from}(b)} K_i\cup \bigcup_{j\in S}C_j$ is connected.  It will remain connected after adding edges.   We then branch on adding edges to destroy induced subgraphs in $\{ \overline{C_3^*}, S_3, \overline{S_3}\}$ and holes in the subgraph induced by $\bigcup_{i = \mathtt{to}(a)}^{\mathtt{from}(b)} K_i\cup \bigcup_{j\in S}C_j$, without adding any edges incident to $K_{\mathtt{to}(a)}$ or $K_{\mathtt{from}(b)}$. 

  Since $G_0$ is a proper Helly circular-arc graph, a vertex is adjacent to at most four vertices on $H$ (there is a claw otherwise).  Therefore, at most $8 k$ vertices on $H$ have touched neighbors.
  If $v_{b-1}$ has a touched neighbor, then for some $i$ with $2\le i \le 8 k + 2$,
  the vertex $v_{b - i}$ has no touched neighbor.
For $b - a > 8k$, by Lemma~\ref{lem:isolation}, we have 
\begin{equation}
  \label{eq:dp}
  \beta(S, a, b) =
    \min_{\substack{1\le i \le 8 k + 1 \\ S'\subseteq S}} \left( \beta(S\setminus S', a,  b - i) + \beta(S',  b - i, b) \right).
\end{equation}
We can then use dynamic programming to calculate  $\beta([1..r], 1, |H|)$ with \eqref{eq:dp}.
\end{proof}

We are now ready to summarize the algorithm in Figure~\ref{fig:alg-phcag}.

\begin{figure}[h!]
  \centering 
  \begin{tikzpicture}
    \path (0,0) node[text width=.9\textwidth, inner sep=10pt] (a) {
      \begin{minipage}[t!]{\textwidth}

        \begin{tabbing}
          Aa\=aA\=Aa\=MMMMAAAAAAAAAAAAa\=A \kill
          1.\> \textbf{if} there exists an induced subgraph $X$ in $\{ \overline{C_3^*}, C_4^*, S_3, \overline{S_3}, W_4, W_5, \overline{C_6}\}$ \textbf{then}
          \\
          \>\> \textbf{branch} on adding missing edges of $X$; \codecomment{returns ``no'' if $k$ becomes negative.}
          \\
          2.\> {\bf if} $G$ is a proper Helly circular-arc graph {\bf then return} ``yes'';
          \\
          3.\> find a hole $H$ of $G$ and let $G_0$ be the component of $G$ that contains $H$;
          \\
          4. \> \textbf{if} $|H| \le 16 k + 16$ \textbf{then}
          \\
          \>\> \textbf{branch} on adding chords of $H$ or edges between $H$ and other components;
          \\
          5.\> {\bf if} $|V(G_0)| < n - {k\over 2}$ {\bf then return} ``no'';
          \\
          6.\> \textbf{if} $\beta([1..r], 1, |H|) \le k$ \textbf{then return} ``yes'';
          \\
          \> \textbf{else return} ``no.''
        \end{tabbing}  
      \end{minipage}
    };
    \draw[draw=gray!60] (a.north west) -- (a.north east) (a.south west) -- (a.south east);
  \end{tikzpicture}
  \caption{The outline of the algorithm for the proper Helly circular-arc completion problem.}
  \label{fig:alg-phcag}
\end{figure}

\begin{lemma}
  The proper Helly circular-arc completion problem can be solved in time $k^{O(k)}  (n + m)$.
\end{lemma}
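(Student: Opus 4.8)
The plan is to check the algorithm of Figure~\ref{fig:alg-phcag} step by step and then bound its running time. Step~1 is exhaustive bounded-depth branching: every graph in $\{\overline{C_3^*}, C_4^*, S_3, \overline{S_3}, W_4, W_5, \overline{C_6}\}$ has at most six vertices, hence at most $\binom{6}{2}$ missing edges, and by Corollary~\ref{thm:phcag} the graph $G+E_+$ produced by any solution $E_+$ must avoid each of them, so $E_+$ adds a missing edge of every detected such subgraph. After step~1, $G$ is free of all seven subgraphs, hence by Corollary~\ref{cor:phcag-small-fis} each component of $G$ is a \phcag{}; step~2 handles the cases that $G$ is connected or is a proper interval graph, so afterwards $G$ is disconnected and not a proper interval graph, whence some component $G_0$ contains a hole $H$ (Corollary~\ref{cor:phcag-uig}), located as a shortest hole in step~3.

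Step~4 is the main case split on $|H|$. If $|H|\le 16k+16$, then a solution $E_+$ either contains one of the $O(|H|^2)=O(k^2)$ chords of $H$, or contains none, in which case $H$ is still a hole of the \phcag{} $G+E_+$, so by Proposition~\ref{lem:hole-dominating} every vertex outside $G_0$ has a neighbour on $H$ in $G+E_+$ --- a new edge, since $G_0$ is a component of $G$ --- and branching on the $O(|H|)=O(k)$ candidate first such edges from one fixed outside vertex is exhaustive. Hence step~4 branches into $O(k^2)$ subinstances of budget $k-1$, and on exit every hole of $G$ is longer than $16k+16$. Step~5 is a safe rejection: if a solution $E_+$ exists, then $G+E_+$ is not a proper interval graph, because the at most $k$ chords added inside $V(H)$ cannot destroy every hole of $G[V(H)]$ when $|H|>16k+16$; so $G+E_+$ is a \phcag{} with a hole and, again by Proposition~\ref{lem:hole-dominating}, each vertex outside $G_0$ needs two new edges to that hole, forcing $|V(G_0)|\ge n-k/2$. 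In that case the other components each have fewer than $k$ vertices and so, holes being long, are already proper interval subgraphs.

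For step~6 I would prove $(G,k)$ is a yes-instance iff $\beta([1..r],1,|H|)\le k$, the latter computed in $k^{O(k)}(n+m)$ time by Proposition~\ref{lem:dp}. Given a solution $E_+$, the graph $G+E_+$ is a connected \phcag{} that is not a proper interval graph, so it keeps a hole on $V(H)$; since a \phcag{}-vertex has at most four neighbours on any hole (else a claw, Corollary~\ref{thm:phcag}), the at most $2k$ touched vertices make only $O(k)$ hole-vertices ``bad'', lying in $O(k)$ consecutive blocks, so one can renumber $H$ and the fixed clique circle $\mathcal{K}$ of $G_0$ so that the maximal clique $K_1\supseteq\{v_1,v_{|H|}\}$ is untouched. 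With a second untouched maximal clique $K_p$ disjoint from and nonadjacent to $K_1$ (such a clique exists as $G_0$ has at least $|H|$ maximal cliques of which only $O(k)$ are relevant), Proposition~\ref{lem:untouched-maxcliques} keeps the untouched cliques maximal in $G+E_+$, so each small component attaches entirely to one of the two subpaths into which $K_1\cup K_p$ splits $H$, and $E_+$ then decomposes into completions of the two sides to proper interval graphs with the prescribed untouched end cliques; since the dynamic program behind $\beta$ ranges over all choices of $K_p$ and of the side assignment, $\beta([1..r],1,|H|)\le|E_+|\le k$. Conversely, two applications of Lemma~\ref{lem:isolation} (gluing the two proper interval pieces back across $K_1$ and across $K_p$) turn any edge set realizing $\beta([1..r],1,|H|)\le k$ into a solution for $G$.

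For the running time, the search tree of steps~1 and~4 has branching factor $O(k^2)$ and depth at most $k$, hence $k^{O(k)}$ leaves; at each node the work before step~6 --- the certifying recognition of Proposition~\ref{prop:find-small-fis}, a breadth-first search for a shortest hole, and computing a clique circle --- is $O(n+m)$ (the edge count grows by at most $k$ overall), and each leaf runs Proposition~\ref{lem:dp} once at cost $k^{O(k)}(n+m)$, for $k^{O(k)}(n+m)$ in total. I expect the delicate part to be the correctness of step~6: showing that the constraint ``$K_1$ and $K_p$ untouched'' is precisely what is required, that each small component must attach to a single side, and that re-gluing two proper interval graphs across untouched isolator cliques yields a \emph{proper Helly} circular-arc graph rather than just a claw-free graph with a clique circle (which by the $S_3$ example need not be one) --- exactly the point at which Lemma~\ref{lem:isolation} and Proposition~\ref{lem:untouched-maxcliques} are invoked.
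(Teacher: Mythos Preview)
Your argument is essentially the paper's own: you trace the algorithm of Figure~\ref{fig:alg-phcag} step by step and invoke the same auxiliary results (Corollaries~\ref{cor:phcag-small-fis} and~\ref{cor:phcag-uig}, Propositions~\ref{lem:hole-dominating}, \ref{lem:untouched-maxcliques}, \ref{lem:dp}, and Lemma~\ref{lem:isolation}) at the same points, with somewhat more detail on the two directions of step~6 than the paper gives. The one item the paper makes explicit that you leave implicit is that, since the algorithm does not know $E_+$, it cannot simply ``renumber'' $H$ to make $K_1$ untouched but must try each of $O(k)$ candidate starting edges on $H$ (at most $8k$ hole-vertices have a touched neighbour, so among any $16k{+}2$ consecutive hole-vertices some adjacent pair is safe) and run Proposition~\ref{lem:dp} once per choice---this extra $O(k)$ factor is absorbed into $k^{O(k)}(n+m)$.
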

\begin{proof}
  If $G$ contains an induced subgraph in $\{ \overline{C_3^*}, C_4^*, S_3, \overline{S_3}, W_4, W_5, \overline{C_6}\}$, then any solution must contain at least one missing edge in this subgraph.  After step~1, $G$ is $\{ \overline{C_3^*}, C_4^*, S_3, \overline{S_3}, W_4, W_5, \overline{C_6}\}$-free, and it is already a proper Helly circular-arc graph if it is connected (Corollary~\ref{cor:phcag-small-fis}).  Since every component of $G$ is a proper circular-arc graph, but $G$ is not a proper interval graph, there must be a hole in $G$ (Corollary~\ref{cor:phcag-uig}).  
  Step~4 is obviously correct: a solution contains a chord of $H$ or an edge between $H$ and other components of $G$.  Now that $H$ is longer than $16 k + 16$, there must be a hole after adding at most $k$ edges.  Every vertex in $V(G)\setminus V(G_0)$ must be made adjacent to this hole.
  The correctness of step~5 follows from Proposition~\ref{lem:hole-dominating}, and the correctness of step~6 follows from the definition of $\beta$.  
  
  We now analyze the running time of the algorithm.  The algorithm makes at most $O(6^k)$ recursive calls in step~1, all with parameter value $k - 1$.  We can check whether $G$ is a proper Helly circular-arc graph in linear time~\cite{cao-17-nhcag}.  When it is not, we can find a hole in linear time.
  At most $k^{O(k)}$ recursive calls are made in step~4, all with parameter value $k - 1$.
  In step~6, we use Proposition~\ref{lem:dp} to calculate $\beta([1..r], 1, |H|)$, for which we need to guess a numbering of $H$ and $\mathcal{K}$.
  At most $8 k$ vertices on $H$ have touched neighbors.  Thus, by trying $16 k$ vertices on $H$, we can find one pair of adjacent vertices as $v_1$ and $v_{|H|}$.
  The total time is thus $k^{O(k)}  (n + m)$.
\end{proof}

A natural question is on the complexity of the proper Helly circular-arc completion problem on $\{ \overline{C_3^*}, C_4^*, S_3, \overline{S_3}, W_4, W_5, \overline{C_6}\}$-free graphs.  A related and (maybe) simpler question is when all but one of the components of the input graph are proper interval graphs, while the other is a proper Helly circular-arc graph.  Let us remark that a polynomial-time algorithm for this later problem implies a polynomial-time algorithm for the proper interval completion problem on proper Helly circular-arc graphs.

However, it is not obvious how to solve the proper Helly circular-arc editing problem, which allows $k_{1}$ vertex deletions, $k_{2}$ edge deletions, and $k_3$ edge additions.
We do not have the optimal substructures as we have seen in the completion problem: we may need to cut some vertices off $G_0$ by deleting edges, and attach them to other parts of $G_0$; see Figure~\ref{fig:bad-example} for an example.  Let us mention that the idea does solve the modification problem to proper Helly circular-arc graphs that allows both vertex deletions and edge additions.

\begin{figure}[h]
  \centering
  \begin{tikzpicture}[every node/.style={empty vertex}, every path/.style={thick}]
    \foreach \i in {1, ..., 12} 
    \draw ({30*\i-15}:2) -- ({30*\i+15}:2);
    \foreach[count =\i] \c in {3, 3, 3, 3, 99, 2, 4, 99, 3, 3, 3, 3} 
    \node[empty vertex] (u\i) at ({-75 - 30*\i}:2) {$\c$};

    \begin{scope}
      \foreach \i in {7, ..., 12} 
      \draw ({30*\i-30}:3) -- ({30*\i}:3);
      \foreach \i in {6, ..., 12} 
      \node[empty vertex] (x\i) at ({30*\i}:3) {$10$};
    \end{scope}

    \node[fill=gray!70] (x) at (0, -1.25) {$1$};
    \draw[] (u1) -- (x) -- (u12);
  \end{tikzpicture}
  \caption{The input graph $G$ comprises two components.  Each node denotes a set of true twins with the designated number of vertices, and a line between two nodes indicates all the edges between the two sets of vertices.  We can make $G$ a proper Helly circular-arc graph by deleting $6$ edges and adding $426$ edges (i.e., $k_1 = 0$, $k_2 = 6$, and $k_3 = 426$).   We need to remove all the six edges incident to the shadowed vertex and add edges between it and the six vertices (in two sets of sizes $2$ and $4$ respectively) at the top.}
  \label{fig:bad-example}
\end{figure}

\begin{conjecture}
  The proper Helly circular-arc editing problem is fixed-parameter tractable.
\end{conjecture}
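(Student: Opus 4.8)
\emph{Towards a proof.} The plan is to run the recipe behind Figure~\ref{fig:alg-phcag} and the dynamic program of Proposition~\ref{lem:dp}, but to push the latter hard enough to absorb the relocation phenomenon of Figure~\ref{fig:bad-example}. Write $k = k_1 + k_2 + k_3$. \textbf{First}, branch to destroy every induced subgraph on at most six vertices in $\{\overline{C_3^*}, C_4^*, S_3, \overline{S_3}, W_4, W_5, \overline{C_6}\}$: for each such $F$ we try deleting a vertex of $F$ (charged to $k_1$), deleting an edge of $F$ (charged to $k_2$), or adding a non-edge of $F$ (charged to $k_3$); this is a branching rule of constant degree and depth at most $k$. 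Afterwards $G$ is $\{\overline{C_3^*}, C_4^*, S_3, \overline{S_3}, W_4, W_5, \overline{C_6}\}$-free, so each of its components is a proper Helly circular-arc graph by Corollary~\ref{cor:phcag-small-fis}; if $G$ is connected we return ``yes,'' so assume not. We then call the \FPT{} algorithm of Cao~\cite{cao-17-unit-interval-editing} for modification toward proper interval graphs; if it reaches a proper interval graph within the budget we are done. In the remaining case the target $G^{*}$, if it exists, is not a proper interval graph, hence is connected and contains a hole $H^{*}$ (Corollary~\ref{cor:phcag-uig}).

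\textbf{Second}, set up the structure. Call a vertex \emph{touched} if it is deleted or incident to an added or deleted edge; there are at most $k_1 + 2(k_2 + k_3) = O(k)$ of them, and an untouched vertex keeps its neighbourhood from $G$ to $G^{*}$. A component of $G$ containing no touched vertex would survive as a component of the connected graph $G^{*}$; since $G$ is disconnected, every component of $G$ is touched, so $G$ has $O(k)$ components, each a proper Helly circular-arc graph. As the authors remark, if $k_2 = 0$ one can already finish by the idea of Proposition~\ref{lem:dp} (isolate a long hole in the big component and attach the other components by a dynamic program over its clique circle). The new difficulty is edge deletion: as in Figure~\ref{fig:bad-example}, a set of vertices may be torn off one part of the big component by only $O(k_2)$ deletions and re-embedded elsewhere at the cost of many additions, and such a relocated set need not be contiguous in the clique circle. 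The fact that makes this controllable is an ``isolator'' principle extending Proposition~\ref{lem:untouched-maxcliques} and Lemma~\ref{lem:isolation}: an untouched maximal clique of $G$ is still a maximal clique of $G^{*}$, and cutting the clique circle of $G^{*}$ at two disjoint, mutually non-adjacent untouched cliques splits the modification into two independent halves.

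\textbf{Third}, the algorithm for the hard case ($n$ large; otherwise brute force is $f(k)$-time). Each relocated vertex is incident to a deleted edge, so there are at most $2 k_2$ \emph{guests}. We guess a \emph{skeleton}: $O(k)$ untouched maximal cliques of $G^{*}$ that cut its clique circle into $O(k)$ intervals, chosen so that every ``touched region'' and the target neighbourhood of every guest lies inside a single interval. Within an interval the target is proper-interval-like; it hosts one contiguous block of the original big component together with a guessed subset of the guests and of the small components. For each such configuration we invoke a proper interval editing subroutine, constrained to keep the two bounding cliques untouched and as the end cliques — exactly the constrained quantity $\beta$ of Proposition~\ref{lem:dp}, now augmented with a bounded set of guest vertices to be inserted. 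A dynamic program over the skeleton with $k^{O(k)}$ states then assembles a global optimum, yielding an $f(k)\cdot (n+m)$-time algorithm.

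\textbf{The main obstacle} is making the third step rigorous, and it decomposes into two lemmas. (1) A structural characterisation of \emph{cheap insertion}: given two fixed end cliques and a bounded multiset of guests with prescribed target degrees, decide whether a proper interval host admits the insertion within a given edge budget; this should follow from the clique-path characterisation of proper interval graphs together with a consecutive-ones argument on the guests' target neighbourhoods, but the interaction of several guests inside one interval is delicate, since inserting one guest changes the clique path seen by the next. (2) A bound on the maximal cliques of $G^{*}$ that must be excluded from the skeleton: a single touched vertex or guest may lie in many maximal cliques, so one must show that, up to combinatorial type, all but $O(k)$ of them are interchangeable isolators — morally because in a proper Helly circular-arc graph only $O(1)$ cliques of such a run can contain a second touched vertex or guest. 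Once (1) and (2) are in place, the dynamic program of Proposition~\ref{lem:dp} extends essentially verbatim, and combining it with the branching of the first two steps settles the conjecture.
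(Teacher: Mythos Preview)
The paper does \emph{not} prove this statement: it is posed explicitly as an open conjecture, immediately after the authors explain why the techniques that worked for the completion and deletion problems break down (the relocation phenomenon of Figure~\ref{fig:bad-example}). So there is no ``paper's own proof'' to compare against; your proposal is an attack on an open problem, and should be read as such.

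Your outline correctly isolates the obstacle the authors identify, and the first two steps (branching on small forbidden subgraphs; calling the proper interval editing algorithm; bounding the number of components by $O(k)$) are sound and match the paper's toolkit. The genuine gap is in the third step, and it is not only in the two lemmas you flag. Your key quantitative claim, ``each relocated vertex is incident to a deleted edge, so there are at most $2k_2$ guests,'' is not justified and appears to be false under the natural reading of ``guest.'' A single deleted edge can sever an arbitrarily large contiguous chunk of $G_0$; every vertex of that chunk is relocated in $G^{*}$, yet only the two endpoints of the cut are incident to a deleted edge. Your decomposition assumes that each skeleton interval hosts ``one contiguous block of the original big component together with a guessed subset of the guests,'' but after $k_2$ deletions the big component may fall into up to $k_2+1$ large pieces whose cyclic order in $G^{*}$ is a priori unconstrained, so you cannot treat all but $2k_2$ of their vertices as staying put. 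One can try to rescue the argument by treating each severed piece as a new ``small component'' (there are at most $O(k)$ of them, and each, being an induced subgraph of a proper Helly circular-arc graph, is proper interval), but then the pieces are no longer of bounded size, and the dynamic program of Proposition~\ref{lem:dp} needs a genuinely new idea to place large pieces rather than $O(k)$-sized ones. This is precisely the ``loss of optimal substructure'' the authors point to, and your sketch does not yet get past it.
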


\section{Proper circular-arc vertex deletion}

Since we will use properties of both the graph $G$ and its complement,
we beg the reader's attentiveness in reading this section.
There are algorithms for the vertex deletion problem toward both proper interval graphs and bipartite permutation graphs.  We are henceforth focused on graphs that are both connected and co-connected.  As usual, we can get rid of small forbidden induced subgraphs easily.

\begin{definition}
We say that a graph is \emph{reduced} if it is both connected and co-connected, and it contains no $C_3^*$, $C_5^*$, $\overline{C_4^*}$, $C_6$, $S_3$, $\overline{S_3^*}$, $F_{1}$, $F_{2}$, $F_{3}$, or $F_{4}$.  
\end{definition}

Similar to Proposition~\ref{prop:find-small-fis}, one can make an algorithm for finding one of the subgraphs list above when the input graph is not reduced.  
We omit details since it does not improve our main algorithm.
The next lemma is complement and similar to Lemma~\ref{lem:small-subgraphs-free}.

\begin{lemma}\label{lem:cycles-in-reduced}
 A reduced graph is $\{C_\ell^* \mid \ell \ge 7\}$-free.
\end{lemma}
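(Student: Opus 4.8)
**Proof plan for Lemma (reduced graph is $\{C_\ell^* \mid \ell \ge 7\}$-free).**

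The plan is to mirror the argument of Lemma~\ref{lem:small-subgraphs-free}, but now working in the complement so that the ``large hole plus isolated vertex'' obstruction is forbidden directly by the finitely many small graphs excluded in the definition of a reduced graph. Suppose for contradiction that $G$ is reduced but contains an induced $C_\ell^*$ for some $\ell \ge 7$; call the hole $C = v_1 v_2 \cdots v_\ell$ and let $v$ be the vertex nonadjacent to every vertex of $C$. First I would pass to the complement: in $\overline G$, the vertex $v$ is adjacent to all of $v_1,\dots,v_\ell$, and the set $\{v_1,\dots,v_\ell\}$ induces $\overline{C_\ell}$, the complement of a cycle of length $\ell\ge 7$. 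So $\overline G[\{v,v_1,\dots,v_\ell\}]$ is a ``universal vertex over $\overline{C_\ell}$.'' The goal is to extract from this configuration one of the small forbidden graphs listed for reduced graphs — in the complement these are $\overline{C_3^*}$ (the claw), $\overline{C_5^*}$, $C_4^*$, $\overline{C_6}$, $\overline{S_3}$, $S_3^*$, $\overline{F_1},\overline{F_2},\overline{F_3},\overline{F_4}$, plus, crucially, whatever the complements of $C_{2\ell+2}$ and $C_{2\ell-1}^*$ give us — but in fact it will be cleaner to argue inside $\overline G$ using that $G$ is reduced to exclude $\overline{C_6}=\overline{C_{2\cdot3}}$ etc.

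The cleaner route, and the one I would actually carry out, is to stay in $G$ and exploit connectivity of $G$ together with Lemma~\ref{lem:small-subgraphs-free}. Since $C_3^*$ (the claw) and $\overline{S_3}$ are both forbidden in a reduced graph and $G$ is connected, Lemma~\ref{lem:small-subgraphs-free} already gives that $G$ is $\{C_\ell^*\mid \ell\ge 5\}$-free. That immediately finishes the proof: a reduced graph excludes the claw and $\overline{S_3}$ by definition, so Lemma~\ref{lem:small-subgraphs-free} applies verbatim and yields $\{C_\ell^*\mid\ell\ge 5\}$-freeness, which contains $\{C_\ell^*\mid\ell\ge 7\}$-freeness as a special case. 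In other words, the statement of this lemma is weaker than what Lemma~\ref{lem:small-subgraphs-free} already provides, and the ``$\ge 7$'' is stated only because $C_5^*$ and $C_6$ (hence $C_6^*$ via the claw) are anyway on the explicit forbidden list for reduced graphs. So the proof is essentially a one-line appeal: ``$G$ contains neither a claw ($=\overline{C_3^*}$) nor $\overline{S_3}$, and is connected, so by Lemma~\ref{lem:small-subgraphs-free} it is $\{C_\ell^*\mid \ell\ge 5\}$-free, a fortiori $\{C_\ell^*\mid\ell\ge 7\}$-free.''

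If instead the intended proof is supposed to be ``complement-dual'' and self-contained (to match the remark that the lemma is ``complement and similar to Lemma~\ref{lem:small-subgraphs-free}''), then I would repeat the shortest-path argument on $\overline G$: since $G$ is co-connected, $\overline G$ is connected; take an induced $\overline{C_\ell}$ with $\ell\ge 7$ together with a vertex $v$ that is \emph{universal} to it, find a shortest path in $\overline G$ from some vertex missing an edge of the configuration, and analyze the last three vertices $x,y,z$ of that path exactly as in Lemma~\ref{lem:small-subgraphs-free}, each case producing a claw or an $\overline{S_3}$ \emph{in $\overline G$}, i.e., an $\overline{C_3^*}$ or $S_3$ \emph{in $G$}. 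The one point requiring a little care — and the only place I expect friction — is that $\overline{C_\ell}$ for $\ell\ge 7$ is denser than a hole, so the local casework on how $y$ attaches to $\{v_1,\dots,v_\ell\}$ in $\overline G$ must be set up on the \emph{non}-neighbors of $y$ on the cycle, which is dual to what Lemma~\ref{lem:small-subgraphs-free} does; but the combinatorics is identical after the dualization, and for $\ell\ge 7$ there are always two ``non-consecutive non-edges'' to exploit. Given that Lemma~\ref{lem:small-subgraphs-free} is already available, however, I would simply invoke it and keep the proof to a single sentence.
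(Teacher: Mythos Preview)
Your primary argument has a fatal misreading of the definition. A reduced graph is required to be $C_3^*$-free (triangle plus an isolated vertex) and $S_3$-free, \emph{not} $\overline{C_3^*}$-free (claw-free) and $\overline{S_3}$-free. Lemma~\ref{lem:small-subgraphs-free} needs the latter two hypotheses, so it simply does not apply to reduced graphs. Indeed a reduced graph may very well contain claws; the forbidden list in the definition is, roughly, the complement of the list used in Lemma~\ref{lem:small-subgraphs-free}, which is precisely why the paper calls Lemma~\ref{lem:cycles-in-reduced} ``complement and similar'' rather than ``a corollary.''

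Your fallback ``complement-dual'' sketch also cannot be made to work as stated. If $G$ contains $C_\ell^*$ then $\overline G$ contains $\overline{C_\ell}$ together with a universal vertex; but for $\ell\ge 6$ the independence number of $\overline{C_\ell}$ is~$2$ (equivalently, the clique number of $C_\ell$ is~$2$), so $\overline{C_\ell^*}$ is claw-free. Hence you will never extract a claw in $\overline G$ from this configuration, and there is no shortest-path argument to run because the extra vertex is already universal. The paper instead argues directly in $R$: starting from a hole $H$ of length $\ge 7$ and a vertex nonadjacent to $H$, it takes $x$ adjacent to $H$ and $y$ adjacent to $x$ but not to $H$, first uses $C_3^*$-freeness and $\overline{C_4^*}$-freeness to show $x$ has no two consecutive neighbors on $H$, and then, with $x v_3\in E(R)$, derives an $F_2$ (if $x v_5\in E(R)$) or an $F_1$ (otherwise) on $\{v_1,\dots,v_5,x,y\}$ or $\{v_2,\dots,v_6,x,y\}$. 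This is genuinely different casework from Lemma~\ref{lem:small-subgraphs-free}, using $C_3^*$, $\overline{C_4^*}$, $F_1$, $F_2$ rather than the claw and $\overline{S_3}$.
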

\begin{proof}
  Let $R$ be a reduced graph.
  Suppose for contradiction that there exist a hole $H$ of length at least seven and a vertex that is nonadjacent to any vertex on $H$.  Since ${R}$ is connected, we can find a vertex $x$ adjacent to $H$, and another vertex $y$ that is adjacent to $x$ but not to $R$.
  Let $H = v_1 v_2 \cdots v_{|H|}$. 
  We argue first that $x$ cannot be adjacent to two consecutive vertices on $H$.
  Suppose for contradiction that $x$ is adjacent both $v_1$ and $v_2$.
  Then $v_1,v_2$, and $x$ form a triangle.  Since $R$ is free of $C^*_3$, it follows that $x$ is adjacent to both $v_4$ and $v_5$.  But then $v_1,v_2, v_4, v_5$, and $x$ induce a $\overline{C_4^*}$.

  Assume without loss of generality that $x$ is adjacent to $v_3$.  Note that $x$ is adjacent to neither $v_2$ nor $v_4$.
  If $x$ is adjacent to $v_5$ as well, then $x$ is nonadjacent to $v_6$, and $\{v_2, \ldots, v_6, x, y\}$ induces an $F_2$.  By symmetry, $x$ cannot be adjacent to $v_1$ either.  But then  $\{v_1, \ldots, v_5, x, y\}$ induces an $F_1$.
\end{proof}
By definition, a reduced graph is $C_3^*$- and $C_5^*$-free.  

\begin{corollary}\label{lem:reduced-no-odd-cycle-star}
  A reduced graph is $\{C_{2 \ell + 1}^* \mid \ell \ge 1\}$-free.
\end{corollary}

By Theorem~\ref{thm:proper-forbidden-induced-subgraphs}, the definition of reduced graphs, and Lemma~\ref{lem:small-subgraphs-free},
a reduced graph is the complement of a proper circular-arc graph if and only if it does not contain any even hole of length at least eight.
We will therefore be focused on long even holes.
The following statement characterizes reduced graphs that contain long even holes.

\begin{lemma}\label{lem:reduced}
  If a reduced graph contains an even hole of length at least eight, then it is bipartite.
\end{lemma}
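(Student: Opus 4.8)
The plan is to argue by contradiction. Suppose $R$ is reduced, contains an even hole $H=h_1h_2\cdots h_{2t}$ with $2t\ge 8$, yet is not bipartite. Then $R$ has an odd cycle, so take a shortest one $C$; being shortest, $C$ is chordless, i.e.\ either a triangle or an odd hole. By Corollary~\ref{lem:reduced-no-odd-cycle-star}, $R$ is $\{C_{2\ell+1}^*\mid \ell\ge 1\}$-free, so no vertex of $R$ can be nonadjacent to all of $C$; that is, $C$ dominates $R$. Symmetrically, since $|H|\ge 8\ge 7$, Lemma~\ref{lem:cycles-in-reduced} shows that $H$ dominates $R$. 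Hence every vertex of $R$ has a neighbor on $H$ and a neighbor on $C$; in particular every $h_i$ has a neighbor on $C$ and every vertex of $C$ has a neighbor on $H$.

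The first real step is to rule out triangles. For a vertex $v$, write $D=N(v)\cap V(H)$; since $\overline{C_4^*}$ is exactly a vertex joined to two vertex-disjoint nonadjacent edges and $R$ is $\overline{C_4^*}$-free, the edges of $H$ with both endpoints in $D$ contain no two nonadjacent edges, so at most one maximal run of consecutive vertices of $D$ has length $\ge 2$, that run has length at most four, and $D\ne V(H)$. Now suppose $T=\{a,b,c\}$ is a triangle. Since $H$ is chordless, $T$ meets $H$ in at most two vertices; and since $T$ dominates $H$ with $|H|\ge 8>6$, a counting argument gives either $|T\cap V(H)|=2$ or a vertex $v\in T\setminus V(H)$ with at least three neighbors on $H$. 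If $|T\cap V(H)|=2$, say $T=\{h_1,h_2,v\}$, then $v$ must be adjacent to both $h_4$ and $h_5$ in order to dominate them, and $\{v,h_1,h_2,h_4,h_5\}$ induces $\overline{C_4^*}$, a contradiction. Otherwise $v$ has at least three neighbors on $H$; no two of them are consecutive on $H$ (two consecutive ones would recreate a triangle meeting $H$ in an edge, i.e.\ the previous case), so they are pairwise nonadjacent, and taking two of them with a long $H$-arc between them, together with the neighbor of $v$ supplied by $T$ off $H$, produces a long claw $F_1$, a $C_6$, a forbidden $C_\ell^*$, or a chordless odd cycle shorter than $C$ --- in every instance a contradiction. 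Hence $R$ is triangle-free, $C$ is an odd hole of length at least five, and, besides triangle-freeness, the only remaining constraints from $R$ being reduced (with Lemma~\ref{lem:cycles-in-reduced}) are that $R$ is connected, co-connected, and free of $C_5^*$, $C_6$, $F_1$, and all $C_\ell^*$ with $\ell\ge 7$.

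It remains to show that such a graph cannot contain both a long even hole $H$ and a dominating odd hole $C$ --- which, after the reduction above, we have. In the triangle-free setting the neighbors of any vertex on either hole form a stable set of that hole, adjacent vertices of $C$ have no common neighbor on $H$, and, combining this with $C_6$-, $F_1$-, and $\{C_\ell^*\mid \ell\ge 7\}$-freeness, a vertex cannot have two neighbors on a dominating hole that lie far apart along it: the $H$-path joining them closes up, through the vertex, into a chordless cycle that is a $C_6$, a forbidden $C_\ell^*$, or an odd cycle shorter than $C$. Consequently each vertex of $C$ meets only a short arc of $H$, and tracing once around $C$ the chosen $H$-neighbors form a closed cyclic sequence in which consecutive members stay at bounded, controlled distance on $H$; a parity count then pits the odd length of $C$ against the even length of $H$, giving the contradiction. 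Making this last count rigorous, and dispatching the degenerate configurations (various cases where the assignment around $C$ is nearly constant, or where $C$ is short), is where essentially all the remaining work lies and is, I expect, the main obstacle; it is also where the forbidden subgraphs $F_1$, $C_6$, and the $C_\ell^*$ family finally get used.
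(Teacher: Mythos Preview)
Your opening is sound: taking a shortest odd cycle $C$, which is therefore induced, and using Corollary~\ref{lem:reduced-no-odd-cycle-star} and Lemma~\ref{lem:cycles-in-reduced} to get that both $C$ and $H$ are dominating, is the right start.

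The triangle step has a real gap. In your ``otherwise'' branch you assert that a vertex $v\in T$ with three pairwise nonconsecutive $H$-neighbors, together with another vertex of $T$, produces $F_1$, $C_6$, a $C_\ell^*$, or an odd cycle shorter than $C$. The last option is vacuous here, since $|C|=3$. The others do not follow as stated. Take $|H|=8$ and $N(v)\cap V(H)=\{h_1,h_4,h_6\}$: the three chordless cycles through $v$ along $H$ have lengths $5,4,5$, none forbidden on its own, and the extra triangle vertex $w$ is adjacent to $v$ and hence to every such cycle, so it cannot serve as the isolated vertex of a $C_\ell^*$. One can still finish (for instance $h_7$ plus the $5$-cycle $v h_1 h_2 h_3 h_4$ is a $C_5^*$), but that is exactly the case analysis you have not done, and the picture only proliferates as $|H|$ and the number of $H$-neighbors grow. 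The paper proves the weaker claim that no vertex \emph{of $B$} lies in a triangle, which reduces to $|T\cap V(B)|\in\{1,2\}$ and is dispatched cleanly with $\overline{C_4^*}$ and $F_4$; full triangle-freeness is obtained only at the very end, after the odd-hole analysis.

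The larger gap is the ``parity count''. You propose to assign each $c_i\in C$ a neighbor $\phi(c_i)\in H$ and compare parities of $|C|$ and $|H|$ via the distances along $H$. But nothing forces the sequence $\phi(c_1),\ldots,\phi(c_{|C|})$ to wind around $H$, or to move monotonically; it may oscillate or sit on a short arc. Triangle-freeness only gives that $\phi(c_i)$ and $\phi(c_{i+1})$ are distinct and nonadjacent, not that they are close; your ``bounded, controlled distance'' is exactly the hard part and you never say which forbidden subgraph enforces it, nor what invariant is being summed. The paper's proof is structurally different and does not use a parity count: it shows that any odd hole meeting $B$ can be taken to meet it in a sub-path, bounds that sub-path's length, and then in a careful chord-decomposition of a composite cycle exhibits, case by case, an explicit $F_1$, $F_2$, $F_3$, or $C_\ell^*$. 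That case analysis is where $F_1$, $F_2$, $F_3$ are actually used and is the substance of the lemma; your outline does not supply a substitute for it.
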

\begin{proof}
  Let $v_1 v_2 \cdots v_{\ell}$ be an even hole with $\ell \ge 8$ of a reduced graph $R$, and denote it by $B$.  We prove the lemma with a sequence of claims.

  Our first claim is that no vertex on $B$ participates in any triangle.
  If there is a triangle involving two vertices from $B$, say, $u v_{1} v_{2}$, then either it forms a $C_3^*$ with $v_4$ or $V_5$, when $u$ is not adjacent to one of $v_{4}$ and $v_{5}$, or $\{u, v_{1}, v_{2}, v_{4}, v_{5}\}$ induces a $\overline{C_{4}^*}$.
  Since neither is possible,  we can conclude that there is no triangle involving two vertices of $B$.  Now suppose that there is a triangle involving a single vertex from $B$, say, $u w v_{1}$ with $u, w\in V(R)\setminus V(B)$.   Since $R$ is $C_3^*$-free, for $i = 3, 4$, there is at least one edge between $v_{i}$ and $u, w$.  By the argument above, neither $u$ nor $w$ can be adjacent to both $v_{3}$ and $v_{4}$.  Thus,  $\{u, w, v_{3}, v_{4}\}$ induces a $C_{4}$, and so does the set $\{u, w, v_{6}, v_{7}\}$.  But then $\{u, w, v_{1}, v_{3}, v_{4}, v_{6}, v_{7}\}$ induces an $F_{4}$, a contradiction to that $R$ is reduced. 
  
  The second claim is that if some odd hole of $R$ intersects $B$, then there exists an odd hole of $R$ whose intersection with $B$ is a nonempty sub-path of $B$.
  Let $C$ have the minimum number of vertices out of $B$ among all the odd holes that intersect $B$ (i.e., $|V(C)\setminus V(B)|$ is minimized).
  We take a nontrivial sub-path $P_{C}$ of $C$ such that only the two ends of this path are on $B$.
  Without loss of generality, assume the ends of $P_{C}$ are $v_{1}$ and $v_{p}$.  Let $P_{B}$ be the sub-path $v_{1}v_{2} \cdots v_{p}$ of $B$.  If $P_{B}$ and $P_{C}$ have different parities (one even and the other odd), then they together form an odd induced cycle.  By the first claim, this induced cycle must be an odd hole, and we are done.
  Otherwise, we replace $P_{C}$ by $P_{B}$ in $C$ to produce another odd closed walk.  Therefore, we always end with an odd closed walk that avoids at least one vertex in $V(C)\setminus V(B)$ and whose vertices are all from $V(B)\cup V(C)$.
  From this odd closed walk we can find an odd induced cycle $C'$.
  Since $V(C')\subseteq V(C)\cup V(B)$ and $C'\ne C$, the cycle $C'$ must intersect $B$.  On the other hand, since $B$ is an even hole, $C'$ intersects $C$ as well.
  Since $P_C$ is a nontrivial path, it has at least one inner vertex, which is not on $C'$.  It follows that $|V(C')\setminus V(B)| < |V(C)\setminus V(B)|$, contradicting the selection of $C$.

  The third claim is that if $V(C)\cap V(B)$ is consecutive for an odd induced cycle $C$, then $|V(C)\cap V(B)| \le 4$.  Assume without loss of generality that the intersection of $B$ and $C$ is the sub-path $v_1 v_2 \cdots v_{c}$.  We consider the odd cycle formed by the other \stpath{v_1}{v_c} on $B$ and the other \stpath{v_1}{v_c} on $C$, which consists of  vertices $V(B)\cup V(C) \setminus \{v_2, v_3, \ldots, v_{c-1}\}$.
  From this odd cycle we can retrieve an odd induced cycle $C'$.  If $c > 4$, then $v_{3}$ is not adjacent to any vertex on $C'$.  But then $C'$ and $v_{3}$ form a $C_{|C'|}^*$, contradicting Corollary~\ref{lem:reduced-no-odd-cycle-star}.

  The fourth claim is that no odd induced cycle can intersect $B$.
  Suppose for contradiction that there are such odd induced cycles.  
 We take an odd induced cycle $C$ such that (1) $V(B)\cap V(C)$ induces a path; (2) $|V(B)\cap V(C)|$ is maximized among all cycles satisfying (1); and (3) $C$ is the shortest among all cycles satisfying both (1) and (2).  The existence of such a cycle is ensured by the second claim.
 Let $q = |C|$ and $c = |V(B)\cap V(C)|$.
 Assume without loss of generality that the intersection of $B$ and $C$ is the sub-path $v_1 v_2 \cdots v_{c}$, and we may number the vertices on $C$ as $u_1, u_2, \ldots, u_{q}$ such that $u_i = v_i$ for $i = 1, \ldots, c$.  We consider the cycle $A$ formed by the two \stpath{v_1}{v_c}s $v_c v_{c +1} \cdots v_{\ell} v_1$ and $u_c u_{c +1} \cdots u_{q} u_1$.
The length of $A$, which visits $V(B)\cup V(C) \setminus \{v_2, v_3, \ldots, v_{c-1}\}$,  is $\ell + q - 2 (c - 1)$,  hence odd.
We use chords $v_i u_j$ of $A$ to decompose $A$ into a sequence of induced cycles as follows.
We start with $i = c+2$ and $j$ being the smallest index such that $v_{c+2}u_j\in E(G)$.  The first cycle is $v_{c+2}v_{c+1} u_{c} u_{c+1}\cdots u_{j}$.
See Figure~\ref{fig:lem-reduced}(a) for an illustration.

\begin{enumerate}[{Case} 1,]
\item the first cycle $v_{c+2}v_{c+1} u_{c} u_{c+1}\cdots u_j$ is odd; let it be $C'$.  Note that ${j}$ has the same parity as $c$.  

  First, suppose that $C'$ is induced.  The cycle $C'$ has three common vertices with $B$.  Thus, $c \ge 3$ by the second condition in the selection of $C$.  Note that $v_1$ is not on $C'$.  Since $v_2$ has a neighbor on $C'$ (Corollary~\ref{lem:reduced-no-odd-cycle-star}), the neighbor has to be $v_3$.  Thus, $c = 3$.
  Since $\ell \ge 8$, the only possible neighbor of $v_{1}$ on $C'$ is $u_q$.  Thus, $j = q$.
  If $q = 5$, then $v_{6}$ must be adjacent to $u_{4}$, (the vertex $v_{6}$ is adjacent to some vertex on $C$, while $u_{5}$ cannot form a triangle with $v_{5}$ and $v_{6}$,) and the odd hole $u_{4} v_{3} v_{4} v_{5} v_{6}$ forms a $C_{5}^*$ with $v_{1}$.  Now that $q > 5$, the set $\{v_{1}, v_{2}, v_{3}, v_{4}, v_{5}, u_{4}, u_{5}\}$ induces a long claw.

  In the rest, $C'$ is not induced.  By the definition of $j$, there is no chord of $C'$ incident to $v_{c+2}$.  In other words, all the chords of $C'$ are incident to $v_{c+1}$.  They partition $C'$ into induced cycles, which are all strictly shorter than $C$.  Since $C'$ is odd, at least one of these induced cycles is odd; let it be $C''$.  The intersection of $C''$ and $B$ is one of $\{v_{c+1}\}$, $\{v_{c+1}, v_{c}\}$, and $\{v_{c+1}, v_{c+2}\}$, hence consecutive.
  By the selection of $C$, we must have $|V(B)\cap V(C'')| < |V(B)\cap V(C)| = c$.  But then $v_{c - 2}$ or $v_{c - 1}$ (when $c \ge 3$) or $v_{c - 1}$ (when $c = 2$) has no neighbor on $C''$, contradicting Corollary~\ref{lem:reduced-no-odd-cycle-star}.    
\end{enumerate}

Now that the first cycle is even, we continue as follows.  We find the next neighbor of $v_i$ in $u_{j+1}, \ldots, u_{q}$, if one exists.  Otherwise, we find the edge between $v_{i+1}, \ldots, v_{\ell}$ and $u_{j}, \ldots, u_{q}$.  We check the induced cycle formed by the two chords and the two sub-paths on $B$ and $C$.  We stop if it is odd.  Otherwise we update $i$ and $j$, and continue the search until we reach $v_\ell$.

\begin{enumerate}[{Case} 1,]
  \setcounter{enumi}{1}
\item the cycle $v_{i} \cdots v_{i'} u_{j'} u_{j' - 1} \cdots u_{j}$ is odd, where $v_i u_j$ and $v_{i'}u_{j'}$ are the previous chord and new chord, respectively. Let the cycle be denoted as $C'$.  Note that neither $v_{1}$ nor $v_{c}$ is on $C'$.
  If $c\ge 3$, then $v_{2}$ has no neighbor on $C'$ (because both $B$ and $C$ are induced), contradicting Corollary~\ref{lem:reduced-no-odd-cycle-star}.  
  By the second condition in the selection of $C$, we have $|V(C')\cap V(B)| \le |V(C)\cap V(B)| = c \le 2$.
  Thus, $i' \le i + 1$.
  By Corollary~\ref{lem:reduced-no-odd-cycle-star}, the vertex $v_c$ has at least one neighbor on $C'$.
  \begin{itemize}
  \item Subcase 2.1, $c = 2$.
    See Figure~\ref{fig:lem-reduced}(b).   Note that $v_2$ has three neighbors on $B$ and $C$, of which neither $v_1$ nor $v_3$ can be on $C'$.  (In particular, $v_3$ is on the first cycle, which contains $v_c$, i.e., $v_2$.)  Thus, $j = c + 1 = 3$, which further implies $i = 4$.  Note that $u_q$ cannot be adjacent to $v_4$: if $u_q v_4 \in E(R)$, then $u_q v_3 \not\in E(R)$ because of the first claim, and then $u_q v_1 v_2 v_3 v_4$ is an induced $C_5$, contradicting the second condition in the selection of $C$.
    Again, $v_1$ has a neighbor on $C'$ by Corollary~\ref{lem:reduced-no-odd-cycle-star}, and it has to be $u_q$.  It follows that $i' = 5$, and then by the selection of the cycle, there is no edge between $v_4, v_5$ and any vertex in $\{u_4, u_5, \ldots, u_{q-1}\}$.
    We argue that $v_3$ is not adjacent to $u_4$ by contradiction.  If $q = 5$, then the cycle $v_1 v_2 v_3 u_4 u_5$ intersects $B$ with three vertices; it cannot be induced by the selection of $C$, but then there is a triangle intersecting $B$, contradicting the first claim.   If $q \ge 7$, then $\{v_{1}, v_{2}, v_{3}, v_{4}, v_{5}, u_{4}, u_{5}\}$ induces an $F_{1}$.  Therefore, $v_3 u_4\not\in E(R)$ and the set $\{v_{1}, v_{2}, v_{3}, v_{4}, v_{5}, u_{3}, u_{4}\}$ induces an $F_{2}$.
  \item Subcase 2.2, $c = 1$. 
    See Figure~\ref{fig:lem-reduced}(c).
    Then $|V(C')\cap V(B)| = |V(C)\cap V(B)| = 1$.
    By the third condition in the selection of $C$, we have $|C'| = |C| = q$.  Thus, $j = c+1 = 2$ and $j' = q$ and $i = c+2 = 3$.
    None of $v_2$, $v_4$, and $v_\ell$ can be adjacent to $u_2$ by the first claim.
    We now argue that $u_3$ cannot be adjacent to $v_2$, $v_4$, or $v_\ell$.  Suppose that $u_3 v_2\in E(R)$.  Then $v_1 v_2 u_3 u_4 \cdots u_q$ is an odd cycle, from which we can find an induced odd cycle.  But this contradicts the selection of $C$: either it shares two vertices with $B$, or it is shorter than $C$.  It is likewise when $u_3 v_4\in E(R)$ or $u_3 v_\ell\in E(R)$: in particular, we consider the odd cycles $v_3 v_4 u_3 u_4 \cdots u_q$, and respectively, $v_1 v_\ell u_3 u_4 \cdots u_q$.   The set $\{v_{\ell}, v_{1}, v_{2}, v_{3}, v_{4}, u_{2}, u_{3}\}$ induces an $F_{2}$.
  \end{itemize}
\end{enumerate}

If neither of the previous two cases is true, then the last cycle must be odd.  All the small cycles use edges from $A$ and the chords of $A$.
  Moreover, each edge on $A$ appears in precisely one of the cycles, and if a chord is used, then it appears in precisely two of the cycles.  Therefore, the sum of the lengths of all these cycles has the same parity as $A$, hence odd.

\begin{enumerate}[{Case} 1,]
  \setcounter{enumi}{2}
\item the last cycle $C'$, which contains $v_{1}$, is odd.  Since $R$ is $C_\ell^*$-free (By Lemma~\ref{lem:cycles-in-reduced} and note that $\ell > 7$), the vertex $u_{q-1}$ is adjacent to some $v_{i}$ with $c <i \le \ell$.
  The vertex $u_q$ may or may not have other neighbors (other than $v_1$) on $B$. 
  Thus, $C'$ is either $v_{1} u_{q} v_{i} v_{i+1} \cdots v_{\ell}$ or $v_{1} u_{q} u_{q - 1} v_{i} v_{i+1} \cdots v_{\ell}$.
  By the selection of $C$, we have $|V(C')\cap V(B)| = \ell - i + 2 \le c$.
  In particular, $c \le 3$, then $i \ge \ell - 1$.
  Therefore, $v_{3}$ is not adjacent to any vertex on $C'$, contradicting Corollary~\ref{lem:reduced-no-odd-cycle-star}.
\end{enumerate}
Thus, no vertex on $B$ is contained in any odd induced cycle.

Finally, we show that $R$ does not contain any odd cycle at all.  Let $C$ be an odd induced cycle that is disjoint from $B$.  First assume $C = x_{1} x_{2} x_{3}$.  Since $R$ is $C_{3}^*$-free, every vertex on $B$ is adjacent to at least one vertex on $C$.  Without loss of generality that $x_{1}$ has the largest number of neighbors on $B$, and let their indices be $i_{1}$, $i_{2}$, $\ldots$, $i_{p}$, sorted increasingly.  Note that all of them have the same parity by the fourth claim.
Since $R$ is $C_{6}$-free, $i_{j+1} - i_{j}$ is either two or at least six for all $j = 1, \ldots, p-1$.  Since $p\ge \ell/3$, there must be three consecutive ones with differences two; assume without loss of generality, that they are $v_{1}$, $v_{3}$, and $v_{5}$.  If $\ell = 8$, then $x_{1} v_{5}v_{6} \cdots v_{8}v_{1}$ has length six, and hence $x_{1}$ must be adjacent to  $v_{7}$ as well; otherwise, $x_{1}$ has another neighbor on $B$ because $p\ge \ell/3 \ge 4$.  This neighbor forms an $F_{3}$ with $\{x_{1}, v_{1}, v_{2}, \ldots, v_{5}\}$.
Now that $|C| \ge 5$; let it be $x_{1} x_{2} \cdots x_{|C|}$.  We take $v_{i}\in N(x_{1})\cap V(B)$ and $v_{j}\in N(x_{3})\cap V(B)$.  
The sub-path $v_{i} v_{i+1} \cdots v_{j}$ forms an odd cycle with either $x_{1} x_{2} x_{3}$ or $x_{3} x_{4}\cdots x_{|C|} x_{1}$.  From this odd cycle we can retrieve an induced odd cycle, which has to intersect both $B$ and $C$.  This contradicts the fourth claim, and concludes the proof of this lemma.
\end{proof}

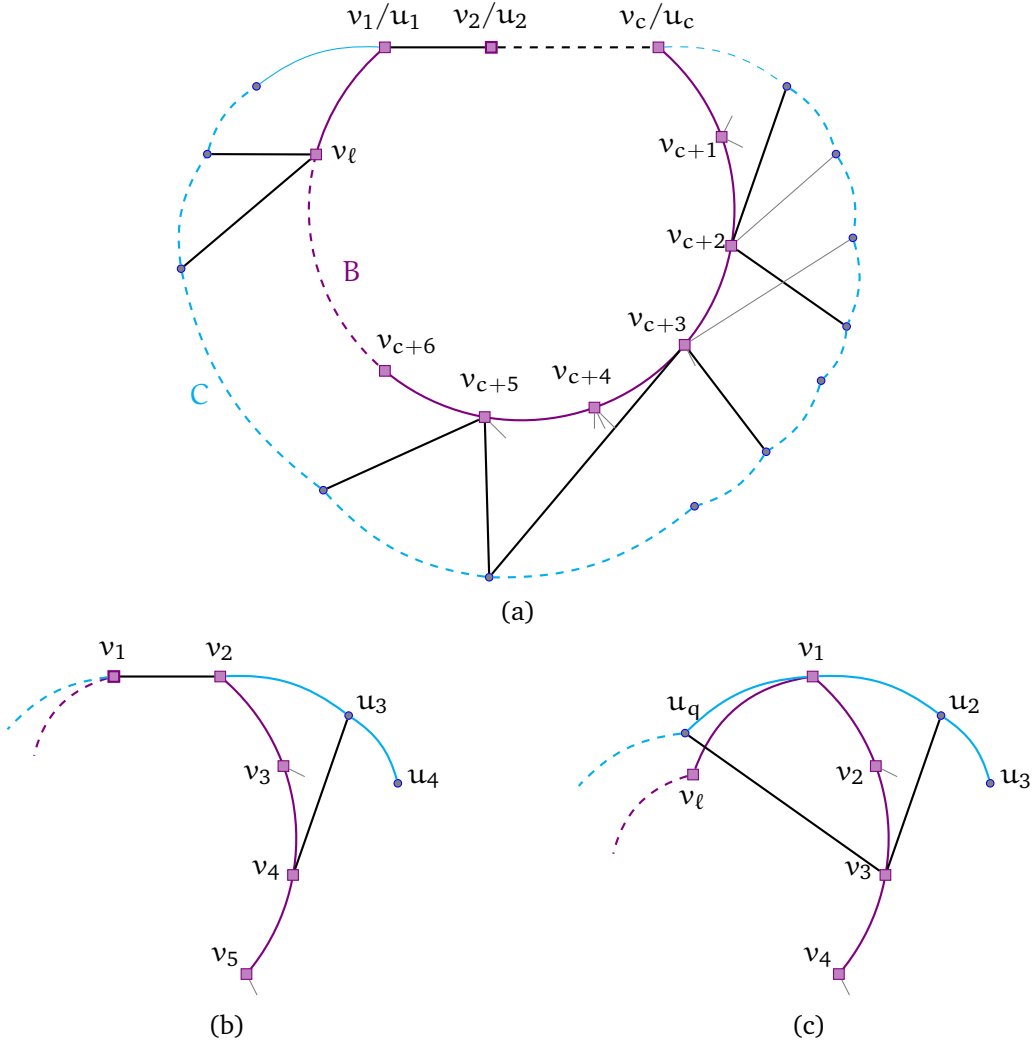
\begin{figure}[h!]
  \centering 
  \begin{subfigure}[b]{0.8\linewidth}
    \centering
    \begin{tikzpicture}[scale=.7]
      \pgfmathsetmacro{\radius}{4.}
      \draw[violet, thick] ({230}:\radius) arc ({230}:{410}:\radius);
      \draw[violet, thick] (130:\radius) arc ({130}:{165}:\radius);
      \draw[violet, thick, dashed] ({165}:\radius) arc ({165}:{230}:\radius);    
      \foreach \i in {1, ..., 6} {
        \node[uvertex] (vc+\i) at ({50 - \i*30}:\radius) {};
        \node at ({50 - \i*30}:{\radius-.65}) {$v_{c+\i}$};
      }

      \node[violet] at (200:{\radius-.6}) {$B$};
      \node[cyan] at (210:{\radius+3}) {$C$};
      \foreach \a/\l in {50/c, 130/1} {
        \node[uvertex, "$v_{\l}/u_{\l}$"] (v\l) at (\a:\radius) {};
      }
      \node[uvertex, "$v_{\ell}$" right] (vl) at (165:\radius) {};
      \draw[thick]  (v1) -- ++(2, 0) node [uvertex, "$v_2/u_2$"] (v2) {};
      \draw[thick, dashed]  (v2) -- (vc);

      \foreach[count =\i] \a/\r/\l in {25/5.5/u_{\beta'(1)}, 10/6/, -5/6.25/, -20/6.5/u_{\beta''(1)}, -30/6.5/, -45/6.5/u_{\beta'(2)}, -60/6.5/, -95/7/u_{\beta''(2)}, -125/6.5/u_{\beta'(3)}, -170/6.5/u_{\beta'(.)}, 170/6/u_{\beta''(.)}, 155/5.5/u_q} {
        \node[filled vertex] (u\i) at (\a:\r) {};
      }
      \foreach[count =\from] \to in {2,3, ..., 12}
      \draw [thick, cyan, dashed, bend left=20] (u\from) edge (u\to);
      \draw [cyan, bend left=20]  (vc) edge[dashed] (u1) (u12) edge[bend left] (v1);

      \draw[thick] (vc+2) --  (u1) (vc+2) --  (u4) (vc+3) --  (u6) (vc+3) --  (u8) (vc+5) --  (u8) (vc+5) --  (u9) (vl) --  (u10) (vl) --  (u11);
      \draw[gray] (vc+1) -- ++(.4, -.2) (vc+1) -- ++(.2, .4) (vc+3) -- ++(.2, -.4) (vc+4) -- ++(.4, -.4) (vc+4) -- ++(0, -.4) (vc+3) --  (u3) (vc+2) --  (u2) (vc+5) -- ++(.4, -.4) (vc+4) -- ++(.2, -.4);
    \end{tikzpicture}
    \caption{}
  \end{subfigure}

  \begin{subfigure}[b]{0.45\linewidth}
    \centering
    \begin{tikzpicture}[scale=.7]
      \pgfmathsetmacro{\radius}{4.}
      \draw[violet, thick] ({320}:\radius) arc ({320}:{410}:\radius);
      \foreach[count = \j from 3] \i in {1, ..., 3} {
        \node[uvertex] (vc+\i) at ({50 - \i*30}:\radius) {};
        \node at ({50 - \i*30}:{\radius-.5}) {$v_{\j}$};
      }
      \foreach \a/\l in {50/2} {
        \node[uvertex, "$v_{\l}$"] (v\l) at (\a:\radius) {};
      }
      \draw[thick]  (vc) -- ++(-2, 0) node[uvertex, "$v_1$"] (v1) {};
      \draw[violet, dashed, thick] (v1) edge[bend right] ++(-1.5, -1.5);

      \foreach[count =\i] \a/\r/\l in {25/5.5/u_3, 10/6/u_4} {
        \node[filled vertex] (u\i) at (\a:\r) {};
        \node at (\a:{\r+.5}) {$\l$};
      }
      \foreach[count =\from] \to in {2}
      \draw [thick, cyan, bend left=20] (u\from) edge (u\to);
      \draw [thick, cyan, bend left=20]  (vc) edge (u1) (v1) edge[bend right, dashed] ++(-2, -1);

      \draw[thick] (vc+2) --  (u1);
      \draw[gray] (vc+1) -- ++(.4, -.2) (vc+3) -- ++(.2, -.4);
    \end{tikzpicture}
    \caption{}
  \end{subfigure}
  \quad
  \begin{subfigure}[b]{0.45\linewidth}
    \centering
    \begin{tikzpicture}[scale=.7]
      \pgfmathsetmacro{\radius}{4.}
      \draw[violet, thick] ({320}:\radius) arc ({320}:{410}:\radius);
      \foreach[count = \j from 2] \i in {1, ..., 3} {
        \node[uvertex] (vc+\i) at ({50 - \i*30}:\radius) {};
        \node at ({50 - \i*30}:{\radius-.5}) {$v_{\j}$};
      }
      \foreach \a/\l in {50/1} {
        \node[uvertex, "$v_{\l}$"] (v\l) at (\a:\radius) {};
      }
      \node[uvertex, "$v_{\ell}$" below] (vl) at (75:1.25) {};
      \draw[violet, thick] (v1) edge[bend right] (vl)  (vl) edge[bend right, dashed] ++(-1.5, -1.5) ;

      \foreach[count =\i] \a/\r/\l in {25/5.5/u_2, 10/6/u_3} {
        \node[filled vertex] (u\i) at (\a:\r) {};
        \node at (\a:{\r+.5}) {$\l$};
      }
      \foreach[count =\from] \to in {2}
      \draw [thick, cyan, bend left=20] (u\from) edge (u\to);
      \node[filled vertex, "$u_{q}$"] (uq) at (85:2) {};    
      \draw [thick, cyan, bend left=20]  (vc) edge (u1) (v1) edge[bend right] (uq) (uq) edge[bend right, dashed] ++(-2, -1);

      \draw[thick] (vc+2) --  (u1) (vc+2) --  (uq);
      \draw[gray] (vc+1) -- ++(.4, -.2) (vc+3) -- ++(.2, -.4);
    \end{tikzpicture}
    \caption{}
  \end{subfigure}

  \caption{Illustration for the proof of Lemma~\ref{lem:reduced}.}
  \label{fig:lem-reduced}
\end{figure}

The following is immediate from Theorem~\ref{thm:proper-forbidden-induced-subgraphs}, Lemma~\ref{lem:small-subgraphs-free}, and Lemma~\ref{lem:reduced}.
\begin{corollary}\label{cor:reduced-non-bipartite}
  If a reduced graph $R$ is not bipartite, then $R$ is the complement of a proper circular-arc graph.  
\end{corollary}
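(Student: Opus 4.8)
The plan is to pass to the complement and check Tucker's criterion directly. Write $R$ for the reduced graph; I want to show that $\overline R$ is a proper circular-arc graph, so by Theorem~\ref{thm:proper-forbidden-induced-subgraphs} it is enough to rule out every forbidden induced subgraph from $\overline R$, equivalently (since being an induced subgraph commutes with complementation) to show that $R$ contains none of the complements of those graphs, namely
\[
  \overline{S_3^*},\quad \overline{C_\ell^*}\ (\ell\ge 4),\quad S_3,\quad F_1,\ F_2,\ F_3,\ F_4,\quad C_{2\ell+2}\ (\ell\ge 2),\quad C_{2\ell-1}^*\ (\ell\ge 2).
\]
Most of these are excluded outright by the definition of a reduced graph: $\overline{S_3^*}$, $S_3$, $F_1, F_2, F_3, F_4$, and $\overline{C_4^*}$ (the $\ell=4$ case of $\overline{C_\ell^*}$) are all on the list a reduced graph must avoid, and so are $C_3^*$, $C_5^*$, and $C_6$, which take care of the first member or two of each of the remaining families.

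Three infinite families then remain. For $\overline{C_\ell^*}$ with $\ell\ge 5$, I will apply Lemma~\ref{lem:small-subgraphs-free} not to $R$ but to $\overline R$: the graph $\overline R$ is connected because $R$ is co-connected, it contains no complement of $C_3^*$ (a claw) because $R$ contains no $C_3^*$, and it contains no complement of $S_3$ because $R$ contains no $S_3$; the lemma then yields that $\overline R$ is $\{C_\ell^*\mid \ell\ge 5\}$-free, which is precisely the statement that $R$ is $\{\overline{C_\ell^*}\mid \ell\ge 5\}$-free. For the odd members $C_7^*, C_9^*, \dots$ of the family $\{C_{2\ell-1}^*\mid \ell\ge 2\}$ I will quote Lemma~\ref{lem:cycles-in-reduced} (equivalently Corollary~\ref{lem:reduced-no-odd-cycle-star}), which already says that a reduced graph is $\{C_\ell^*\mid \ell\ge 7\}$-free.

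The remaining family $C_{2\ell+2}$ with $\ell\ge 3$, that is, the even holes of length at least eight, is the only place the hypothesis that $R$ is not bipartite enters, and here I will simply invoke Lemma~\ref{lem:reduced}: a reduced graph containing an even hole of length at least eight is bipartite, so a reduced non-bipartite graph has no such hole. Assembling the pieces, $\overline R$ avoids every graph in Tucker's list and is therefore a proper circular-arc graph. I do not expect a genuine obstacle here, since all the substantive content already lives in Lemmas~\ref{lem:small-subgraphs-free}, \ref{lem:cycles-in-reduced}, and \ref{lem:reduced}; the one point that requires care is the bookkeeping of complements, in particular that the claw and the complement of $S_3$ forbidden in Lemma~\ref{lem:small-subgraphs-free} correspond, after complementation, exactly to the $C_3^*$ and $S_3$ that reducedness already forbids.
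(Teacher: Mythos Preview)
Your argument is correct and follows essentially the same route as the paper: pass to the complement, verify Tucker's list (Theorem~\ref{thm:proper-forbidden-induced-subgraphs}) item by item using the definition of reduced, Lemma~\ref{lem:small-subgraphs-free} applied to $\overline R$ for the $\overline{C_\ell^*}$ family, and Lemma~\ref{lem:reduced} together with non-bipartiteness for the long even holes. The only cosmetic difference is that you explicitly invoke Lemma~\ref{lem:cycles-in-reduced}/Corollary~\ref{lem:reduced-no-odd-cycle-star} to dispose of $C_{2\ell-1}^*$ with $\ell\ge 4$, whereas the paper's one-line justification leaves this implicit; your bookkeeping is in fact slightly more careful on that point.
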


We are now ready to present the algorithm for the proper circular-arc vertex deletion problem.
Let $(G, k)$ be an instance to the problem, and we may assume without loss of generality that $G$ does not contain any  small forbidden induced subgraphs on at most seven vertices.
If there is a set $V_-$ of $k$ vertices such that $G - V_-$ is a proper interval graph or $\overline G - V_-$ is a bipartite permutation graph, then we are done.
Hence, we will look for a solution $V_-$ such that $G - V_-$ is both connected and co-connected (Propositions~\ref{lem:connected} and~\ref{lem:co-bipartitie}).

For this purpose we may assume that $G$ itself is connected and co-connected: if $G$ is not connected, we can work on the components of $G$ one by one, and it is similar for $\overline G$.  Thus, $\overline G$ is a reduced graph.
If $\overline G$ is not bipartite, then $G$ is already a proper circular-arc graph (Corollary~\ref{cor:reduced-non-bipartite}).  Otherwise, $\overline G$ is bipartite, of which any induced subgraph of it is bipartite.  In other words, if there exists a solution $V_-$, then $\overline G - V_-$ is a bipartite permutation graph, and this has been handled already.

\begin{figure}[h!]
  \centering 
  \begin{tikzpicture}
    \path (0,0) node[text width=.8\textwidth, inner sep=10pt] (a) {
      \begin{minipage}[t!]{\textwidth}

        \begin{tabbing}
          AAA\=Aaa\=aaa\=Aaa\=MMMMMAAAAAAAAAAAA\=A \kill
          1.\> \textbf{if} $(G, k)$ is a yes-instance of proper interval vertex deletion \textbf{then}
          \\
          \>\>\textbf{return} ``yes'';
          \\
          2.\> \textbf{if} $(\overline G, k)$ is a yes-instance of bipartite permutation vertex deletion \textbf{then}
          \\
          \>\>\textbf{return} ``yes'';
          \\
          \> \codecomment{We're looking for a solution $V_-$ with both $G - V_-$ and $\overline G - V_-$ connected.}
          \\
          3.\> \textbf{branch} on deleting vertices of small forbidden induced subgraphs;
          \\
          4.\> \textbf{guess} a component $C$ of $G$;
          \\
          5.\> $k \leftarrow k - |V(G)\setminus V(C)|$;
          \\
          6.\> \textbf{if} $\overline C$ is bipartite \textbf{then return} ``no'';
          \\
          7.\> $C'\leftarrow$ a maximum non-bipartite component of $\overline C$;
          \\
          8.\> \textbf{if} $|V(C)\setminus V(C')| \le k$ \textbf{then return} ``yes'';
          \\
          \> \textbf{else return} ``no.''
        \end{tabbing}
      \end{minipage}
    };
    \draw[draw=gray!60] (a.north west) -- (a.north east) (a.south west) -- (a.south east);
  \end{tikzpicture}
  \caption{The outline of the algorithm for proper circular-arc vertex deletion.}
  \label{fig:pcag-deletion}
\end{figure}

Again, it is quite straightforward to turn this algorithm into an approximation algorithm, and the proof is similar to that of Theorem~\ref{thm:alg-approximation}.
\begin{theorem}\label{main-theorem}
  There is a $9^{k}\cdot n^{O(1)}$-time parameterized algorithm for the proper circular-arc vertex deletion problem, and 
  a polynomial-time approximation algorithm of approximation ratio~$9$ for the minimization version of the proper circular-arc vertex deletion problem.
\end{theorem}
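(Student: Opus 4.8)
The plan is to prove correctness and bound the running time of the algorithm sketched in Figure~\ref{fig:pcag-deletion}, and then to extract the approximation algorithm exactly as Theorem~\ref{thm:alg-approximation} was extracted from its parameterized counterpart. Two external results are used as black boxes: proper interval vertex deletion is solvable in $6^{k}\cdot n^{O(1)}$ time and admits a polynomial-time approximation of ratio $6$ \cite{cao-17-unit-interval-editing}, and bipartite permutation vertex deletion is solvable in $9^{k}\cdot n^{O(1)}$ time and admits a polynomial-time approximation of ratio $9$ \cite{bozyk-20-bipartite-permutation}. All the structural input we need is already in place: the finite list of forbidden induced subgraphs of order at most seven to branch on, Propositions~\ref{lem:connected} and~\ref{lem:co-bipartitie}, Theorem~\ref{thm:co-bipartitie}, and, decisively, Corollary~\ref{cor:reduced-non-bipartite}.

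I would first settle steps~1--2. Soundness: a witness of step~1 makes $G-V_-$ a proper interval graph, hence a proper circular-arc graph; a witness $V_-$ of step~2 makes $\overline G-V_-=\overline{G-V_-}$ a bipartite permutation graph, so $G-V_-$ is a proper circular-arc graph by Theorem~\ref{thm:co-bipartitie}. Completeness in the remaining direction: if a solution $V_-$ leaves $G-V_-$ disconnected, then $G-V_-$ is a proper interval graph by Proposition~\ref{lem:connected} and step~1 accepts; if it leaves $\overline{G-V_-}$ disconnected, then $\overline{G-V_-}$ is a bipartite permutation graph by Proposition~\ref{lem:co-bipartitie} and step~2 accepts. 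Hence once control reaches step~3 we may search only for solutions keeping $G-V_-$ both connected and co-connected, and this property persists in all instances produced below: deleting a vertex from $G$ only deletes a vertex from the final graph, so accepting some leaf via the ``proper interval'' or ``co-bipartite-permutation'' route there would contradict the failure of step~1 or step~2 at the root. Step~3 then branches on every forbidden induced subgraph $F$ of order at most seven, recursing into the at most seven instances $(G-v,k-1)$ for $v\in V(F)$; this is correct since every solution meets $V(F)$, and afterwards $G$ is free of all graphs named in the definition of a reduced graph. A recursive call with negative budget returns ``no.''

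Next comes the reduction to a connected and co-connected instance and the final decision (steps~4--8). Since a solution keeps $G-V_-$ connected, if $G$ is disconnected then $V_-\supseteq V(G)\setminus V(C)$ for exactly one component $C$; we guess $C$ among at most $n$ components, decrease the budget by $|V(G)\setminus V(C)|$, and continue with the connected graph $C$. If $\overline C$ is bipartite we return ``no'': every induced subgraph of $\overline C$ is bipartite, so by Theorem~\ref{thm:co-bipartitie} the only way to make a further induced subgraph of $C$ a proper circular-arc graph is to make its complement a bipartite permutation graph --- a possibility already excluded by the failure of step~2. Otherwise $\overline C$ is non-bipartite, and here the bookkeeping becomes painless: $\overline C$ is free of $C_3^{*}$ and $C_5^{*}$, and free of $C_\ell^{*}$ for $\ell\ge 7$ by the argument of Lemma~\ref{lem:cycles-in-reduced} (which uses only connectivity and freeness from $C_3^{*},\overline{C_4^{*}},F_1,F_2$), so if $\overline C$ had two components with one of them non-bipartite, a shortest odd cycle in that component together with a vertex of another component would induce some $C_{2\ell+1}^{*}$, which is impossible; hence a non-bipartite $\overline C$ is connected. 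Then $C$ is connected and co-connected and free of the forbidden subgraphs of order at most seven, i.e.\ $\overline C$ is reduced and non-bipartite, so $C$ is already a proper circular-arc graph by Corollary~\ref{cor:reduced-non-bipartite}, and $V(G)\setminus V(C)$ is a solution; we answer ``yes'' iff the remaining budget is nonnegative. The running time is $6^{k}n^{O(1)}$ for step~1, $9^{k}n^{O(1)}$ for step~2 (both invoked only at the root), $7^{k}n^{O(1)}$ for the branching, and a polynomial factor for the component guess and the final tests; the total is $9^{k}\cdot n^{O(1)}$.

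For the approximation version I would mimic the proof of Theorem~\ref{thm:alg-approximation}: build $X$ by repeatedly detecting a forbidden induced subgraph of order at most seven in $G-X$ and adding all its vertices to $X$ until none remains; the detected subgraphs are pairwise disjoint and each is met by an optimal solution $V^{*}_-$, so $|X|\le 7\,|V^{*}_-\cap X|$. Then compute three deletion sets for $G-X$: a $6$-approximate proper interval deletion set $Y_1$; the deletion set $Y_2$ obtained from a $9$-approximate bipartite permutation deletion set of $\overline{G-X}$; and $Y_3=V(G-X)\setminus V(C^{*})$, where $C^{*}$ is a largest component of $G-X$ that is already a proper circular-arc graph (ignored if none exists). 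Output $X$ together with the smallest of these. Writing $Z^{*}=V^{*}_-\setminus X$ and applying Propositions~\ref{lem:connected} and~\ref{lem:co-bipartitie} to $(G-X)-Z^{*}$, a short case analysis on the form of that reduced subgraph --- a proper interval graph; a connected graph whose complement is a bipartite permutation graph; or a reduced graph with non-bipartite complement (to which Corollary~\ref{cor:reduced-non-bipartite} applies and which, by the odd-hole-star argument above, forces the enclosing component to be proper circular-arc) --- shows that one of $Y_1,Y_2,Y_3$ has size at most $\max\{6,9,1\}\cdot\mathrm{opt}(G-X)=9\,|Z^{*}|$. Hence the output has size at most $7\,|V^{*}_-\cap X|+9\,|V^{*}_-\setminus X|\le 9\,|V^{*}_-|$, in polynomial time. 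The only genuine obstacle I anticipate is precisely this bookkeeping --- checking that the component/co-component reductions are sound and exhaustive and that the budget decrements never discard a valid solution --- since all the substantive mathematics is already encapsulated in Corollary~\ref{cor:reduced-non-bipartite}, and behind it, Lemma~\ref{lem:reduced}.
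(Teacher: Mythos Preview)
Your argument contains a genuine gap at the point where you claim that a non-bipartite $\overline{C}$ must be connected. You justify this by asserting that $\overline{C}$ is $C_\ell^{*}$-free for $\ell\ge 7$ ``by the argument of Lemma~\ref{lem:cycles-in-reduced},'' but that argument uses connectivity of the ambient graph to find a path from the isolated vertex to the hole --- precisely the hypothesis you are trying to establish. If the odd hole lies in one component of $\overline{C}$ and the extra vertex in another, the argument never gets started. Concretely, take $C=\overline{C_7^{*}}$, so $\overline{C}=C_7\cup K_1$. One checks that $\overline{C}$ contains none of $C_3^{*},C_5^{*},\overline{C_4^{*}},C_6,S_3,\overline{S_3^{*}},F_1,\ldots,F_4$, so $C$ survives step~3; $C$ is connected; $\overline{C}$ is non-bipartite but disconnected. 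Your algorithm then declares $C$ already a proper circular-arc graph and answers ``yes'' at budget $k=0$ --- yet $C=\overline{C_7^{*}}$ is itself a forbidden induced subgraph, so the correct answer is ``no.'' The same examples $\overline{C_{2\ell-1}^{*}}$ for $\ell\ge 4$ break the analysis of your approximation algorithm in the third case, where you claim the enclosing component of $G-X$ is proper circular-arc.

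The paper avoids this by iterating once more: after picking a component $C$ of $G$, it passes to a maximum \emph{non-bipartite} component $C'$ of $\overline{C}$ (steps~7--8), and applies Corollary~\ref{cor:reduced-non-bipartite} to $C'$ rather than to $\overline{C}$. Here $C'$ is connected by construction, and a short argument shows it is also co-connected (if a non-bipartite $C'$ were a join, it would contain a triangle, which together with any vertex of another component of $\overline{C}$ would yield a $C_3^{*}$); hence $C'$ is genuinely reduced and Corollary~\ref{cor:reduced-non-bipartite} applies. Your overall plan is the same as the paper's, and everything through step~6 is correct; the fix is simply to retain this second ``co-component'' step rather than arguing it away.
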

\begin{proof}
  We use the parameterized algorithm described in Figure~\ref{fig:pcag-deletion}.
  The correctness of steps~1 and 2 follows from that all proper interval graphs and all the complements of bipartitie permutation graphs are proper circular-arc graphs.
  Step~3 is correct because a solution of $G$ needs to contain a vertex from every forbidden induced subgraph of $G$.
  Since we have passed step~1, there does not exist a solution $V_-$ such that $G - V_-$ is a proper interval graph.  Thus, the resulting graph is connected by Proposition~\ref{lem:connected}, and all its vertices are from a single component $C$ of $G$.  This justifies steps~4 and 5.
  On the other hand, since we have passed step~2, there does not exist a solution $V_-$ such that $\overline G - V_-$ is a bipartite permutation graph.
  If $\overline C$ is bipartite, then any induced subgraph of $\overline C$ is bipartite.
  Thus, the correctness of step~6 follows from Theorem~\ref{thm:co-bipartitie}.
  For the same reason, it suffices to consider each non-bipartite component of $\overline C$ (Proposition~\ref{lem:co-bipartitie}).  Note that each component of $\overline C$ is a reduced graph.  If it is non-bipartite, then its complement is already a proper circular-arc graph (Corollary~\ref{cor:reduced-non-bipartite}).  Thus, it suffices to take the non-bipartite component of $\overline C$ of the maximum order, and remove all the others.  This justifies steps~7 and 8.
 
  We now analyze the running time of the algorithm.
  The first two steps take time $O(6^{k}\cdot (m + n))$~\cite{cao-17-unit-interval-editing} and $9^{k}\cdot n^{O(1)}$~\cite{bozyk-20-bipartite-permutation}, respectively.
  At most $7$ recursive calls are made in step~3, all with parameter value $k - 1$.  Thus, it takes $7^{k}\cdot n^{O(1)}$ time.  There are at most $n$ components in $G$, and the rest (steps 5--8) can be done in $O(m + n)$ time.
  Thus, the algorithm can be done in $9^{k}\cdot n^{O(1)}$ time.

  We now modify the parameterized algorithm to produce an approximation algorithm.   
  We check every set of at most seven vertices of $G$.  If it induces a forbidden induced subgraph, then we remove all its vertices from $G$ and add them to the solution.  We continue this process until no such a set can be found, and let $G'$ be the resulting graph.
  We construct three solutions, and return the smallest of them, with ties broken arbitrarily.  The first is by applying the $6$-approximation algorithm for proper interval vertex deletion~\cite{cao-17-unit-interval-editing} to $G'$, and the second is by applying  the $9$-approximation algorithm for bipartite permutation vertex deletion \cite{bozyk-20-bipartite-permutation} to $\overline{G'}$.  Let them be denoted by $V_-^1$ and $V_-^2$, respectively.  
  For each component, we take the largest non-bipartite component of its complement.   We put all the other vertices into a set.  We take the smallest of these sets as the third solution $V_-^3$.  We return the smallest of the three solutions.

  Let $V^*_-$ be an optimal solution to $G$.  Clearly, $|V(G)\setminus V(G')| \le 7  |V^*_-\setminus V(G')|$.  If $G' - V^*_-$ is a proper interval graph, then $|V_-^1\cap V(G')| \le 6 \mathrm{opt}(G') \le 6  |V^*_-\cap V(G')|$; likewise, if $\overline G - S$ is a bipartite permutation graph, then $|V_-^2\cap V(G')| \le 9  |V^*_-\cap V(G')|$.  If neither is true, then $|V_-^3\cap V(G')| = \mathrm{opt}(G') \le  |V^*_-\cap V(G')|$.  Therefore, the size of the solution returned by the algorithm is never greater than $9|V^*_-|$.
\end{proof}

\appendix
\section*{Appendix: Hardness of edge modifications to proper (Helly) circular-arc graphs}
We note that the NP-completeness of the edge modification problems toward proper (Helly) circular-arc graphs follows from Proposition~\ref{lem:connected}.

\setcounter{section}{1}
\begin{lemma}\label{lem:hardness}
  Both the proper Helly circular-arc edge deletion problem and the proper Helly circular-arc completion problem are NP-complete.
\end{lemma}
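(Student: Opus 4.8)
The plan is to obtain both hardness results by polynomial reductions from problems that are already known to be NP-complete, namely proper interval edge deletion and proper interval completion. Membership in NP is immediate in both cases: a set of at most $k$ edges can be guessed, and $G$ with those edges deleted (resp.\ added) can be tested for the proper Helly circular-arc property in polynomial time~\cite{cao-17-nhcag}. So the work is entirely in the NP-hardness reductions, and the engine driving them is Proposition~\ref{lem:connected}: a disconnected proper circular-arc graph --- hence a disconnected proper Helly circular-arc graph --- is a proper interval graph. Combined with the trivial converse observation that every proper interval graph is a proper Helly circular-arc graph (a proper interval representation is proper, and every interval representation is Helly), this is exactly the equivalence on disconnected inputs that I will exploit.

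For edge deletion this is essentially all that is needed. Given an instance $(G,k)$ of proper interval edge deletion, I would output $(G^{+},k)$, where $G^{+}$ is $G$ together with one extra isolated vertex. Deleting edges can never connect a disconnected graph, so any $E_-\subseteq E(G^{+})=E(G)$ with $G^{+}-E_-$ proper Helly circular-arc leaves $G^{+}-E_-$ disconnected, hence (Proposition~\ref{lem:connected}) a proper interval graph, hence $G-E_-$ is a proper interval graph; conversely, if $G-E_-$ is proper interval then $G^{+}-E_-$ is the disjoint union of a proper interval graph and an isolated vertex, hence a proper Helly circular-arc graph. The two instances are therefore equivalent.

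For completion the same scheme needs a guard, because adding edges \emph{can} merge components; this is the one point that requires care. Given $(G,k)$ for proper interval completion, I would output $(G',k)$ with $G'$ equal to $G$ plus $k+1$ isolated vertices. Then $G'$ has at least $k+2$ connected components, and since each added edge destroys at most one component, any completion using at most $k$ edges still leaves $G'$ disconnected; Proposition~\ref{lem:connected} then forces $G'+E_+$ to be a proper interval graph, and restricting $E_+$ to pairs inside $V(G)$ yields a proper interval completion of $G$ with at most $k$ edges. The converse is again trivial, since adding a proper interval completion of $G$ to $G'$ produces a disjoint union of a proper interval graph with isolated vertices, which is a proper interval graph and hence a proper Helly circular-arc graph. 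Hence $(G,k)$ and $(G',k)$ are equivalent.

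Both reductions are plainly polynomial, so Lemma~\ref{lem:hardness} follows. I would also remark that the identical two reductions establish NP-completeness of the proper circular-arc edge deletion and proper circular-arc completion problems, using Proposition~\ref{lem:connected} in its stated form and the fact that proper interval graphs are proper circular-arc graphs; these are the ``proper (Helly)'' problems alluded to in the appendix. The only genuinely delicate point in the whole argument is choosing the padding size in the completion reduction large enough to keep the padded graph disconnected against any $k$-edge completion; everything else is bookkeeping.
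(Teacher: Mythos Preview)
Your proof is correct and follows the same approach as the paper: pad the input with isolated vertices so that the target graph is forced to remain disconnected, then invoke Proposition~\ref{lem:connected} to collapse the problem to proper interval edge deletion (resp.\ completion). The paper uses $|V(G)|^2$ isolated vertices uniformly for both reductions rather than your $1$ and $k+1$, which has the minor advantage that the padding size is independent of $k$ (so one need not appeal to the harmless assumption $k<\binom{n}{2}$ for polynomiality), but the idea is identical.
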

\begin{proof}
  Both problems are clearly in NP, and we now show their NP-hardness.
  Let $G$ be a graph, and let $G'$ be the graph obtained from $G$ by adding $|V(G)|^2$ isolated vertices.
  Since $G'$ is not connected, we cannot make it connected by deleting edges.
  By Proposition~\ref{lem:connected}, $(G, k)$ is a yes-instance of  the proper interval edge deletion problem if and only if $(G', k)$ is a yes-instance of the proper Helly circular-arc edge deletion problem.
We now consider the completion problem.  On the one hand, to make $G'$ connected we need to add at least $|V(G)|^2$ edges.  On the other hand, we can always make $G'$ a proper interval graph by adding all possible edges to make each component a clique.  Thus, $(G, k)$ is a yes-instance of  the proper interval completion problem if and only if $(G', k)$ is a yes-instance of the proper Helly circular-arc completion problem.
  The statement then follows from NP-hardness of the corresponding problems toward proper interval graphs \cite{yannakakis-81-minimum-fill-in, goldberg-95-interval-edge-deletion}.  
\end{proof}

Let us remark that the reductions used in the proof of Lemma~\ref{lem:hardness} also apply to edge modification problems toward proper circular-arc graphs.


\end{document}